\numberwithin{equation}{section}
\numberwithin{figure}{section}
\newcommand\trailstab{trail-stable}
\newcommand\trailstabity{trail stability}
\DeclareMathAlphabet{\mathcalligra}{T1}{calligra}{m}{n}
\theoremstyle{definition}
\newtheorem{theorem}{Theorem}
\newtheorem{corollary}{Corollary}
\newtheorem{definition}{Definition} %I think this should be numbered with thms
\newtheorem{example}{Example}
\newtheorem{lemma}{Lemma} %I think this should be numbered with thms
\newtheorem{proposition}{Proposition}
\newtheorem{obs}[theorem]{Observation}  %NEW
\DeclareMathAlphabet{\mathcalligra}{T1}{calligra}{m}{n}
\date{}
\def\wsm{\widetilde \setminus} 
\def\wcup{\,\widetilde\cup\,} 
\def\wcap{\,\widetilde\cap\,}
\long\def\ch#1{#1}            %changes are not indicated
\long\def\info#1{}            %changes are not indicated
\begin{document}
\singlespacing

%\title{Trading networks with bilateral contracts}
\title {Trading Networks with Bilateral Contracts\thanks{{
%\scriptsize This paper unites three independent works \textit{Trading networks with bilateral contracts} circulated by Teytelboym in early 2014 (a much earlier draft that appeared in his doctoral thesis in 2013), Jank\'o's Master's thesis \textit{Generalized stable matchings: theory and applications}, supervised by Fleiner and completed, but not widely circulated, in 2011, as well as \textit{Stability of generalized network flows} by Fleiner, Jank\'o, and Tamura, completed, but not widely circulated, in 2014. %We learned about each others' work during the 12th Meeting of the Society of Social Choice and Welfare (Chestnut Hill) on 18-21 June 2014 and started communicating intensively in late 2014. An abstract of this paper appeared in the proceedings of MATCH-UP 2015 and AMMA 2015.
We would like to thank Samson Alva, Ravi Jagadeesan, Scott Kominers, Michael Ostrovsky, and four anonymous referees for their valuable comments. Vincent Crawford, Umut Dur, Jens Gudmundsson, Claudia Herrestahl, Paul Klemperer, Collin Raymond, and Zaifu Yang also gave great comments on much earlier drafts. We had enlightening conversations with Alex Nichifor, Alex Westkamp, and M. Bumin Yenmez about the project. Moreover, we are also grateful to seminar participants at the Southern Methodist University, National University of Singapore, ECARES, CIREQ Matching Conference (Montr\'{e}al), Workshop on Coalitions and Networks (Montr\'{e}al), the 12th and the 13th Meetings of the Society of Social Choice and Welfare, the 3rd International Workshop on Matching Under Preferences (Glasgow) and AMMA (Chicago) for their comments.}}} 

\author{
Tam\'{a}s Fleiner\thanks{{Budapest University of
Technology and Economics, Budapest, Hungary. Research was supported by the OTKA K108383
    research project and the MTA-ELTE Egerv\'ary Research Group. Part
  of the research was carried out during two working visits at Keio
  University. E-mail: {\tt fleiner@cs.bme.hu }}}%, Budapest
\and 
Zsuzsanna Jank\'o\thanks{{Corvinus University, Budapest, Hungary. Research was supported by the OTKA K109240
  research project and the MTA-ELTE Egerv\'ary Research Group. E-mail: {\tt
    jzsuzsy@cs.elte.hu}}}
\and
Akihisa  Tamura\thanks{{Keio University, Yokohama, Japan. Research was supported by Grants-in-Aid for
    Scientific Research (B) from JSPS. E-mail: {\tt
      aki-tamura@math.keio.ac.jp}}}\\
\and
Alexander Teytelboym\thanks{{Department of Economics, St. Catherine's College, and the Institute of New Economic Thinking at the Oxford Martin School, University of Oxford, Oxford, United Kingdom. Email: {\tt alexander.teytelboym@economics.ox.ac.uk}}}
}

\date{\today}
\maketitle

%\\\\Preliminary and incomplete draft}

%\ch{The text that is new or has been changed is in blue.}

%
\begin{abstract}
{\small We consider a model of matching in trading networks in which firms can enter into bilateral contracts. In trading networks, \textit{stable} outcomes, which are immune to deviations of arbitrary sets of firms, may not exist. We define a new solution concept called \textit{\trailstabity}. Trail-stable outcomes are immune to consecutive, pairwise deviations between linked firms. We show that any trading network with bilateral contracts has a \trailstab\ outcome
whenever firms' choice functions satisfy the full substitutability condition.  For trail-stable outcomes, we prove results on the lattice structure, the rural hospitals theorem, strategy-proofness, and comparative statics of firm entry and exit. We also introduce \textit{weak} trail stability which is implied by trail stability under full substitutability. We describe relationships between the solution concepts.}
\noindent \begin{flushleft}
\textbf{Keywords}: matching markets, market design, trading networks, supply chains, trail stability, weak trail stability, chain stability, stability, contracts.
\par\end{flushleft}{\small \par}

\noindent \begin{flushleft}
\textbf{JEL Classification}\textsc{:}\textit{ C78, D47, L14}
\par\end{flushleft}{\small \par}

\end{abstract}
%\newpage{}

\onehalfspacing
\newpage

\section{Introduction}
Modern production is highly interconnected. Firms typically have a large number of buyers and suppliers and dozens of intermediaries add value to final products before they reach the consumer.
In this paper, we study the structure of contractual relationships between firms. In our model, firms have heterogeneous preferences over sets of bilateral
contracts with other firms. Contracts may encode many dimensions of a
relationship including the price, quantity, quality, and delivery time. The universe of possible relationships between
firms is described by a \textit{trading network}---a multi-sided matching
market in which firms form downstream contracts to sell outputs and
upstream contracts to buy inputs. 

We focus on the existence and structure of stable outcomes in decentralized,
real-world matching markets. In production networks that we consider in
this paper, stable outcomes play the role of equilibrium
and may serve as a reasonable prediction of the outcome of market interactions
\citep{KeCr:82,Roth:84,HaKoNiOsWe:11}.\footnote{The ``market design'' literature has emphasized the importance of the existence of stable
outcomes in order to prevent centralized matching markets from unraveling
\citep{Roth:91}. One market design application of our model is electricity trading in peer-to-peer power systems \citep{morstyn2018bilateral}.}
We obtain a general result: any trading network has an outcome that satisfies a
natural extension of \textit{pairwise stability} \citep{GaSh:62}. Our model of
matching markets subsumes many previous models of matching with contracts,
including many-to-one \citep{GaSh:62,CrKn:81,KeCr:82,HaMi:05} and
many-to-many matching markets
\citep{Roth:84,Soto:99,Soto:04,EcOv:06,KlWa:09}.

We build on a seminal contribution by \citet{Ostr:08}, who introduced a matching model
of \textit{supply chains}. In a supply chain, there are agents, who
only supply inputs (e.g. farmers); agents, who only buy final outputs
(e.g. consumers); while the rest of the agents are intermediaries,
who buy inputs and sell outputs (e.g. supermarkets). All agents are
partially ordered along the supply chain: downstream (upstream)
firms cannot sell to (buy from) firms upstream (downstream) i.e. the
trading network is \textit{acyclic}. His key assumption about the market, which we retain in his paper, was that
firms' choice functions over contracts satisfy  \textit{same-side
  substitutability} and \textit{cross-side complementarity} conditions
(\citet{HaKo:12} later referred to these conditions jointly as \textit{full substitutability}).  This assumption requires that firms view any downstream or any
upstream contracts as substitutes, but any downstream and any upstream
contract as complements.%
\footnote{Same-side substitutability is a fairly strong assumption as, for
  example, it rules out any complementarities in inputs. There is evidence
  that modern manufacturing firms rely on many complementary inputs
  \citep{MiRo:90,Fox:10}. \citet{HaKo:15} and \citet{rostek2017matching} consider a multilateral matching market with complements. \citet{AlTe:15} analyze supply chains in which inputs could be complementary or substitutable with general preferences. \citet{jagadeesan2017complementary} explores trading networks with complementary inputs in a large market setting. While production decisions often create externalities, we assume that firms only care about the contracts they are involved in \citep{SaTo:96, Band:12, Pyci:12, YePy:15}. In addition, we work in a complete information environment; for a treatment with asymmetric information, see \citet{roth1989two, ehlers2007incomplete, chakraborty2010two}. Extending our model to incorporate incomplete information and externalities is a promising area for further research.} \citet{Ostr:08} showed that
any supply chain has a \textit{chain-stable} outcome for which there
are no blocking upstream or downstream chains of contracts. \citet{HaKo:12} further showed that, in the presence
of network acyclicity, chain-stable outcomes are equivalent to
\textit{stable} 
outcomes i.e. those that are immune to deviations by arbitrary sets of firms. Even under full substitutability, stable/chain-stable outcomes in supply chains may be Pareto inefficient.\footnote{Inefficiency arises even in two-sided many-to-many matching markets without contracts if agents have multi-unit demands: \citet{Blai:88} and \citet[p. 177]{RoSo:90} provide the earliest examples; \citet{EcOv:06,KlWa:09} discuss the setting with contracts. \citet{West:10} provides necessary and sufficient conditions on the structure contract relationships in the supply chain for chain-stable outcomes to be efficient.} %These conditions also guarantee that chain-stable outcomes are ``setwise-stable'' \citep{Soto:99,KoUn:06,EcOv:06,KlWa:09}. \citet{West:10} calls ``setwise-stable outcomes'' ``group-stable" and \citet{HaKo:12} call (what we call) ``group-stable" outcomes ``stable".}

While a supply chain may be a good model of production in certain industries
\citep{AnCh:13}, in general, firms simultaneously supply inputs to\textit{
  and} buy outputs from other firms (possibly through intermediaries). If
this is the case, we say a trading network contains a contract \textit{cycle}.  For
example, the sectoral input-output network of the U.S. economy, illustrated
by \citet[Figure 3]{AcCaOzTa:12}, shows that American firms are very
interdependent and the trading network contains many cycles. Consider a
coal mine that supplies coal to a steel factory.  The factory uses coal to
produce steel, which is an input for a manufacturing firm that sells mining
equipment back to the mine. This creates a contract cycle. However, \citet{HaKo:12} 
showed that if a trading network without transfers has a contract cycle then stable
outcomes may fail to exist. Moreover, \citet*{FlJaScTe:17} show that checking whether a stable outcome exists---or even whether a given outcome is stable---is computationally intractable. 

We show that, even in the presence of contract cycles,
outcomes that satisfy a different stability concept---\textit{trail
  stability}---can still be found. A trail of contracts is a set of contracts which can be ordered in such a way that the buyer in one contract is the seller in the subsequent one. Along
a \textit{locally blocking trail}, whenever a firm receives an upstream (downstream) offer, it can either accept it unconditionally or hold the offer and make a myopic, unilateral downstream (upstream) offer. Trail stability rules out any
such locally blocking trails. We argue that trail stability is a useful, natural, and intuitive
equilibrium concept for the analysis of matching markets in networks because locally blocking trails do not require extensive coordination among recontracting parties. In general trading networks, stable and chain-stable outcomes are trail-stable under full substitutability (but not in general). Trail stability is equivalent to chain stability (and therefore to stability under our assumptions) in acyclic trading networks and to pairwise stability in
two-sided matching markets. 

Trail-stable outcomes correspond to the fixed-points of an operator and form a particular lattice structure for \textit{terminal agents} who can sign only upstream or only downstream contracts. The lattice reflects the classic opposition-of-interests property of two-sided markets, but in our case the opposition of interests is between terminal buyers and terminal sellers. In addition to this strong lattice property, we extend previous results on the existence of buyer- and seller-optimal stable outcomes, the rural hospitals theorem, strategy-proofness as well as comparative statics on firm entry and exit that have only been studied in a supply-chain or two-sided setting under general choice functions. 

We then introduce another solution concept called \emph{weak trail stability} which ensures that firms are willing to sign {\it all} their contracts along a \textit{sequentially blocking trail} (rather than simply upstream-downstream pair along a locally blocking trail). This might be a useful solution concept if the blocking contracts are not fulfilled quickly. Stable and chain-stable outcome are always weakly trail-stable. We show the under full substitutability weakly trail-stable outcomes exist because trail-stable outcomes are weakly trail-stable (without full substitutability, however, trail-stable outcomes may not be weakly trail-stable). 

%We then offer sufficient conditions on agents' preferences such that trail-stable and weakly trail-stable or trail-stable and stable outcomes coincide. Trail stable and weakly trail-stable outcomes coincide under an additional assumption of \textit{separability}, a condition that ensures that decisions over certain pairs of upstream and downstream contracts are taken independently from other. Trail-stable and stable outcomes (and hence weakly trail-stable and chain-stable) coincide under an additional assumption of \textit{simplicity}, which stipulates that contracts can be ordered on an ``intensity'' scale (which could represent prices of trades specified in the contracts) and that firms are picking the contracts with the highest intensity.

Our work complements an important recent paper by \citet{HaKoNiOsWe:14} which shows that in general trading networks, under certain conditions, stable outcomes coincide with chain-stable outcomes i.e. those immune to coordinated deviations by a set of firms which is \textit{simultaneously} signing a trail of contracts. Our paper is also related to the stability of (continuous and discrete) network flows discussed by \citet{Flei:09,
  Flei:14}. In these models, agents choose the amount of ``flow'' they receive from upstream
and downstream agents and have preferences over who they receive the
``flow'' from. The network flow model allows for cycles. However, the
choice functions in the network flow models are restricted by Kirchhoff's
(current) law (the total amount of incoming (current) flow is equal to the
total amount of outgoing flow) and in the discrete case these choice
functions are special cases of \citet{Ostr:08}. This paper therefore generalizes both of the supply chain and the network flow models, while offering two appealing new solution concepts. 

\citet{HaKoNiOsWe:11} were the first to consider a general trading network model and proved that stable outcomes always exist under full substitutability in a transferable utility (TU) economy. However, TU rules out wealth effects and distortionary frictions, such as sales taxes, bargaining costs, or incomplete financial markets. \citet*{FlJaJaTe:17} consider a variation of our model in which every contract specifies a trade and a continuous price (see also \citealp{HaKoNiOsWe:14}). They show that, under full substitutability, trail-stable outcomes are essentially equivalent to competitive equilibrium outcomes. In the presence of distortionary frictions, however, they show that competitive equilibria are not stable (in fact, stable outcomes may not exist) and can even be Pareto-comparable.\footnote{Without distortionary frictions, competitive equilibrium outcomes are stable and in the core even in the presence of wealth effects \citep*{FlJaJaTe:17}. In earlier work, \citet{HaKoNiOsWe:11} and \citet{HaKo:15} made this observation in a TU economy.} Hence, trail stability can also serve a cooperative interpretation of competitive equilibrium in settings where price-taking behavior is not a plausible assumption.

We proceed as follows. In Section \ref{modelsec}, we present the ingredients of the
model, including the trading network, main assumptions on choice functions, and terminal agents. In Section \ref{stabconcepts}, we introduce stability, chain stability, and trail stability. Then, in Section \ref{stabsec}, we report the existence of trail-stable outcomes before digging deeper into their structure. In Section \ref{sec:weaktrail}, we introduce weak trail stability and describe its relationship to stability and trail stability. Finally, we conclude and outline some directions for future
work. Appendix \ref{app:previous} summarizes all solution concepts, results and nomenclature used in this paper with reference to previous work and describes which previous results we have generalized in our setting of trading networks with general choice functions. Appendix \ref{app:proofs} provides proofs of the main results. In Appendix \ref{sec:coincide}, we give sufficient conditions on preferences that ensure that trail-stable and stable outcomes or trail-stable and weakly trail-stable outcomes coincide. Appendix \ref{app:path} considers yet another solution concept.

\section{Model}\label{modelsec}

\subsection{Ingredients}

There is finite set of agents (firms or consumers) $F$ and a finite
set of contracts (trading network) $X$.%
\footnote{The standard justification for this assumption is given by \citet[p. 49]{Roth:84}:
``elements of a {[}contract{]} can take on only discrete values;
salary cannot be specified more precisely than to the nearest penny,
hours to the nearest second, etc.'' In fact, the finiteness assumption is not
necessary for our existence proof. We only require that the set of contracts
between any two agents forms a complete lattice.
} A contract $x\in X$ is a bilateral agreement between a buyer $b(x)\in F$
and a seller $s(x)\in F$. A \textit{(trading) cycle} in $X$ is a sequence of firms $(f_1,\ldots,f_M)$ such that for all $m\in\{1,\ldots,M\}$ there exists a contract $x_m$ such that $s(x_m)=f_m$ and $b(x_m)=f_{m+1}$ (subscripts modulo $M$). Hence, $F(x)=\{s(x),b(x)\}$ is the
set of firms associated with contract $x$ and, more generally, $F(Y)$
is the set of firms associated with contract set $Y\subseteq X$.
Denote $X^B_{f}=\{x\in X|b(x)=f\}$ and $X^S_{f}=\{x\in X|s(x)=f\}$
the sets of $f$'s upstream and downstream contracts---i.e. the contract for which
$f$ is a buyer and a seller, respectively. Clearly, $Y^B_{f}$ and
$Y^S_{f}$ form a partition over the set of contracts $Y{}_{f}=\{y\in
Y|f\in F(y)\}$ 
which involve $f$, since an agent cannot be a buyer and a seller
in the same contract.

Our structure is more general than the
setting described by \citet{Ostr:08}, \citet{West:10} or \citet{HaKo:12}. Each firm $f\in F$ is associated
with a vertex of a directed multigraph $(F,X)$ and each contract
$x\in X$ is a directed edge of this graph. For any $f$, the set $X^B_{f}$
corresponds to the set of incoming edges and $X^S_{f}$ to the set of outgoing edges. An \textit{acyclic} trading network (a \textit{supply chain}) contains no directed cycles in the graph. Supply chains require a partial order on
the firms' positions in the chain although firms may sell to (buy
from) any downstream (upstream) level. In our model, we consider general trading networks,
which may contain contract cycles (i.e. directed cycles on the graph).

%\begin{figure}
%\begin{minipage}{.5\textwidth}
%\begin{centering}
%\par\end{centering}
%\centering{}
%\scalebox{1}{\includegraphics{supplychain}}
%\protect\caption{Supply chain\label{fig:Supply-chain}}
%\end{minipage}
%%\end{figure}
%%\begin{figure}
%\begin{minipage}{.5\textwidth}
%\begin{centering}
%\par\end{centering}
%\centering{}
%\scalebox{1}{\includegraphics{network}}
%\protect\caption{Trading network\label{fig:Contract-network}}
%\end{minipage}
%\end{figure}

Every firm has a choice function $C^{f}$, such that $C^{f}(Y_{f})\subseteq Y_{f}$
for any $Y_{f}\subseteq X_{f}$.\footnote{Since firms only care about their own contracts, we can write $C^{f}(Y)$ to mean $C^{f}(Y_{f})$.} We say that choice functions of $f\in F$ satisfy the \textit{irrelevance of rejected contracts (IRC)} condition if for any $Y\subseteq X$ and $C^{f}(Y)\subseteq Z\subseteq Y$ we have that $C^{f}(Z)=C^{f}(Y)$ \citep{Blai:88,Alka:02,Flei:03,Eche:07,AySo:12}.\footnote{In our setting, IRC is equivalent to the Weak Axiom of Revealed Preference \citep{AySo:12,Alva:15a}.}

For any $Y\subseteq X$ and $Z\subseteq X$, define the \textit{chosen}
set of upstream contracts 
\begin{equation}
C_{B}^{f}(Y|Z)= C^{f}(Y^B_f\cup Z^S_f)\cap X^B_f
\end{equation}
which is the set of contracts $f$ chooses as a buyer when $f$ has
access to upstream contracts $Y$ and downstream contracts $Z.$ Analogously,
define the chosen set of downstream contracts 
\begin{equation}
C_{S}^{f}(Z|Y)= C^{f}(Z^S_f \cup Y^B_f)\cap X^S_f
\end{equation}
Hence, we can define \textit{rejected} sets of contracts $R_{B}^{f}(Y|Z)= Y_f\setminus C_{B}^{f}(Y|Z)$
and $R_{S}^{f}(Z|Y)= Z_f\setminus C_{S}^{f}(Z|Y)$. An \textit{outcome}
$A\subseteq X$ is a set of contracts.

A set of contracts $A\subseteq X$ is \textit{individually rational} for an agent $f\in F$ if $C^{f}(A_{f})=A_{f}$. We call set $A$ \textit{acceptable} if $A$ is 
individually rational for all agents $f\in F$. For sets of
contracts $W,A\subseteq X$, we say that $A$ is \textit{$(W,f)$-acceptable} if
$A_{f}\subseteq C^f(W_{f}\cup A_{f})$ i.e. if the agent $f$ chooses all
contracts from set $A_f$ whenever she is offered $A$ alongside
$W$. Set of contracts $A$ is \textit{$W$-acceptable} if $A$ is $(W,f)$-acceptable
  for all agents $f\in F$. Note that contract set $A$ is individually rational for agent $f$ if and only if it is
  $(\emptyset,f)$-acceptable. If $y\in X^B_f$ and $z\in X^S_f$ then $\{y,z\}$ is a
  \textit{$(W,f)$-essential pair} if neither $\{y\}$ nor $\{z\}$ is $(W,f)$-acceptable
  but $\{y,z\}$ is $(W,f)$-acceptable. Note that any essential pair consists
  of exactly one upstream and one downstream contract.

\subsection{Assumptions on choice functions}

We can now state our key assumption on choice functions introduced by \citet{Ostr:08} and \citet{HaKo:12}.
\begin{definition}

Choice functions of $f\in F$ are \textit{fully
substitutable} if for all $Y'\subseteq Y\subseteq X$ and $Z'\subseteq Z\subseteq X$
they are:
\begin{enumerate}
\begin{minipage}{.5\textwidth}
\item \textit{Same-side substitutable} (SSS):

\begin{enumerate}
\item $R_{B}^{f}(Y'|Z)\subseteq R_{B}^{f}(Y|Z)$
\item $R_{S}^{f}(Z'|Y)\subseteq R_{S}^{f}(Z|Y)$
\end{enumerate}
\end{minipage}
\begin{minipage}{.5\textwidth}
\item \textit{Cross-side complementary} (CSC):

\begin{enumerate}
\item $R_{B}^{f}(Y|Z)\subseteq R_{B}^{f}(Y|Z')$
\item $R_{S}^{f}(Z|Y)\subseteq R_{S}^{f}(Z|Y')$
\end{enumerate}
\end{minipage}
\end{enumerate}
\end{definition}

Contracts are fully substitutable if every firm regards any of its
upstream or any of its downstream contracts as substitutes, but its
upstream and downstream contracts as complements. Hence, rejected
downstream (upstream) contracts continue to be rejected whenever the
set of offered downstream (upstream) contracts expands or whenever
the set of offered upstream (downstream) contracts shrinks.

\subsection{Laws of Aggregate Demand and Supply}
We first re-state the familiar Laws of Aggregate Demand and Supply (LAD/LAS) \citep{HaMi:05,HaKo:12}. LAD (LAS) states that when a firm has more upstream (downstream) contracts available (holding the same downstream (upstream) contracts), the number of downstream (upstream) contracts the firms chooses does not increase more than the number of upstream (downstream) contracts the firm chooses. Intuitively, an increase in the availability of contracts on one side, does not increase the difference between the number of contracts signed on either side.\footnote{This is an extension of the ``cardinal monotonicity'' condition \citep{Alka:02}.}
\begin{definition}

Choice functions of $f\in F$ satisfy the \emph{Law of Aggregate Demand} if for all $Y,Z\subseteq X$
and $Y'\subseteq Y$

\[
|C_{B}^{f}(Y|Z)|-|C_{B}^{f}(Y'|Z)|\geq|C_{S}^{f}(Z|Y)|-|C_{S}^{f}(Z|Y')|
\]

and the \emph{Law of Aggregate Supply} if for all $Y,Z\subseteq X$ and $Z'\subseteq Z$

\[
|C_{S}^{f}(Z|Y)|-|C_{S}^{f}(Z'|Y)|\geq|C_{B}^{f}(Y|Z)|-|C_{B}^{f}(Y|Z')|
\]

\end{definition}

We can easily show that LAD/LAS imply IRC, extending the result by \citet{AySo:12}.

\begin{lemma}\label{ladlasirc} Suppose that choice function of $f\in F$ satisfies full substitutability and LAD/LAS. Then the choice function of $f$ satisfies IRC.
\end{lemma}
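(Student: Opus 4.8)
The plan is to adapt the one-sided argument of \citet{AySo:12} --- which derives IRC from substitutability together with a cardinality law --- to the present two-sided (buyer/seller) setting. The first step is a reduction to a single deletion: fix $f\in F$ (and, since $C^{f}(Y)=C^{f}(Y_{f})$, take $Y\subseteq X_{f}$), put $C:=C^{f}(Y)$, and suppose $C\subseteq Z\subseteq Y$. Enumerating $Y\setminus Z=\{w_{1},\dots,w_{m}\}$ --- all of which lie in $Y\setminus C$ --- and removing the $w_{i}$ successively, it suffices to establish the single step (using nothing special about $Y$): if $w\in Y\setminus C^{f}(Y)$ then $C^{f}(Y\setminus\{w\})=C^{f}(Y)$. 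The induction then goes through because, once the current choice set has been shown to equal $C$, the next contract to be deleted is still a rejected contract of the current set, and it is not in $C$. By the symmetry between the two sides of a contract --- which interchanges the two clauses of SSS, the two clauses of CSC, and LAD with LAS --- I may assume $w\in X^B_{f}$, and I write $Z:=Y\setminus\{w\}$, so that $Z^B_{f}=Y^B_{f}\setminus\{w\}$ and $Z^S_{f}=Y^S_{f}$.

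The heart of the single step is two monotonicity observations plus LAD. Unwinding the definitions, $C^{f}(Z)\cap X^B_{f}=C^{f}_{B}(Z\mid Y)$, $C^{f}(Z)\cap X^S_{f}=C^{f}_{S}(Y\mid Z)$, $C_{B}:=C\cap X^B_{f}=C^{f}_{B}(Y\mid Y)$, and $C_{S}:=C\cap X^S_{f}=C^{f}_{S}(Y\mid Y)$. Same-side substitutability, applied to the shrinking of the available upstream contracts from $Y^B_{f}$ to $Z^B_{f}$, gives $R^{f}_{B}(Z\mid Y)\subseteq R^{f}_{B}(Y\mid Y)$; since $w$ has left the ground set while $w\in R^{f}_{B}(Y\mid Y)$, a short set computation yields $C_{B}\subseteq C^{f}(Z)\cap X^B_{f}$ --- deleting the rejected upstream contract can only enlarge the chosen upstream set. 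Cross-side complementarity, applied to the same shrinking of the available upstream contracts (downstream availability held at $Y^S_{f}$), gives $R^{f}_{S}(Y\mid Y)\subseteq R^{f}_{S}(Y\mid Z)$, hence $C^{f}(Z)\cap X^S_{f}\subseteq C_{S}$ --- the chosen downstream set can only shrink. Now apply LAD with the upstream source moving from $Y$ down to $Z$ and the downstream source fixed at $Y$:
\[
|C^{f}_{B}(Y\mid Y)|-|C^{f}_{B}(Z\mid Y)|\;\ge\;|C^{f}_{S}(Y\mid Y)|-|C^{f}_{S}(Y\mid Z)|.
\]
The first observation makes the left-hand side $\le 0$ and the second makes the right-hand side $\ge 0$, so both equal $0$; combined with the two set inclusions this forces $C^{f}(Z)\cap X^B_{f}=C_{B}$ and $C^{f}(Z)\cap X^S_{f}=C_{S}$, i.e.\ $C^{f}(Z)=C$. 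The case $w\in X^S_{f}$ is verbatim the same with the two sides interchanged and LAS in place of LAD, which closes the induction.

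The main obstacle --- and the reason this is more than a transcription of \citet{AySo:12} --- is precisely the entanglement of the two sides: deleting a single rejected upstream contract can \emph{simultaneously} enlarge the chosen upstream set and shrink the chosen downstream set, so neither substitutability alone nor a one-sided cardinality bound pins the new choice down. The key point is that LAD relates these two opposite-signed changes and thereby forces both to vanish; this is why the deletion must be done one contract at a time, so that LAD can be used as a two-sided squeeze. A secondary, purely bookkeeping, hazard is keeping straight which clause of SSS/CSC and which of LAD/LAS is invoked at each point, and checking that in the conditional-choice notation the rejected sets carry the irrelevant opposite-side contracts along harmlessly.
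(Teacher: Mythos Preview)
Your proof is correct and follows essentially the same approach as the paper: both derive the single-step invariance of $C^{f}$ under deletion (equivalently, addition) of a rejected contract by combining the SSS inclusion on the same side, the CSC inclusion on the opposite side, and the LAD (resp.\ LAS) inequality to squeeze both inclusions to equalities. The only differences are cosmetic: the paper phrases the single step as \emph{adding} a rejected $z$ to $Y$ rather than removing it, and omits the explicit induction over $Y\setminus Z$ that you spell out.
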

\begin{proof}[Proof of Lemma \ref{ladlasirc}]
Consider $Y\subset X_f$ and $z\in X_f^B \setminus Y$ such that $z\notin C^f(Y\cup\{z\})$. Then (1), from SSS, we have that $C_B^f(Y\cup\{z\}) \subseteq C_B^f(Y) $, and (2), from CSC, we have that  $C_S^f(Y\cup\{z\}) \supseteq C_S^f(Y)$. If choice functions satisfy LAD/LAS, then we have that
$|C_{B}^{f}(Y)|-|C_{S}^{f}(Y))|\leq|C_{B}^{f}(Y\cup\{z\})|-|C_{S}^{f}(Y\cup\{z\})|$.
But then both of the above the set inclusions (1) and (2) must hold with equality, so $C^f (Y\cup\{z\}) =C^f(Y)$ as required.
\end{proof}

\subsection{Terminal agents and terminal superiority}
We now introduce some terminology that describes contracts of agents, who only act as buyers or only act as sellers. A firm $f$ is a \textit{terminal seller} if there are no upstream contracts for $f$ in
the network and $f$ is a \textit{terminal buyer} if the network does not contain
any downstream contracts for $f$. An agent who is either a terminal buyer or a terminal seller is called a \textit{terminal agent}. Let $\mathcal{T}$ denote the set of terminal agents in $F$ and for a set $A$ of contracts let 
us denote the \emph{terminal contracts of $A$} 
\ch{ by $A_\mathcal{T}=\bigcup \{A_f|f\in \mathcal{T}\}$. }A set $Y$
of contracts is \emph{terminal-acceptable} if there is an acceptable set
$A$ of contracts such that $Y=A_\mathcal{T}$. If $A$ and $W$ are terminal-acceptable sets of contracts
then we say that $A$ is \textit{seller-superior} to $W$ (denoted by
$A\succeq^S W$) if $C_f(A_f\cup W_f)=A_f$ for each terminal seller $f$ and
$C_g(A_g\cup W_g)=W_g$ for each terminal buyer $g$. Similarly, $A$ is
\textit{buyer-superior} to $W$ (denoted by $A\succeq^B W$) if $C_f(A_f\cup
W_f)=W_f$ for each terminal seller $f$ and $C_g(A_g\cup W_g)=A_g$ for each
terminal buyer $g$.\footnote{The superiority partial order is equivalent to the revealed preference relation \citep{alkan2003stable,chambers2017choice}.} Clearly, these relations are opposite, that is, $W\succeq^S
A$ if and only if $A\succeq^B W$ holds. Whenever either relation holds, we call this partial order on outcomes \textit{terminal superiority}. \ch{Terminal agents are going to play a key role when we describe the structure of outcomes in trading networks.}

\subsection{Solution concepts}\label{stabconcepts}

We start off by defining two solution concepts that have appeared in previous work.

\begin{definition}[\citealt{HaKo:12}] An outcome $A\subseteq X$ is \textit{stable}\footnote{\citet{KlWa:09} consider ``weak setwise stable'' outcomes which are immune to blocks in which sets of agents must also agree on which contracts they can drop. \citet{West:10} considers ``group-stable'' outcomes which are immune to blocks in which sets of agents can sign better (rather the best) contracts (also known as ``setwise stable outcomes'', see \citealp{Soto:99,EcOv:06,KlWa:09}).} 
\begin{enumerate}
\item $A$ is acceptable.
\item There exist no non-empty blocking set of contracts $Z\subseteq X$, such that $Z\cap A=\emptyset$ and $Z$ is $(A,f)$-acceptable for all $f\in F(Z)$.
\end{enumerate}
\end{definition}

Stable outcomes are immune to deviations by sets of firms, which can re-contract freely among themselves while keeping any of their existing contracts. Stable outcomes always exist in acyclic networks if choice functions are fully substitutable. In order to study more general trading networks, we first introduce trails of contracts.

\begin{definition}
A non-empty set of contracts $T$ is a
\textit{trail} if its elements can be arranged in some order $(x_{1},\ldots,x_{M})$ such that
$b(x_{m})=s(x_{m+1})$ holds for all $m\in \{1,\ldots,M-1\}$ where $M=|T|$.
\end{definition}

While a trail may not contain the \ch{same contract more than
once}, it may include the same agents any number of times.
For example, in Figure
\ref{fig:hatkom}, there is a trail $\{w,z,y\}$ that starts from firm $m$ (or $j$), ends at firm $j$ (or $m$), and ``visits'' firm $k$. 

\begin{definition}[\citealt{HaKoNiOsWe:14}] An outcome $A\subseteq X$ is \textit{chain-stable} if 
\begin{enumerate}
\item $A$ is acceptable.
\item There is no trail $T$, such that $T\cap A=\emptyset$ and $T$ is $(A,f)$-acceptable for all $f\in F(T)$.
\end{enumerate}
\end{definition}

\citet{HaKoNiOsWe:14} show that in general trading networks stable
outcomes are equivalent to chain-stable outcomes whenever choice functions satisfy full substitutability and Laws of
Aggregate Demand and Supply.\footnote{\citet{HaKoNiOsWe:14} refer to trails
 as ``chains'' hence the name ``chain stability''. This extends the definition of chain stability introduced by \citet{Ostr:08}.} However, \citet{Flei:09} and
\citet{HaKo:12} show that stable outcomes may not exist in
general trading networks (see also Example \ref{trail-not-group-stable-example}
below). Moreover, \citet*{FlJaScTe:17} show that both problems of determining whether a stable outcome exists and determining whether an outcome is stable are computationally intractable. 

%Let us define decision problem GS as
%follows. An instance of GS is a 
%trading network with a set of agents $F$ and set of contracts $X$ (with choice functions that satisfy full substitutability and IRC) and an outcome $A$. The answer for an instance of GS is YES if the particular outcome $A$ is not stable (that is, if there is a set of contracts $Z$ that blocks $A$),
%otherwise the answer is NO. 
%%DEFINE ORACLE; DEFINE NP-COMPLETE.  
%\begin{theorem}\label{grouphopeless}
%Problem $GS$ is NP-complete. Moreover, if choice functions are represented by
%oracles then finding the right answer for an instance of GS might need an
%exponential number of oracle calls.
%\end{theorem}

The non-existence and computational intractability of stable outcomes motivates us to define an alternative solution concept.
We first define \textit{trail stability}, which coincides with pairwise stability
in a two-sided many-to-many matching market with contracts and with chain stability in supply chains.

\begin{definition} An outcome $A\subseteq X$ is \textit{trail-stable} if 
\begin{enumerate}
\item $A$ is acceptable.
\item There is no trail $T=\{x_1,x_2,\ldots,x_M\}$, such that $T\cap
  A=\emptyset$  and
\begin{enumerate}
\item$\{x_{1}\}$ is $(A,f_1)$-acceptable for $f_{1}=s(x_{1})$, and
\item$\{x_{m-1}, x_{m}\}$ is $(A,f_m)$-acceptable for
$f_{m}=b(x_{m-1})=s(x_{m})$ whenever $1<m\le M$ and
\item$\{x_{M}\}$ is $(A,f_{M+1})$-acceptable for $f_{M+1}=b(x_{M})$.
\end{enumerate}
The above trail $T$ is called a \textit{locally blocking trail to
  $A$}.
\end{enumerate}

\end{definition}

Trail stability is a natural solution concept when firms interact
mainly with their buyers and suppliers and deviations by arbitrary
sets of firms are difficult to arrange. In a trail-stable outcome,
no agent wants to drop his contracts and there exists no sequence
 of \textit{consecutive}
bilateral contracts comprising a trail preferred by all the agents
in the trail to the current outcome. 
First, $f_{1}$ makes an unilateral offer of $x_{1}$ (the first contract
in the trail) to the buyer $f_{2}$.
The buyer $f_{2}$ then either unconditionally accepts the offer (forming
a locally blocking trail) or conditionally accepts the seller's offer while
looking to make a contract offer ($x_{2}$) to another buyer $f_{3}$.
If $f_{2}$'s buyer in $x_{2}$ happens to be $f_{1}$, then $f_{1}$
considers the offer of $x_{2}$ on it own. If $f_{1}$ accepts, we have a locally blocking trail. If $f_{2}$'s
buyer is not $f_{1}$, then his buyer $f_{3}$ either accepts $x_{2}$ unconditionally
or looks for another seller $f_{4}$ after a conditional acceptance
of $x_{2}$. The trail of these linked conditional contract offers continues until
the last buyer $f_{M+1}$ in the trail unconditionally accepts the
upstream contract offer $x_{M}$.%
\footnote{The trail and the order of conditional acceptances can, of course,
be reversed with $f_{M+1}$ offering the first upstream contract to
seller $f_{M}$ and so on. %
}. Note that all intermediate agents only need to myopically decide whether they want to choose pairs of upstream-downstream contract that ``pass though'' the agents along the trail. In other words, whenever an agent receives a contract offer, he is ``activated'' to either accept or reject it or to hold this offer while making his own contract offer. 

The following example illustrates that trail-stable outcomes are not necessarily stable.\footnote{Example \ref{trail-not-group-stable-example} is similar to examples in \citet[p. 12]{Flei:09} and \citet[Fig. 3, p. 13]{HaKo:12}.}

\begin{figure}
\begin{minipage}{.60\textwidth}

\centering\scalebox{1.5}{\xymatrix{
& j \ar@/_/[d]_z \ar@/^1.25pc/[rd]^{x}\\
m \ar@/^1.25pc/[ru]^{w} & k \ar@/_/[u]_y & i
}}
\protect\caption{Trading network in Examples 1, 3, and 4.\label{fig:hatkom}}
\end{minipage}
\begin{minipage}{.45\textwidth}
\centering\scalebox{1.5}{\xymatrix{
 j \ar@/_/[d]_z \\
 k \ar@/_/[u]_y
}}
\protect\caption{Trading network in Example 2.\label{fig:cycle}}
\end{minipage}

\end{figure}

%%\ch{
\begin{example}[Trail-stable outcomes are not necessarily stable] \label{trail-not-group-stable-example}
Consider four contracts $x$, $y$, $z$ and $w$. Assume that $i=b(x)$, $j=s(x)=s(z)=b(y)=b(w)$,
$k=b(z)=s(y)$ and $m=s(w)$ (see Figure \ref{fig:hatkom}).
Agents have the following preferences that induce fully substitutable choice functions:\footnote{In all our examples, $\succ$ denotes a strict preference relation. Choice functions induced by strict preferences satisfy IRC. \info{We say that $f \in F$ ``prefers'' outcome $A$ to outcome $A'$ if $C^f(A\cup A')=A_f$} }\\
\begin{align*}
\succ_{i}&:\{x\}\succ_{i}\emptyset\\
\succ_{m}&:\{w\}\succ_{m}\emptyset\\
\succ_{j}&:\{x,y,w\}\succ_{j}\{z,y,w\}\succ_{j}  \{x,y\}\succ_{j}\{z,y\}\succ_{j}\{w\}\succ_{j}\emptyset\\
\succ_{k}&:\{z,y\}\succ_{k}\emptyset
\end{align*}
and other outcomes are not acceptable.
A trail-stable outcome exists: $A=\{w\}$. The trail-stable outcome $\{w\}$ is Pareto-inefficient as $\{z,y,w\}$ makes $j$ and $k$ better off without making $i$ and $m$ worse off. There is, however, no stable outcome.%
\footnote{One can check that there is no stable outcome since $\{w\}\succ_{j} \{x,w\}\succ_{k} \{x,z,w\}\succ_{i,j} \{z,y,w\}\succ_{j,k} \{w\}$
and other outcomes are not acceptable.%
} 
\end{example}
%%}

To illustrate trail stability further, let us drop agents $i$ and $m$ and their corresponding contracts from
the example above (Figure \ref{fig:cycle}). % while keeping preferences of the remaining agents the same.
\ch{ The  new preferences of $j$ are $\{y,z\}\succ_{j}\emptyset$. }  There is one stable outcome $\{y,z\}$. There are, however,
two weakly trail-stable outcomes: $\emptyset$ and $\{y,z\}$. Is $\emptyset$
a reasonable possible outcome of this market? If coordination is difficult, due to bargaining costs for example, firms may find it difficult to pin down blocking sets, especially in large markets. Trail stability therefore provides a natural solution concept for matching markets in which firms have a limited ability to coordinate their decisions in the trading network.

%\ch{
\section{Existence and properties of trail-stable outcomes}\label{stabsec}
We can now state the first key result of this paper.
\begin{theorem}\label{main}
Suppose that choice functions satisfy full substitutability and IRC. Then there exists a trail-stable outcome.\footnote{Our results do not contradict Theorem 5 on the non-existence of stable outcomes in \citet{HaKo:12} since Theorem \ref{main} only considers the existence of trail-stable outcomes.} 
\end{theorem}
This theorem establishes a positive existence result for an appealing solution concept in general trading networks: under the usual assumption of full substitutability, trail-stable outcomes always exist.%In order to examine the structure of weakly trail-stable outcomes, recall that in the marriage model of Gale and Shapley, the existence of man-optimal and woman-optimal stable matchings follow from the well-known lattice structure of stable matchings. The key to extending this result to trading networks is to consider only terminal agents. In a supply chain setting, \citet{HaKo:12} show that there exists a stable outcome $A^{\star}$ that is \textit{buyer-optimal} (\textit{seller-optimal}) i.e. one in which if all terminal buyers (terminal sellers) unanimously prefer $A^{\star}$ to any other stable outcome.\footnote{For any terminal-buyer (terminal-seller) $f\in F$ and any for any stable $Z\subseteq X$, we have that $C^f(A^{\star}_f\cup Z)=A^{\star}_f$. This is a common property of stable outcomes in two-sided markets with substitutable choice functions, however, it typically fails in richer matching models \citep{YePy:15,Alva:15, AlTe:15}.} To explore the structure of weakly trail-stable outcomes, we will need to invoke \ch{separability}. We say that $Y\subseteq X$ is \emph{terminal-weakly trail-stable} if there is a weakly trail-stable outcome $A\subseteq X$ such that $Y=A_\mathcal{T}$.
%\begin{theorem}\label{lattice}
%In any trading network $X$ if choice functions of $F$ satisfy full substitutability, IRC and separability then   \ch { the set of weakly trail-stable outcomes contains buyer-optimal and seller-optimal outcomes.} 
%\info{omit: the terminal-weakly trail-stable contract sets form a lattice under buyer-superiority.}
%\end{theorem}

%\ch {Theorem \ref{lattice} extends Theorem 4 by \citet{HaKo:12}, which establishes the existence of buyer- and seller-optimal outcomes in acyclic trading networks.}
%}
%%%%
%\iffalse
%%%%

%\subsection{Structure of trail stable outcomes in general
%  networks}\label{ratsec} 

In order to prove Theorem \ref{main}, we use tools familiar to matching theory, such as the Tarski fixed-point theorem \citep{Adac:00,Flei:03,EcOv:06,HaMi:05, Ostr:08, HaKo:12}.
\ch{Let $X^{B}$ and $X^{S}$ be two subsets of $X$ which represent the set of contracts offered to buyers and sellers.  We define an isotone operator $\Phi$ that acts on $(X^{B},X^{S})$ and show that any fixed-point $(\dot X^{B},\dot X^{S})$ of  $\Phi$  corresponds to a {trail}-stable outcome $A=\dot X^{B}\cap \dot X^{S}$.} \ch{These tools allow us to explore properties of trail-stable outcomes that have previously only been explored in a supply-chain or a two-sided setting.}

As we have already seen, trail-stable outcomes are not always stable or chain-stable even under full substitutability. However, the converse is true.

\begin{proposition}\label{prop:stable-is-trail-stable}
Suppose that choice functions satisfy full substitutability and IRC. Then any stable/chain-stable outcome is trail-stable.
\end{proposition}
Full substitutability is key to this result as the following example shows.

\begin{example}[Stability and chain stability do not imply trail stability without full substitutability] \label{ex:not-fully-weakly trail-stable}
Consider two contracts $y,z$. Assume that $j=s(z)=b(y)$,
$k=b(z)=s(y)$ (see Figure \ref{fig:cycle}).
Agents have the following preferences:\\
\begin{align*}
\succ_{j}&:\{z\}\succ_{j}\{y\}\succ_{j}\emptyset\\
\succ_{k}&:\{z,y\}\succ_{k}\emptyset.
\end{align*}
and other outcomes are not acceptable. Note that $k$'s preference are fully substitutable, but $j$'s preferences are not. The empty set of contracts is stable (and chain-stable), however, it is not trail-stable since $\{z,y\}$ is a locally blocking trail.
\end{example}

This example highlights that locally blocking trails need not be acceptable (alongside other contracts) themselves: firm $j$ ``forgets'' that it offered contract $z$ when it considers the terminal contract $y$ offered in return by $k$. In Section \ref{sec:weaktrail}, we will consider a solution concept which ensures that agents are willing to accept all the contracts along a locally blocking trail (perhaps alongside other contracts).

\subsection{Structure of trail-stable outcomes}\label{ratsec} 
Recall that in the marriage model of Gale and Shapley, the existence of man-optimal and woman-optimal stable matchings follow from the well-known lattice structure of stable matchings. The key to extending this result to trading networks is to consider only terminal agents. We say that a trail-stable outcome $A_{max}$ ($A_{min}$) that is \textit{buyer-optimal} (\textit{seller-optimal}) if any terminal buyer (terminal seller) prefers it to any other outcome i.e. for any trail-stable $Z\subseteq X$, we have that $C^f(A_{max}\cup Z)=A_{f,max}$.
\begin{proposition}\label{weaker-lattice}
Suppose that choice functions satisfy full substitutability and IRC. Then the set of trail-stable outcomes contains buyer-optimal and seller-optimal outcomes.
\end{proposition}
Proposition \ref{weaker-lattice} extends Theorem 2 by \citet{Ostr:08} and Theorem 4 by \citet{HaKo:12}, which establish the existence of buyer- and seller-optimal outcomes in acyclic trading networks.\footnote{This is a common property of stable outcomes in two-sided markets with substitutable choice functions, however, it typically fails in richer matching models \citep{YePy:15,Alva:15, AlTe:15}.}
%\ch{Lemma \ref{weaker-lattice}} is simply an analogue of %Theorem \ref{lattice} for trail-stable outcomes. 
We say that $Y\subseteq X$ is \emph{terminal-trail-stable} if there is a trail-stable outcome $A\subseteq X$ such that $Y=A_\mathcal{T}$.
\begin{proposition}\label{fully-stable-lattice}
Suppose that choice functions satisfy full substitutability and LAD/LAS.  Then the terminal-trail-stable contract sets form a lattice under terminal superiority. \end{proposition}

Proposition \ref{fully-stable-lattice} shows that whenever LAD/LAS holds choice functions of terminal agents define a natural partial order on outcomes and the terminal-trail-stable contract sets form a lattice under this order.\footnote{In fact, our proof shows that the terminal-trail-stable contract sets form a \emph{sublattice} \citep{Blai:88,Flei:03,EcOv:06}.} Note that for the lattice and the opposition-of-interests structure, only terminal agents play a role: two outcomes are equivalent if all the terminal agents have the same set of contracts. Indeed, if $A^1$ and $A^2$ are trail-stable
outcomes then there is a trail-stable outcome $A^{+}$ such that all
terminal buyers prefer $A^{+}$ to both $A^1$ and $A^2$ and all sellers prefer any of
$A^1$ and $A^2$ to $A^{+}$.\footnote{Of course, the same holds for if we exchange the
role of buyers and sellers.} This establishes full ``polarization of interests'' in trail-stable outcomes in the sense of \citep{Roth:85} and immediately implies the existence of buyer-optimal ($A_{max}$) and seller-optimal ($A_{min}$) trail-stable outcomes. Therefore, our result substantially strengthens and generalizes the previous results by \citet{Roth:85,Blai:88,EcOv:06} and \citet{HaKo:12}.

The lattice structure of fully-trail stable outcomes allows us to straightforwardly extend two well-known properties of stable outcomes that have been known in two-sided matching markets and acyclic trading networks. One such property is the classic ``rural hospitals theorem'', which shows that in every stable allocation of a two-sided many-to-one doctor-hospital matching market, the same number of doctors are matched to every hospital \citep{Roth:86}. In buyer-seller networks, we can instead consider the difference between the number of upstream and downstream contracts that firms sign \citep{HaKo:12}. The following proposition gives the most general rural hospital theorem result.

\begin{proposition}\label{rural}
Suppose that choice functions satisfy full substitutability and LAD/LAS. Then, for each firm, the difference between the number of upstream contracts and the number of downstream contracts is invariant across trail-stable outcomes.\footnote{Theorem 4 in \citet{Flei:14}, which states that any two stable flows agree on terminal contracts and any two stable flows have the same number of assignments, is a further strengthening of Propositions \ref{fully-stable-lattice} and Propositions \ref{rural} in the special case of network flows.}
\end{proposition}

The lattice structure of trail-stable outcomes also gives a (somewhat weak) mechanism design result.\footnote{One design application of trading networks is a peer-to-peer electricity market in which many consumers also generate electricity \citep{morstyn2018bilateral}.} A mechanism $\mathcal{M}$ is a mapping from a profile of agents' choice functions, $\mathbf{C}^F=(C^f)_{f\in F}$, to the set of outcomes. 
\begin{definition}
A mechanism is \textit{group strategy-proof} for $G\subseteq F$ if for any $\bar{G}\subseteq G$, there does not exist a choice function profile $\mathbf{\bar{C}}^{\bar{G}}$ such that for outcomes $\bar{A}=\mathcal{M}(\mathbf{\bar{C}}^{\bar{G}}, \mathbf{C}^{F\setminus \bar{G}})$ and $A=\mathcal{M}(\mathbf{C}^{F})$ we have that $C^f(\bar{A}\cup A)=\bar{A}$ for \ch{every} $f\in \bar{G}$.
\end{definition}
A mechanism is group strategy-proof for a group of agents if they cannot jointly manipulate their choice functions and obtain an outcome that is better for all of them. Like \citet{HaKo:12}, we are only going to consider group strategy-proofness for terminal agents. We generalize their Theorem 10 with the following result.
\begin{proposition}\label{Strategyproof}Suppose that choice functions satisfy full substitutability and LAD/LAS and, additionally, all terminal buyers (terminal sellers) demand at
most one contract. Then any mechanism that selects the buyer-optimal (seller-optimal)
trail-stable outcome is group strategy-proof for all terminal buyers. \end{proposition}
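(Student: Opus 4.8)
The plan is to mimic the standard deferred-acceptance argument for group strategy-proofness of the optimal stable matching for the one-sided (here, terminal-buyer) side, adapted via the lattice/polarization structure already established in Lemma~\ref{weaker-lattice}, Lemma~\ref{fully-stable-lattice} and Proposition~\ref{rural}. Suppose, for contradiction, that a coalition $\bar G$ of terminal buyers can jointly misreport choice functions $\mathbf{\bar C}^{\bar G}$ so that the buyer-optimal fully trail-stable outcome $\bar A$ under $(\mathbf{\bar C}^{\bar G},\mathbf{C}^{F\setminus\bar G})$ satisfies $C^f(\bar A\cup A)=\bar A_f$ for every $f\in\bar G$, where $A$ is the buyer-optimal fully trail-stable outcome under truthful reporting. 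Since each terminal buyer in $\bar G$ demands at most one contract, $C^f(\bar A\cup A)=\bar A_f$ means each $f\in\bar G$ strictly prefers her single contract in $\bar A$ (or is matched in $\bar A$ but not in $A$), and no $f\in\bar G$ is made worse off.

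First I would reduce to the case where $\bar A$ remains a feasible, acceptable outcome under the \emph{true} profile. The key step is to show that $\bar A$ is in fact fully trail-stable with respect to the true choice functions $\mathbf{C}^F$: the only agents whose behaviour changed are the terminal buyers in $\bar G$, and for those agents $\bar A_f$ was revealed (via $C^f(\bar A\cup A)=\bar A_f$ and $A_f$ being truthfully individually rational, using IRC which follows from LAD/LAS by Lemma~\ref{ladlasirc}) to be chosen under the true choice function from any menu containing it; hence no locally blocking trail under $\mathbf{C}^F$ can pass through a member of $\bar G$ that is not already blocked under $\bar A$, and for all other agents the rationality conditions are unchanged. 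Combined with acceptability, $\bar A$ is fully trail-stable under the true profile. Since $A$ is the \emph{buyer-optimal} fully trail-stable outcome under the true profile, every terminal buyer weakly prefers $A$ to $\bar A$; in particular $C^f(A\cup\bar A)=A_f$ for all terminal buyers $f$, contradicting that each $f\in\bar G$ strictly prefers $\bar A_f$ (using single-unit demand to turn ``weakly prefers'' into a genuine contradiction with strict improvement). The seller-optimal case is symmetric, exchanging the roles of buyers and sellers.

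The main obstacle I anticipate is the claim that $\bar A$ is fully trail-stable under the true profile, i.e.\ controlling the behaviour of the manipulating coalition carefully. The subtlety is that a member $f\in\bar G$ could, under her true choice function, wish to participate in some locally blocking trail of $\bar A$ even though she was content with $\bar A_f$ under her reported choice function. Here is where the single-unit-demand hypothesis on terminal buyers does the real work: a terminal buyer has only downstream... (no --- terminal \emph{buyer} has only upstream) contracts, demands at most one, and the condition $C^f(\bar A\cup A)=\bar A_f$ pins down that her truly most-preferred available contract among $\bar A_f\cup A_f$ is $\bar A_f$; one then argues, using full substitutability (SSS on the single relevant side) and the fact that a locally blocking trail only ever offers a terminal buyer her end-of-trail contract $x_M$ alongside $A$, that if such an $x_M$ were $(A,f)$-rational under the true choice function it would already have destabilized $A$ or been weakly dominated, contradicting optimality of $A$. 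Assembling this requires care but is routine once the reduction is set up; I would lean on the proof of Theorem~10 in \citet{HaKo:12}, substituting ``fully trail-stable'' for ``stable'' and invoking Lemmas~\ref{weaker-lattice} and \ref{fully-stable-lattice} in place of their two-sided lattice result.
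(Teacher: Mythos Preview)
Your proposal and the paper agree on the high-level route: the paper gives no argument of its own and simply says the proof ``just follows the proof of Theorem~1 in \citet{HaKo:09},'' and you likewise conclude by leaning on Theorem~10 of \citet{HaKo:12} with ``fully trail-stable'' substituted for ``stable.'' At that level the approaches coincide.

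Where your sketch goes further than the paper, it stumbles. The step you flag as the ``main obstacle''---showing that the manipulated outcome $\bar A$ is fully trail-stable under the \emph{true} profile---does not go through as you describe. Suppose a locally blocking trail $T$ for $\bar A$ (under true choice functions) ends at a terminal buyer $f\in\bar G$ with last contract $x_M$. You argue that since $f$ truly prefers $\bar A_f$ to $A_f$ and $x_M$ to $\bar A_f$, the contract $x_M$ is $(A,f)$-rational, and that this ``would already have destabilized $A$.'' But the intermediate agents' rationality conditions along $T$ are stated relative to $\bar A$, not $A$; nothing guarantees that the same trail (or any trail ending in $x_M$) locally blocks $A$. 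So you cannot contradict the full trail stability of $A$ this way, and you also cannot conclude that $T$ blocks $\bar A$ under the \emph{reported} profile, since $f$'s reported choice function need not accept $x_M$.

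The argument that actually closes this gap in \citet{HaKo:09} is not the direct ``$\bar A$ is stable under truth'' claim but a Blocking-Lemma--style step: one shows that if some unit-demand terminal buyer is strictly better off in $\bar A$ than in the buyer-optimal $A$, then $\bar A$ admits a block whose initiating agent is \emph{strictly worse off} in $\bar A$ than in $A$; such an agent cannot lie in $\bar G$, so the block involves only truthful reporters and therefore also blocks $\bar A$ under the misreported profile---contradicting stability of $\bar A$ there. Your reduction and use of Lemmas~\ref{weaker-lattice}, \ref{fully-stable-lattice} and IRC (via Lemma~\ref{ladlasirc}) are the right scaffolding; what is missing is replacing the direct claim with this blocking-lemma detour.
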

As is well known, the assumptions that underpin Proposition \ref{Strategyproof}---unit demands and extreme one-sidedness---cannot be substantially relaxed \citep{HaKo:09}.

\subsection{Trail-stable outcomes and comparative statics}\label{}

The second set of properties of trail-stable outcomes concerns the effect of entry and exit of new firms in the trading network. This type of comparative static analysis is well-studied in two-sided matching markets \citep{GaSo:85a,Craw:91,BlRoRo:97,HaMi:05}. More recently, \citet{Ostr:08} and \citet{HaKo:13} extended these results the case of supply chains.
 
%From Lemma \ref{weaker-lattice}, we know  that "in any trading network X if choice functions of F satisfy full substitutability and IRC then the set of trail-stable outcomes contains buyer-optimal and seller-optimal outcomes." As we have seen in the proof of Lemma \ref{main} and Lemma \ref{weaker-lattice}, the fixed points of $\Phi$ form a lattice, hence there is a $\sqsubseteq$-minimal fixed point $(Y^B,Y^S)$ and a $\sqsubseteq$-maximal one $(Z^B,Z^S)$. We showed that trail-stable outcome $A^Y$ is seller-optimal and $A^Z$ is buyer optimal. In the following let us denote  $A^Y$ as  $A_{min}$ and  $A^Z$  as  $A_{max}$  

First,  let us consider what happens when a terminal seller is added to the
market. %Of course, removing a supplier of basic inputs or a consumer of final outputs has the opposite effect on the remaining nodes.
More formally, let $F'= F \cup \{f'\}$  and let $A'_{max}$   and $A'_{min}$  be the buyer-optimal and the seller-optimal trail-stable outcomes in $F'$ respectively.

\begin{proposition}\label{ostrovsky}

Suppose that choice functions satisfy full substitutability and IRC.
Suppose moreover that a new terminal seller $f'$ whose choice function is fully substitutable and satisfies IRC enters the market. Then 
\begin{itemize}
\item every terminal seller $f\neq f'$ prefers $A_{max}$ to $A'_{max}$ and prefers $A_{min}$ to $A'_{min}$, and
\item every terminal buyer $f$ prefers $A'_{max}$ to $A_{max}$ and prefers $A'_{min}$ to $A_{min}$.\footnote{The opposite holds when $f'$ is a terminal buyer.}
\end{itemize}

%\ch{Moreover, we can state that for the intermediate agents, %everyone's buying position gets better, and their selling position gets worse.}
\end{proposition}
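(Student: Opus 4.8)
The plan is to recast firm entry as a monotone perturbation of a single choice function and then track how the extremal fixed points of the operator $\Phi$ from the proof of Lemma~\ref{main} move. First I would fix the agent set to be $F\cup\{f'\}$ throughout, and compare two choice-function profiles on it: the profile in which $f'$ uses the \emph{null} choice function that rejects all of its downstream contracts, and the profile in which $f'$ uses its actual (fully substitutable, IRC) choice function. Under the null profile no contract of $f'$ is ever signed, so the fully trail-stable outcomes coincide with those of the original market $F$; in particular the buyer-optimal and seller-optimal outcomes are $A_{max}$ and $A_{min}$. Under the actual profile they are $A'_{max}$ and $A'_{min}$. Both profiles satisfy full substitutability and IRC, so Lemma~\ref{weaker-lattice} guarantees that all four extremal outcomes exist, and Lemma~\ref{fully-stable-lattice} supplies the lattice of terminal contracts under terminal superiority in each market.

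The core step is a comparison-of-fixed-points argument. Recall from Lemma~\ref{main} that fully trail-stable outcomes are the fixed points of an isotone operator $\Phi$ acting on pairs $(X^{B},X^{S})$, and that the buyer-optimal and seller-optimal outcomes are obtained by iterating $\Phi$ from the top and the bottom of this product lattice (Tarski). Activating the seller $f'$ enlarges, in a monotone way, the set of downstream offers that can propagate through the network, so the perturbed operator dominates the unperturbed one in the coordinate-wise order on $(X^{B},X^{S})$. A standard monotone-comparison lemma for isotone maps on complete lattices then shows that each extremal fixed point moves monotonically in a single direction, so $A'_{max}$ and $A'_{min}$ are comparable to $A_{max}$ and $A_{min}$ in the product order. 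Because that product order is aligned with buyer-superiority in one coordinate and seller-superiority in the other (the opposition-of-interests structure of Lemma~\ref{fully-stable-lattice}), the monotone shift translates directly into terminal superiority: reading off the definitions of $\succeq^{S}$ and $\succeq^{B}$ coordinate by coordinate yields, for every terminal seller $f\neq f'$ and every terminal buyer $f$, precisely the preference between the old outcomes $A_{max},A_{min}$ and the new outcomes $A'_{max},A'_{min}$ asserted in the statement. The case in which $f'$ is instead a terminal buyer is handled by the symmetric argument that exchanges the two coordinates of $\Phi$ and the roles of $\succeq^{S}$ and $\succeq^{B}$, which is exactly why ``the opposite holds''.

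The main obstacle I expect is pinning down the \emph{direction} of the shift and converting it correctly into a per-agent preference. Activating a seller moves the buyer-coordinate and the seller-coordinate of the extremal fixed points in opposite senses, so one cannot simply say that ``everything grows''; one has to verify, using same-side substitutability and cross-side complementarity, that the perturbed operator dominates the original in the precise order under which $A_{max}$ sits at the top and $A_{min}$ at the bottom, and then apply the definition of terminal superiority in the correct coordinate for sellers and for buyers. Keeping the bookkeeping of these two opposed coordinates consistent across the two markets — together with checking that the null-choice-function reduction genuinely reproduces the original market's extremal outcomes, an IRC/consistency check — is where the real work lies; once the direction is fixed, the per-agent preference statements follow immediately from $\succeq^{S}$ and $\succeq^{B}$.
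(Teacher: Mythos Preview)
Your approach is essentially the paper's: both embed the old market into the enlarged contract space, compare the two isotone operators $\Phi$, deduce that the extremal fixed points shift monotonically in the $\sqsubseteq$ order, and then convert the $\sqsubseteq$-comparison into terminal preferences via IRC. Your ``null choice function'' device is just a repackaging of the paper's move: giving $f'$ the null choice function makes $(\dot Z^{B},\dot Z^{S}\cup X_{f'})$ a fixed point of the null-profile $\Phi$, and this is precisely the pre-fixed-point of the actual-profile $\Phi$ from which the paper iterates upward. The paper carries out the iteration explicitly for $A_{max}$ (and leaves $A_{min}$ as analogous); your abstract ``monotone comparison lemma for isotone maps'' handles both extremes at once, which is slightly cleaner but the same content.

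One correction: you invoke Lemma~\ref{fully-stable-lattice} for the alignment between the product order on $(X^{B},X^{S})$ and terminal superiority, but that lemma assumes LAD/LAS, which this proposition does not. What you actually need is only that $\sqsubseteq$-comparability of fixed points yields terminal-superiority comparability of the corresponding outcomes; this is the content of Lemma~\ref{terminalorder} (used inside the proof of Lemma~\ref{weaker-lattice}) and requires only IRC: if $\dot X^{S}_{f}\supseteq \dot X'^{S}_{f}$ for a terminal seller $f$, then $A_f=C^f(\dot X^S_f)\subseteq A_f\cup A'_f\subseteq \dot X^S_f$ and IRC gives $C^f(A_f\cup A'_f)=A_f$. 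Cite that instead and your sketch goes through.
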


Proposition \ref{ostrovsky} says that with a new seller, the seller-optimal outcome $A_{min}$ and the buyer-optimal outcome $A_{max}$ move in the direction favorable to terminal buyers and unfavorable to terminal sellers. Symmetrically, when a terminal buyer is
added or if a seller leaves, $A_{min}$ and $A_{max}$ move in the opposite direction. In other words, more
competition on one end of an industry is bad for the agents on that end and good for the agents
on the other end.\footnote{It is also possible to prove an analogous result to Proposition \ref{ostrovsky} when choice function of terminal agents expands under IRC and full substitutability. Choice function of a terminal agent $\widehat{C}^f$ on $2^X$ is an expansion of choice function $C^f$ on $2^X$ if, for every $Y\subseteq X$, $\widehat{C}^f(Y)\subseteq C^f(Y)$ \citep{echenique2015control,chambers2017choice}. Then suppose all choice functions satisfy full substitutability and IRC and the choice function of one of the terminal buyers (sellers) expands. Then every terminal seller (buyer) prefers the new buyer-optimal and seller-optimal trail-stable outcomes to the old ones and vice versa for the buyers (sellers). This is a straightforward adaptation of Theorem 2 in \citep{chambers2017choice}.} Proposition \ref{ostrovsky} generalizes Theorem 3 in \citet{Ostr:08}. 

Now consider the following \textit{market readjustment process}: %we mean the following:
 When the new terminal seller $f'$ enters, \ch{ and we already have a trail-stable outcome $A$ with corresponding fixed point $(\dot X^B, \dot X^S)$  then let $X$ be the set of all contracts in the new network, and let us define $(\dot X'^{B},\dot X'^{S})=(\dot X^B, \dot X^S\cup X_{f'})$. Operator $\Phi' $ acts on $(X'^{B},X'^{S})$ using choice functions of $F'$.
Let $(\hat X^B, \hat X^S)$ be the fixed point of the iteration of fuction $\Phi$,  
with associated outcome
$ \hat A=\hat X^B \cap \hat X^S$ which is the result of the market readjustment process. 
} 

\begin{proposition}\label{vacancy}
Suppose that choice functions satisfy full substitutability and IRC.
Consider a trail-stable outcome $A$ with associated buyer and seller offer sets $X^B$ and $X^S$.
Suppose a new terminal seller $f'$ whose choice function is fully substitutable and satisfies IRC enters the market and let $ \hat A$ be the result of the market readjustment process.  Then, all terminal sellers prefer $A$ to $ \hat A$ and all terminal buyers  (other than $f'$) prefer $\hat A$ to $A$.\footnote{The opposite holds when $f'$ is a terminal buyer.
}
\end{proposition}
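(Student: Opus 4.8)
The strategy is to track the fixed-point iteration that defines the market readjustment process and compare it, step by step, with the iteration that produces the buyer-optimal (or seller-optimal) fully trail-stable outcomes, exploiting monotonicity (isotonicity) of the operators $\Phi$ and $\Phi'$ together with the opposition-of-interests lattice structure from Lemma \ref{fully-stable-lattice}. First I would recall that a fully trail-stable outcome $A$ corresponds to a fixed point $(\dot X^B,\dot X^S)$ of $\Phi$, and that enlarging the seller-offer coordinate to $(\dot X^B,\dot X^S\cup X_{f'})$ gives a point that dominates $(\dot X^B,\dot X^S)$ in the product order used to define $\Phi'$. Since $\Phi'$ is isotone (this is exactly the structure used to prove Lemma \ref{main}), iterating $\Phi'$ from this enlarged point converges monotonically to a fixed point $(\hat X^B,\hat X^S)$, and one should check that this fixed point is in fact reached from ``above'' on the seller side and ``below'' on the buyer side — i.e. the iteration only ever adds seller offers relative to where it could have started and only ever removes buyer offers, or vice versa, in the directions that matter.

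**Comparison step.** The key is then to relate $\hat A$ to the original $A$ using the same monotone-comparison lemma that underlies Proposition \ref{ostrovsky}. Concretely: because the process starts the $\Phi'$ iteration from a point that is seller-superior (in the terminal-superiority sense) to anything $\Phi'$ alone would reach from the bottom, the fixed point $\hat A$ it lands on is no better for terminal sellers than $A$ was, and no worse for terminal buyers other than $f'$. I would make this precise by showing $(\hat X^B,\hat X^S)$ lies between $(\dot X^B,\dot X^S\cup X_{f'})$ and the buyer-optimal fixed point of $\Phi'$ in the product lattice, and then translating lattice position into the $\succeq^S$/$\succeq^B$ order on terminal contracts via Lemma \ref{fully-stable-lattice} and the rural-hospitals-type invariance from Proposition \ref{rural} (which guarantees the relevant comparisons are well-defined on terminal agents). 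The conclusion ``all terminal sellers prefer $A$ to $\hat A$, all terminal buyers other than $f'$ prefer $\hat A$ to $A$'' is then the terminal-superiority statement $A \succeq^S \hat A$, restricted appropriately, and the symmetric claim for a new terminal buyer follows by exchanging the roles of $B$ and $S$ throughout.

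**Main obstacle.** The delicate point is not the monotonicity bookkeeping but verifying that the readjustment process — which starts from a specific, economically motivated point $(\dot X^B,\dot X^S\cup X_{f'})$ rather than from the extreme point of the lattice — actually converges to a fully trail-stable outcome of the enlarged network and that this outcome sits on the correct side of $A$. One must rule out that the iteration ``overshoots'': in principle adding $f'$'s contracts could trigger a cascade that makes some terminal seller strictly better off. The resolution is that $f'$'s contracts enter only on the seller-offer side and the CSC/SSS structure forces every subsequent adjustment to weakly expand buyer offers and weakly contract (net) seller opportunities for the incumbents; this is precisely the content of full substitutability applied along the iteration, and it is the same mechanism by which \citet{Ostr:08} proves his vacancy-chain result. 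So the heart of the argument is a careful induction on the iterates of $\Phi'$ showing the inclusions propagate in the claimed directions, after which Lemma \ref{fully-stable-lattice} and IRC deliver the stated preference comparisons; the symmetric case is immediate by the buyer/seller duality already noted in the definition of terminal superiority ($W\succeq^S A \iff A\succeq^B W$).
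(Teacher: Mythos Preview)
Your overall architecture matches the paper's: start the $\Phi'$-iteration from $(\dot X^B,\dot X^S\cup X_{f'})$, show the iterates move monotonically in $\sqsubseteq$, land at a fixed point $(\hat X^B,\hat X^S)$ with $(\dot X^B,\dot X^S\cup X_{f'})\sqsubseteq(\hat X^B,\hat X^S)$, and then translate this lattice comparison into terminal preferences. The paper secures monotonicity by one explicit computation of the first iterate---namely $\Phi'(\dot X^B,\dot X^S\cup X_{f'})=(\dot X^B\cup C^{f'}(X_{f'}),\,\dot X^S\cup X_{f'})\sqsupseteq(\dot X^B,\dot X^S\cup X_{f'})$, using that incumbents' rejection sets are unchanged when $f'$'s contracts are absent from the buyer-offer side---after which isotonicity alone propagates the inequality. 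Your ``careful induction on the iterates'' is therefore heavier than necessary, but not wrong.

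The genuine gap is in your Comparison step. You propose to convert the lattice inequality into the $\succeq^S$/$\succeq^B$ order via Lemma~\ref{fully-stable-lattice} and Proposition~\ref{rural}, but both of those results require LAD/LAS, whereas Proposition~\ref{vacancy} assumes only full substitutability and IRC. Under the stated hypotheses those lemmas are simply unavailable. The paper's actual translation uses nothing beyond IRC: for a terminal buyer $f$ one has $A_f=C^f(\dot X^B_f)$, $\hat A_f=C^f(\hat X^B_f)$, and $\dot X^B_f\subseteq\hat X^B_f$, so $C^f(\hat X^B_f)=\hat A_f\subseteq A_f\cup\hat A_f\subseteq\hat X^B_f$ and IRC yields $C^f(A_f\cup\hat A_f)=\hat A_f$; the terminal-seller case is symmetric from $\hat X^S\subseteq\dot X^S\cup X_{f'}$. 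This is exactly the content of Lemma~\ref{terminalorder} (IRC only), not Lemma~\ref{fully-stable-lattice}. Replacing your appeal to the LAD/LAS-dependent results with this direct IRC argument repairs the proof; as written, your plan would at best establish the proposition under a strictly stronger hypothesis than stated.
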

An analogous result is obtained when terminal buyers and terminal sellers exit the market so Proposition \ref{vacancy} generalizes the Theorem in \citet{HaKo:13}.

\section{Weak trail stability}\label{sec:weaktrail}
As we saw in the previous section, the entire locally blocking trail does not need to be acceptable for agents who are participating in the block. If contracts are not fulfilled immediately, then it might be important to ensure that agents want to select all the contracts along a blocking trail. Let us consider a trail $T=\{x_{1},...,x_{M}\}$ whose elements are arranged in a sequence $(x_{1},...,x_{M})$ and define $T_{f}^{\le m}=\{x_{1},...,x_{m}\}\cap T_{f}$ to be firm $f$'s contracts out of the first $m$ contracts in the trail
and $T_{f}^{\ge m}=\{x_{m},...,x_{M}\}\cap T_{f}$ to be firm $f$'s contracts out of the last $M-m+1$
contracts in the trail (where $m \in \{1,\ldots,M\}$).

%Suppose contracts only need to be fulfilled sequentially i.e. once a firm's upstream contract has been fulfilled, it immediately fulfils its downstream contract. (Alternatively, contracts further down the trail could be specified to be fulfilled later.) This is a natural assumption in sequential production networks as production may not be able to continue without inputs and inputs would not be bought without a standing order. Then firms do not need to worry about being involved in multiple chains of contracts along the trail since they never need to be fulfilled together. As such trail stability can be a useful solution concept in production networks in which production is sequential rather than (possibly) simultaneous. Trail stability may be a better solution concept for a short-run prediction of network stability whereas weak trail stability is more suitable for the long run.  It turns out that trail-stable outcomes also exist in general production networks.
\begin{definition}\label{trailstable} An outcome $A\subseteq X$ is \textit{weakly trail-stable} if 
\begin{enumerate}
\item $A$ is acceptable.
\item There is no trail $T=\{x_1,x_2,\ldots,x_M\}$, such that $T\cap
  A=\emptyset$  and %%either (a),(b1),(c) or (a),(b2),(c) holds:
\begin{enumerate}
\item $\{x_1\}$ is $(A,f_1)$-acceptable for $f_1=s(x_{1})$ and
\item At least one of the following two options holds:
\begin{enumerate}
\item   $T_{f_{m}} ^{\le m}$ is 
$(A,f_m)$-acceptable for $f_m=b(x_{m-1})=s(x_{m})$ whenever $1<m\le M$, or
\item $ T_{f_{m}}^{\ge m-1}$ is
$(A,f_m)$-acceptable for $f_m=b(x_{m-1})=s(x_{m})$ whenever $1<m\le M$
\end{enumerate}
\item $\{x_M\}$ is $(A,f_{M+1})$-acceptable for $f_{M+1}=b(x_{M})$.
\end{enumerate}
The above trail $T$ is called a \textit{sequentially blocking trail to $A$}.

\end{enumerate}
\end{definition}

The agents who are participating in a sequentially blocking trail need to choose all the contracts in the trail whenever the trail ``loops back'' to them. As the sequentially blocking trail grows, we ensure that each intermediate
agent wants to choose \textit{all} his contracts along the trail. This ensures that the sequentially blocking trail is as a whole is selected by all agents in the block. Therefore, weak trail stability might be a more suitable solution concept for cases where contracts last longer or as a long-run prediction of outcomes. The next result is immediate so we state it without proof.

\begin{proposition}\label{theorem:stable-always-weakly-trail-stable} 
Suppose that choice functions satisfy IRC. Then any stable/chain-stable outcome is weakly trail-stable. \end{proposition}
The full substitutability assumption is not required for Proposition \ref{theorem:stable-always-weakly-trail-stable}, but, of course, the existence of stable or chain-stable outcome is not guaranteed in trading networks even under full substitutability. On the other hand, without full substitutability, trail-stable outcomes may not be weakly trail-stable as the following example shows.

\begin{example}[Trail stability does not imply weak trail stability without full substitutability]
Consider agents and contracts described in Example 1 and Figure \ref{fig:hatkom}.
Agents have the following preferences:
\begin{align*}
\succ_m&:\{ w\} \succ_m \emptyset\\
\succ_i&: \{ x\} \succ_i \emptyset\\
\succ_k&: \{ z,y\} \succ_k \emptyset\\
\succ_j&: \{w,x,z,y\}\succ_j\{w,z\}\succ_j\emptyset.
\end{align*}
and other outcomes are not acceptable. The preferences of all agents, except $j$, are fully substitutable. The empty set is a trail-stable outcome, but it is not weakly trail-stable since $\{w,z,x,y\}$ is a sequentially blocking trail when $m$ makes the first offer.
\end{example}

However, under full substitutability trail-stable outcomes are always weakly trail-stable.

\begin{proposition}\label{path-path}
Suppose that choice functions satisfy full substitutability and IRC. Then any trail-stable outcome is weakly trail-stable.
\end{proposition}

The existence of weakly trail-stable outcomes under full substitutability is therefore an immediate consequence of Theorem \ref{main} and Proposition \ref{path-path}.

\begin{corollary}\label{theorem:on-odd-cycle-contracts} 
Suppose that choice functions satisfy full substitutability and IRC. Then there exists a weakly trail-stable outcome. \end{corollary}

One may wonder whether under full substitutability trail-stable and weakly trail-stable outcomes in fact coincide. This is not the case---the converse of Proposition \ref{path-path} is false as the next example shows.
\begin{example}[Weakly trail-stable outcomes are not always trail-stable even under full substitutability] \label{ex:not-fully-weakly trail-stable}
Consider agents and contracts described in Example 1 and Figure \ref{fig:hatkom}.
Agents have the following preferences that induce fully substitutable choice functions:
\begin{align*}
\succ_m&:\{ w\} \succ_m \emptyset\\
\succ_i&: \{ x\} \succ_i \emptyset\\
\succ_k&: \{ z,y\} \succ_k \emptyset\\
\succ_j&: \{z,y\}\succ_j\{w,z\}\succ_j\{y,x\}\succ_j\emptyset.
\end{align*}
and other outcomes are not acceptable.
For outcome $A=\emptyset$, the trail $\{w,z,y,x\}$ is locally blocking
trail but not trail-blocking. Therefore, weakly trail-stable outcomes are
$\emptyset$ and $\{z,y\}$ but the only trail-stable outcome is
$\{z,y\}$. 
\end{example}

Therefore, under full substitutability, trail stability is a refinement of weak trail stability.

Figure \ref{fig:relationship} summarizes the relationships between various solution concepts in general trading networks that we have established in this paper. Stable and chain-stable outcomes may not exist even under full substitutability and they are not equivalent without the additional LAD/LAS assumption (see Example 1 in \citealp{HaKoNiOsWe:14}).

%\begin{figure}[t]
%\centering{}\scalebox{0.6}{\includegraphics{StabilitiesNonSep}}
%\end{figure}

\begin{figure}[t]
\centerline{\xymatrix{
& \text{Stable} \ar[ld]^>>>>>>>>>>>>>>>>>{}_{\text{Pr.8}} \ar@{-->}[dd]^>>>>>>{\text{Pr.1}}_<<<<<<<{\text{Ex.1,2}} \ar@/_1pc/[rd] &\\
\text{Weakly trail-stable} & & \text{Chain-stable} \ar[ll]_>>>>>>>>{}^<<<<<<<<{\text{Pr.8}}\ar@{-->}[dl]^{\text{Pr.1}} \ar@{.>}@/_1pc/[ul]|>>>>>>>>>{\text{\citet{HaKoNiOsWe:14}}}\\
&\text{Trail-stable}\ar@{-->}[ul]^{\text{Ex.3,4}}_{\text{Pr.10}}&}}

\protect\caption{Relationships between solution concepts in general trading networks. Solid line: holds under IRC. Dashed line: holds under full substitutability and IRC. Dotted line: holds under full substitutability, IRC, and LAD/LAS. Arrows show which propositions establish the relationship and which examples examine the assumptions or the converse.\label{fig:relationship}} 

\end{figure}

\section{Conclusion}\label{concsec}
Stability is an appealing solution concept, but in general trading networks stable outcomes may not exist. In this paper, we introduced a new solution concept for general trading networks, called trail stability. We showed that any trading network has a trail-stable outcome when
choice functions are fully substitutable. Indeed, full substitutability is crucial for existence of trail-stable outcomes since previous maximal domain results for many-to-many matching markets apply in our case (see, for example, \citet[Theorem 6]{HaKo:12} and \citet[Theorem 2]{HaKo:11}).\footnote{When firms have quasilinear utility functions, (full) substitutability
is not necessary for competitive equilibrium and even when all agents
have complementary preferences competitive equilibrium may exist \citep{BaKl:13,Drex:13,HaKo:15,Teyt:14}.} Trail-stable outcomes have a natural lattice structure and inherit a host of properties studied in two-sided and supply-chain settings. We also considered an alternative solution concept---weak trail stability---which is implied by trail stability under full substitutability. 
Trail stability is a attractive solution concept for trading networks: in a version of our model with continuous prices, \citet*{FlJaJaTe:17} show that competitive equilibrium outcomes are trail-stable and under full substitutability essentially any trail-stable outcome can be supported by competitive equilibrium prices. 

There are at least four fruitful areas for further research. The first would be to examine the structure of weakly trail-stable outcomes. Second, one might look for weaker sufficient conditions on preferences to establish the coincidence between weakly trail-stable, trail-stable, stable outcomes (Appendix \ref{sec:coincide}). Third, it would be interesting to find more market design applications of our model \citep{morstyn2018bilateral}. Finally, it would be useful to understand how our model can be tested and estimated empirically \citep{fox2017specifying}.

\newpage
\appendix
%%%
%\iffalse
%%%
\singlespacing
\section{Relationship to previous work}\label{app:previous}
\begin{figure}[h]
\begin{tabular}{>{\centering}p{5cm}|>{\centering}p{2cm}>{\centering}p{3cm}>{\centering}p{2cm}>{\centering}p{3cm}}
 & {\small{}General networks} & {\small{}General choice\\functions} & {\small{}Existence and structure} & {\small{}New solution concepts used}
 \tabularnewline
\hline 
\hline 

{\small{}\citet{Ostr:08}} & \ding{55}\\acyclic & \ding{51} & \ding{51} & {\small{} 
Chain-stable} \tabularnewline
\hline 
{\small{}\citet{West:10}} & \ding{55}\\acyclic & \ding{51} & \ding{51} & {\small{}Group-stable,
Core} \tabularnewline
\hline 
{\small{}\citet{HaKo:12}} & \ding{55}\\acyclic & \ding{51}& \ding{51} & {\small{}Stable} \tabularnewline
\hline 
{\small{}\citet{HaKoNiOsWe:11},\\ \citet{HaKo:15}}  & \ding{51} & \ding{55}\\quasilinear (TU)& \ding{51} & {\small{}Strong group-stable} \tabularnewline
\hline 
{\small{}\citet{HaKoNiOsWe:14}} & \ding{51} & \ding{51}\\ & \ding{55}\\ & {\small{}Chain-stable} \tabularnewline
\hline 
{\small{}This paper} & \ding{51} & \ding{51} & \ding{51} & {\small{}Trail-stable\\Weakly trail-stable}\tabularnewline
\end{tabular}
\protect\caption{Relationship to previous work.\label{contribution}}
\end{figure}

\begin{figure}[h]
\small{
\begin{tabular}{>{\centering}p{5cm}|>{\centering}p{2cm}>{\centering}p{4cm}>{\centering}p{5cm}}
Paper & Theorem & Description & Generalization in this paper \tabularnewline
\hline 
\hline 
\citet{Ostr:08} & Theorem 1 & Existence of stable outcomes & Theorem \ref{main} and Proposition \ref{theorem:on-odd-cycle-contracts} \tabularnewline
\hline 
\citet{Ostr:08};\\ \citet{HaKo:12} & Theorem 2; Theorem 4 & Buyer- and seller-optimality & Proposition \ref{weaker-lattice} and Proposition \ref{fully-stable-lattice} \tabularnewline
\hline 
\citet{HaKo:12} & Theorem 8 & Rural hospitals theorem & Proposition \ref{rural} \tabularnewline
\hline 
\citet{HaKo:12}  & Theorem 10 & Strategy-proofness & Proposition \ref{Strategyproof} \tabularnewline
\hline
\citet{Ostr:08} & Theorem 3  & Firm entry & Proposition \ref{ostrovsky} \tabularnewline
\hline 
\citet{HaKo:13} & Theorem & Vacancy chain dynamics & Proposition \ref{vacancy} \tabularnewline
%\hline 
%\citet{HaKoNiOsWe:11} & Theorem 1 & Competitive equilibrium & Theorem \ref{EquilibriumTheorem}
%\tabularnewline
\end{tabular}
}
\protect\caption{Previous results generalized in this paper to a trading network setting with general choice functions. \label{generalized}}
\end{figure}
\newpage
\section{Proofs}\label{app:proofs}

We first prove Theorem \ref{main} on the existence of trail-stable outcomes. We then prove Propositions \ref{weaker-lattice} and \ref{fully-stable-lattice}---these are the most technically challenging results.

The lattice structure of trail-stable outcomes immediately gives the rural hospitals theorem and the strategyproofness result for terminal agents (Propositions \ref{rural} and \ref{Strategyproof}). Then we prove Propositions \ref{ostrovsky} and \ref{vacancy} which examine comparative statics of trail-stable outcomes. 

Finally, we prove Propositions \ref{prop:stable-is-trail-stable}, \ref{path-path}, \ref{fully-vs-trail}, and \ref{group-vs-trail} which describe the relationships between stable, trail-stable, and weakly trail-stable outcomes (proof of Proposition \ref{theorem:stable-always-weakly-trail-stable} is immediate).

Note that we sometimes refer to singleton sets of contracts $\{x\}$ as ``contract $x$'' to avoid saying ``a set containing contract $x$''.

\subsection{Proof of Theorem \ref{main}}

Consider $Y^{B}$ and $Z^{S}$, which are subsets of $X$, and represent
sets of available upstream and downstream contracts for all agents,
respectively. Define a lattice $L$ with the ground set $X\times X$
with an order $\sqsubseteq$ such that $(Y^{B},Z^{S})\sqsubseteq(Y'^{B},Z'^{S})$
if $Y^{B}\subseteq Y'^{B}$ and $Z^{S}\supseteq Z'^{S}$. 

Furthermore, define a mapping $\Phi$ as follows: 
\begin{eqnarray*}
\Phi_{B}(Y^{B},Z^{S}) & = & X\setminus R_{S}(Z^{S}|Y^{B})\\
\Phi_{S}(Y^{B},Z^{S}) & = & X\setminus R_{B}(Y^{B}|Z^{S})\\
\Phi(Y^{B},Z^{S}) & = & (\Phi_{B}(Y^{B},Z^{S}),\Phi_{S}(Y^{B},Z^{S}))
\end{eqnarray*}

where $R_{S}(Z^{S}|Y^{B})=\bigcup_{f\in F}R_{S}^{f}(Z^{S}|Y^{B})$
and $R_{B}(Y^{B}|Z^{S})=\bigcup_{f\in F}R_{B}^{f}(Y^{B}|Z^{S}) $.
Clearly, $\Phi$ is isotone \citep{Flei:03,Ostr:08,HaKo:12} on $L$.
We rely on the following well-known fixed point theorem of Tarski.

\begin{theorem}\citep{%Knas:28,
Tars:55}\label{Knaster-Tarski} Let
$L$ be a complete lattice and let $\Phi:L\rightarrow L$ be an isotone
mapping. Then the set of fixed points of $\Phi$ in $L$ is also a
complete lattice.\end{theorem}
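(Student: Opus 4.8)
The plan is to prove the two halves separately: first that $\Phi$ has a greatest and a least fixed point, and then that \emph{every} subset of the fixed-point set admits a supremum and an infimum that are themselves fixed points.

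For the first half, I would consider the set $H=\{x\in L: x\leq\Phi(x)\}$ of points that $\Phi$ does not decrease. Since $L$ is complete it has a bottom element $\bot$, and $\bot\leq\Phi(\bot)$ forces $\bot\in H$, so $H\neq\emptyset$. Let $u=\sup H$ (which exists by completeness). For any $x\in H$ we have $x\leq u$, hence $x\leq\Phi(x)\leq\Phi(u)$ by isotonicity, so $\Phi(u)$ is an upper bound of $H$ and therefore $u\leq\Phi(u)$, i.e. $u\in H$. Applying $\Phi$ again gives $\Phi(u)\leq\Phi(\Phi(u))$, so $\Phi(u)\in H$ and thus $\Phi(u)\leq u$; combined with $u\leq\Phi(u)$ this yields $\Phi(u)=u$. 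Any fixed point $x$ lies in $H$, so $x\leq u$, making $u$ the greatest fixed point. The dual argument with $\{x\in L:\Phi(x)\leq x\}$ and its infimum produces the least fixed point.

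For the second half, let $P$ denote the fixed-point set and take an arbitrary $S\subseteq P$. The naive candidate $\sup_L S$ need not be fixed, so I would instead pass to the interval $[w,\top]$, where $w=\sup_L S$ and $\top$ is the top element of $L$; this interval is itself a complete lattice. The key verification is that $\Phi$ maps $[w,\top]$ into itself: for $x\geq w$ and each $s\in S$ we have $s=\Phi(s)\leq\Phi(x)$, so $\Phi(x)$ is an upper bound of $S$ and hence $\Phi(x)\geq w$. Applying the first half to the isotone self-map $\Phi|_{[w,\top]}$ yields a least fixed point $s^{*}$ in this interval. Then $s^{*}$ is a genuine fixed point of $\Phi$ lying above every $s\in S$, and any fixed point $y$ above all of $S$ satisfies $y\geq w$, so $y\in[w,\top]$ and $s^{*}\leq y$ by minimality; thus $s^{*}=\sup_P S$. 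The dual construction on $[\bot,\inf_L S]$ gives $\inf_P S$, and since every subset of $P$ then has both a supremum and an infimum in $P$, the set $P$ is a complete lattice.

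The main obstacle is the second half: the failure of $\sup_L S$ to be a fixed point means one cannot compute suprema inside $P$ directly. The decisive idea is to localize to the sublattice above $w$ and reuse the least-fixed-point part of the argument there, which in turn hinges on the (routine but essential) check that $\Phi$ preserves that sublattice. Everything else reduces to bookkeeping with isotonicity and the defining property of suprema and infima.
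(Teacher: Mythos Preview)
Your proof is correct and is essentially the classical argument due to Tarski. Note, however, that the paper does not supply its own proof of this statement: it is quoted as a known result with the citation to Tarski (1955), and is invoked only as a tool in the proof of Lemma~\ref{main} and later lemmas. So there is nothing in the paper to compare your argument against beyond observing that you have reproduced the standard textbook proof.
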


Subsequent to the circulation of the first draft of this paper, \citet{adachi2017stable} gave an alternative proof of this Theorem \ref{main} using the $T$-operator defined by \citet{Ostr:08}.

\begin{proof}[Proof of Theorem \ref{main}]

Existence of fixed-points of $\Phi$ follows from Theorem \ref{Knaster-Tarski} since $\left(X\times X,\sqsubseteq\right)$ is a complete lattice.%
\footnote{Hence, we do not actually require the assumption of the finiteness
of contracts as long as lattice $L$ is appropriately defined. However,
we maintain this assumption for ease of comparison with previous results.%
}

We claim that every fixed point $(\dot X^{B},\dot X^{S})$ of $\Phi$
corresponds to an outcome $\dot X^{B}\cap \dot X^{S}=A$ that is {trail}-stable.
First, we show that $A$ is acceptable. We claim that if $(\dot X^{B},\dot X^{S})$ is a fixed point then $\dot X^S \cup \dot X^B=X$. To see this suppose for contradiction that there is a contract $x\notin \dot X^S \cup \dot X^B$. Then  $x \notin R_{S}(\dot X^{S}|\dot X^{B})$ therefore $x \in X\setminus R_{S}(\dot X^{S}|\dot X^{B})=\dot X^B$. So it must be that $x \in \dot X^S \cup\dot  X^B$.
This implies that $ R_{S}(\dot X^{S}|\dot X^{B})=X\setminus \dot X^B=\dot X^S \setminus A$ so $C_{S}(\dot X^{S}|\dot X^{B})=A$ and similarly  $C_{B}(\dot X^{B}|\dot X^{S})=A$. From this, we can see that $A$ is acceptable.

Second, we show that $A$ is trail-stable. This is similar to
Step 1 of the Proof of Lemma 1 in \citet{Ostr:08}. Suppose that $T=\{x_{1},\ldots,x_{M}\}$
is a locally blocking trail and assume towards a contradiction that  $T\cap A=\emptyset$.
Since we have that $x_{1}\in C_{S}^{s(x_{1})}(A\cup \{x_{1}\}|A)$, we
must have that $x_{1}\in C_{S}^{s(x_{1})}(\dot X^{S}\cup \{x_{1}\}|A)$. 
Since if  $C_{S}^{s(x_{1})}(\dot X^{S}\cup \{x_{1}\}|A) \subseteq \dot X^S$ then by IRC $C_{S}^{s(x_{1})}(\dot X^{S}\cup \{x_{1}\}|A)=A$, therefore $ C_{S}^{s(x_{1})}(A\cup \{x_{1}\}|A)=A$. 
 We also have that $x_{1}\in C_{S}^{s(x_{1})}(\dot X^{S}\cup \{x_{1}\}|\dot X^{B})$ by
CSC. If $x_{1}\in \dot X^{S}$, then $x_{1}\in \dot X^{B}=X\setminus R_{S}(\dot X^{S}|\dot X^{B})$.
But we assumed that $x_{1}\notin A$, so $x_{1}\in\dot  X^{B}$. 

Now, consider $x_{2}$. By definition of a locally blocking trail, we have that  $x_{2}\in C_{S}^{s(x_{2})}(A\cup \{x_{2}\}|A\cup \{x_{1}\})$. 
Once again by full substitutability  and IRC, we obtain that 
and $x_{2}\in C_{S}^{s(x_{2})}(\dot X^S\cup \{x_{2}\}|\dot X^B\cup \{x_{1}\})$.
If $x_{2}\in \dot X^S$, then $x_{2}\in \dot X^B=X\setminus R_{S}(\dot X^S|\dot X^B)$.
But we assumed that $x_{2}\notin A$, so $x_{2}\in \dot X^B$.
Now proceed by induction, 
we show that every $x\in T$ is in $\dot X^B$. Consider the last contract $x_M$. % is in $X_B$
Since  $x_m \in C_B^{b(x_M)} (A\cup x_{M}|A)$, using the same argument we had for $x_1$, we get that $x_M\in \dot X^S$. A contradiction.

Now we show that every trail-stable outcome corresponds to a fixed point.

Suppose $A$ is trail-stable. \ch{For every $x_i \notin A$, 
if there exists a trail $\{ x_1, x_2, \ldots, x_i \}$ such that
\begin{itemize}
\item $\{x_{1}\}$ is $(A,s(x_{1}))$-acceptable, and
\item $\{x_{m-1}, x_{m}\}$ is $(A,f_m)$-acceptable for
$f_{m}=b(x_{m-1})=s(x_{m})$ whenever $1<m\le i$, 
\end{itemize}
then let $x_i \in X_0^B$. Otherwise, let $x_i \in X_0^S$}.
Let $\dot X^B=A\cup X_0^B$ and $\dot X^S=A\cup X_0^S$.
Clearly $\dot X^S  \cup \dot X^B=X$. 

Outcome $A$ is acceptable, so $C^f(A)=A_f$ for all $f \in F$.
For every firm $f$, if $f=s(x)$ and $x\in \dot X^S\setminus A$ then  $x \notin C^f(A \cup  \{x\}) $ otherwise $x$ would be in $\dot X^B$. By SSS, we have that $C^f_S(\dot X^S |A)=A$. 
And if $f=b(y)$ and $y\in \dot X^B \setminus A$ then  $y \notin C^f(A \cup  \{y\}) $ otherwise the trail ending in $y$ would be a locally blocking trail.   By SSS, we then also have that $C^f_B(\dot X^B |A)=A$. Moreover, $\{x, y\} \nsubseteq C^f(A \cup \{x,y\})$ otherwise $x$ would be in  $\dot X^B$. %we cannot continue that path with some edge from $\dot X^S$, so  $C_S(\dot X^S |\dot X^B)=A$, $C_B(\dot X^B |\dot X^S)=A$. (???)
Putting these statements together, we have that $C_S(\dot X^S |\dot X^B)=A$ and $C_B(\dot X^B |\dot X^S)=A$.
Therefore  $R_S(\dot X^S |\dot X^B)=\dot X^S \setminus A$, $R_B(\dot X^B |\dot X^S)=\dot X^B \setminus A$, so $X \setminus R_S(\dot X^S |\dot X^B)=\dot X^B$, $X \setminus R_B(\dot X^B |\dot X^S)=\dot X^S$. \end{proof}

%\iffalse
%
%
%\subsection{Theorems \ref{LAD Theorem} and \ref{StrategyproofTheorem}}
%
%\begin{proof}[Proof of Theorem 2]
%
%Follow proof of Theorem 8 in \citet{HaKo:12} word for word.
%
%\end{proof}
%
%\begin{proof}[Proof of Theorem 3]
%
%First note that under unit demands, trail-stable outcomes are
%equivalent to weakly trail-stable outcomes. Therefore, weakly trail-stable outcome
%inherit all the properties of trail-stable outcomes from Theorem
%1. Now, to prove the remainder of the theorem, follow the proof of
%Theorem \ref{main} in \citet{HaKo:09} word for word.
%
%\end{proof}
%\fi

\subsection{Proofs of Propositions \ref{weaker-lattice} and \ref{fully-stable-lattice}}
First, we prove that fixed points of an operator $\Phi$ (defined below) form a complete sublattice (Sublattice Theorem \ref{thm:sublattice}) extending Theorems 7.3 and 7.5 from \citet{Flei:03}. Then we show two lattice properties for the terminal agents (Terminal Sublattice Theorem \ref{terminallattice} and Terminal Superiority Lemma \ref{terminalorder}). We then use these three results to prove Propositions \ref{weaker-lattice} and \ref{fully-stable-lattice}.
\subsubsection{The sublattice property of fixed points}

First note an immediate implication of the Laws of Aggregate Demand and Supply (LAD/LAS) that we have already noted in the proof of Lemma \ref{ladlasirc}. If the choice functions of firm $f$ satisfy LAD/LAS, for sets of contracts  $Y' \subseteq Y \subseteq X^B_f$, and $Z \subseteq Z' \subseteq X^S_f$ (i.e. $(Y', Z') \sqsubseteq (Y,Z)$) then $|C_{B}^{f}(Y'|Z')|-|C_{S}^{f}(Z'|Y')|\leq|C_{B}^{f}(Y|Z)|-|C_{S}^{f}(Z|Y)|$.

For every firm $f$ we define a weight function on the contracts in $X_f$, namely let $w(x)=1$  if $x \in X^B_f$ and $w(x)=-1$ if $x \in X^S_f$. So $w(C^f(Y,Z))=|C_{B}^{f}(Y|Z)|-|C_{S}^{f}(Z|Y)|$.\footnote{The weight function can be defined more generally, see \citet{Flei:03}.}
Therefore if $C^f$ satisfies LAD/LAS, then $(Y', Z') \sqsubseteq (Y,Z)$ implies $w(C^f(Y', Z')) \le w(C^f (Y,Z))$.

Let $ Y$ and $Y'$ be subsets of $X^B_f$,  $Z$ and $Z'$ are subsets of  $X^S_f$. 
We denote the complement of $Z$ in $X^S_f$ with $\overline{Z}=X^S_f\setminus Z$.
Define the operation $(Y, Z) \wsm (Y',Z')= (Y \setminus Y', \overline {Z' \setminus Z)}$. 
\ch{For a given firm $f$, 
we call a set function $R: 2^X_f \to 2^X_f$ } a \emph{w-contraction} if for every $(Y', Z') \sqsubseteq (Y,Z)$ pair,  $w(R(Y,Z) \wsm R(Y',Z')) \leq w((Y,Z) \wsm (Y',Z'))$

Let us describe some properties of this $\wsm$ operation:
\begin{lemma}\label{wsm}
For a firm $f$, let $ Y$ and $Y'$ be subsets of $X^B_f$,  $Z$ and $Z'$ are subsets of  $X^S_f$ such that  $(Y', Z') \sqsubseteq (Y,Z)$. Then the following statements hold:
\begin{enumerate}
\item
   $w((Y,Z) \wsm (Y',Z')) = w((Y,Z))-w((Y',Z'))-|X^S_f|$.
\item
For any $(A,B)$ pair, $w((A,B) \wsm (Y,Z)) \le w((A,B) \wsm (Y',Z'))$. 
\item
If $(Y,Z) \sqsubseteq (A,B)$ then  the $w((A,B) \wsm (Y,Z)) = w((A,B) \wsm (Y',Z'))$ equality  implies $(Y', Z') = (Y,Z)$.
\end{enumerate}
\end{lemma}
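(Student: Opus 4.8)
The plan is to reduce all three parts to elementary cardinality bookkeeping, relying on three observations: first, for $Y\subseteq X^B_f$ and $Z\subseteq X^S_f$ one has $w((Y,Z))=|Y|-|Z|$; second, for $A\subseteq X^B_f$ and $W\subseteq X^S_f$ one has $w\big((A,\overline{W})\big)=|A|-|X^S_f|+|W|$; and third, $(Y',Z')\sqsubseteq(Y,Z)$ means precisely $Y'\subseteq Y$ together with $Z\subseteq Z'$. With these in hand the statements become arithmetic identities and inequalities. I would also note that this lemma is a stepping stone: parts (1)--(2) say that the $\wsm$-``differences'' behave monotonically, which is what is needed to show that $w$-contractions compose and descend to fixed points, and part (3) is the strict version used to pin down equality of terminal contracts.

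For part (1), since $Y'\subseteq Y$ and $Z\subseteq Z'$ the sets $Y\setminus Y'$ and $Z'\setminus Z$ have sizes $|Y|-|Y'|$ and $|Z'|-|Z|$; applying the second observation to $(Y,Z)\wsm(Y',Z')=(Y\setminus Y',\overline{Z'\setminus Z})$ gives the value $(|Y|-|Y'|)-|X^S_f|+(|Z'|-|Z|)$, which rearranges to $(|Y|-|Z|)-(|Y'|-|Z'|)-|X^S_f|=w((Y,Z))-w((Y',Z'))-|X^S_f|$. For part (2), I expand both sides via the second observation: $w\big((A,B)\wsm(Y,Z)\big)=|A\setminus Y|-|X^S_f|+|Z\setminus B|$ and $w\big((A,B)\wsm(Y',Z')\big)=|A\setminus Y'|-|X^S_f|+|Z'\setminus B|$. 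From $Y'\subseteq Y$ we get $A\setminus Y\subseteq A\setminus Y'$, and from $Z\subseteq Z'$ we get $Z\setminus B\subseteq Z'\setminus B$, so each term on the left is bounded above by the corresponding term on the right, yielding the inequality. (Note that no comparability of $(A,B)$ with the other pairs is needed here, since $\wsm$ is defined for arbitrary pairs.)

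For part (3), equality in the inequality of part (2) forces $|A\setminus Y|=|A\setminus Y'|$ and $|Z\setminus B|=|Z'\setminus B|$; together with the containments $A\setminus Y\subseteq A\setminus Y'$ and $Z\setminus B\subseteq Z'\setminus B$ this upgrades to the set equalities $A\setminus Y=A\setminus Y'$ and $Z\setminus B=Z'\setminus B$. Now I invoke the extra hypothesis $(Y,Z)\sqsubseteq(A,B)$, i.e. $Y\subseteq A$ and $B\subseteq Z$: since $Y'\subseteq Y\subseteq A$, taking complements within $A$ in $A\setminus Y=A\setminus Y'$ gives $Y=Y'$; and since $B\subseteq Z\subseteq Z'$, adjoining $B$ to both sides of $Z\setminus B=Z'\setminus B$ gives $Z=Z'$. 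Hence $(Y',Z')=(Y,Z)$, as claimed.

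The only real care required is bookkeeping: the second coordinate is ordered by reverse inclusion and then passed through a complement, so it is easy to flip a sign or an inclusion; writing $(Y',Z')\sqsubseteq(Y,Z)\sqsubseteq(A,B)$ explicitly as the chain $Y'\subseteq Y\subseteq A$ and $B\subseteq Z\subseteq Z'$ at the outset removes any ambiguity. Beyond that there is no genuine obstacle — the lemma is purely combinatorial.
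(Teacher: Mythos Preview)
Your proof is correct and follows essentially the same approach as the paper's: both arguments reduce to elementary cardinality bookkeeping via $w((Y,Z))=|Y|-|Z|$ and the expansion of $\wsm$, then use the containments $Y'\subseteq Y$ and $Z\subseteq Z'$ to get the inequality in part (2) and the additional hypothesis $Y\subseteq A$, $B\subseteq Z$ to upgrade equality of cardinalities to equality of sets in part (3). Your version is slightly more explicit in justifying why equal sizes plus containment force set equality, but the substance is identical.
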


\begin{proof}[Proof of Lemma \ref{wsm}]
Let us tackle each statement separately:
\begin{enumerate}
\item
$w((Y,Z) \wsm (Y',Z')) = |Y\setminus Y'|-|\overline{Z'\setminus Z}|=|Y|-|Y'|-|X^S_f|+|Z'|-|Z|= w((Y,Z))-w((Y',Z'))-|X^S_f|$.
\item
Since $Y \supseteq Y'$, this implies $A\setminus Y \subseteq A \setminus Y'$,
and similarly $Z \subseteq Z'$ gives us $Z\setminus B \subseteq Z' \setminus B$, so  $\overline{Z\setminus B} \supseteq \overline {Z'\setminus B}$, therefore $w((A,B) \wsm (Y,Z)) =|A\setminus Y|- |\overline{Z \setminus B}|\le |A\setminus Y'|- |\overline{Z' \setminus B}|=w((A,B) \wsm (Y',Z'))$
\item
If $w((A,B) \wsm (Y,Z)) = w((A,B) \wsm (Y',Z'))$ then equality must hold at $|A\setminus Y|= |A\setminus Y'|$ and 
$|\overline{Z \setminus B}|=|\overline{Z' \setminus B}|$. 
Since $Y' \subseteq Y \subseteq A$ and $Z' \supseteq Z \supseteq B$, we get that $Y=Y'$ and $Z=Z'$.\qedhere\end{enumerate} \end{proof}

\begin{lemma}\label{contraction}
Suppose that the choice function of $f \in F$ satisfies full substitutability and LAD/LAS. Then the rejection function $R^f$ is  $ \sqsubseteq$-isotone and a $w$-contraction.
\end{lemma}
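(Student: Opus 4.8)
The plan is to prove the two assertions separately: $\sqsubseteq$-isotonicity of $R^f$, which uses only full substitutability, and the $w$-contraction property, which additionally invokes the rephrased Laws of Aggregate Demand and Supply together with the bookkeeping identity of Lemma \ref{wsm}(1).

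For isotonicity, I would take $(Y',Z')\sqsubseteq(Y,Z)$, i.e. $Y'\subseteq Y$ and $Z\subseteq Z'$, and establish the two component inclusions that together say $R^f(Y',Z')\sqsubseteq R^f(Y,Z)$, namely $R^f_B(Y'|Z')\subseteq R^f_B(Y|Z)$ and $R^f_S(Z|Y)\subseteq R^f_S(Z'|Y')$. The first follows by chaining SSS(a) (applied with $Y'\subseteq Y$) and CSC(a) (applied with $Z\subseteq Z'$): $R^f_B(Y'|Z')\subseteq R^f_B(Y|Z')\subseteq R^f_B(Y|Z)$. The second is the symmetric chain $R^f_S(Z|Y)\subseteq R^f_S(Z'|Y)\subseteq R^f_S(Z'|Y')$, using SSS(b) and then CSC(b).

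For the $w$-contraction property, fix $(Y',Z')\sqsubseteq(Y,Z)$. Since $R^f_B(Y|Z)=Y\setminus C^f_B(Y|Z)$ with $C^f_B(Y|Z)\subseteq Y$, and analogously $R^f_S(Z|Y)=Z\setminus C^f_S(Z|Y)$ with $C^f_S(Z|Y)\subseteq Z$, I would first record the identity $w(R^f(Y,Z))=|R^f_B(Y|Z)|-|R^f_S(Z|Y)|=w((Y,Z))-w(C^f(Y,Z))$, and likewise for $(Y',Z')$. By the isotonicity just proved, $R^f(Y',Z')\sqsubseteq R^f(Y,Z)$, so Lemma \ref{wsm}(1) applies both to the pair $\big(R^f(Y,Z),R^f(Y',Z')\big)$ and to $\big((Y,Z),(Y',Z')\big)$; subtracting the two instances of that identity, the $-|X^S_f|$ terms cancel and
\begin{align*}
& w\big(R^f(Y,Z)\wsm R^f(Y',Z')\big)-w\big((Y,Z)\wsm(Y',Z')\big) \\
& \qquad =\big[w(R^f(Y,Z))-w((Y,Z))\big]-\big[w(R^f(Y',Z'))-w((Y',Z'))\big]=w(C^f(Y',Z'))-w(C^f(Y,Z)).
\end{align*}
Since $(Y',Z')\sqsubseteq(Y,Z)$, the rephrased LAD/LAS gives $w(C^f(Y',Z'))\le w(C^f(Y,Z))$, so the right-hand side is $\le 0$, which is precisely the required inequality.

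I do not expect a genuine obstacle here; the only thing to watch is the sign bookkeeping of the $\pm1$ weights and keeping straight which of SSS/CSC governs the upstream versus the downstream rejection set and in which direction the inclusions run. All the real content is already packaged in the two tools above — monotonicity of $w\circ C^f$ along $\sqsubseteq$ (the rephrased LAD/LAS) and Lemma \ref{wsm}(1) — and the argument telescopes cleanly once the identity $w(R^f(Y,Z))=w((Y,Z))-w(C^f(Y,Z))$ is in hand.
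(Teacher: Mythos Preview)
Your proposal is correct and follows essentially the same approach as the paper: the isotonicity is derived from chaining SSS and CSC (the paper merely cites this as ``seen earlier''), and the $w$-contraction is obtained from Lemma~\ref{wsm}(1) together with the identity $w(R^f(Y,Z))=w((Y,Z))-w(C^f(Y,Z))$ and the rephrased LAD/LAS inequality $w(C^f(Y',Z'))\le w(C^f(Y,Z))$. The only cosmetic difference is that the paper writes the computation as a single chain with an explicit $+|X^S_f|$ on each side, whereas you subtract the two instances of Lemma~\ref{wsm}(1) so that the $|X^S_f|$ terms cancel up front.
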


\begin{proof}[Proof of Lemma \ref{contraction}]
Let $Y$ and  $Y'$ be subsets of $ X^B_f$ and $Z$ and $Z'$ are be of $ X^S_f$, and moreover let
$(Y', Z') \sqsubseteq (Y,Z)$.\\
We have seen earlier that $R^f$ is  $ \sqsubseteq$-isotone, so $R^f(Y', Z') \sqsubseteq  R^f(Y,Z)$.
To prove that it is $w$-contraction, 
 $w(R^f(Y,Z) \wsm R^f(Y',Z'))+|X^S_f|=w(R^f(Y,Z)) -w( R^f(Y',Z'))= w((Y,Z) \setminus C^f(Y,Z))-w((Y', Z')\setminus C^f(Y',Z'))=w(Y,Z) -w(C^f(Y,Z))- w(Y',Z') +w( C^f(Y',Z')) \leq w(Y,Z) - w(Y',Z')=w((Y,Z) \wsm (Y', Z'))+|X^S_f|$. We used that $w(C^f(Y', Z')) \le w(C^f (Y,Z))$.
If we subtract $|X^S_f|$ from both sides, we get that\\ $w(R^f(Y,Z) \wsm R^f(Y',Z'))\leq w((Y,Z) \wsm (Y', Z'))$, so $R^f$ is indeed a $w$-contraction.
\end{proof}

We will work on the $(2^{(X,X)} , \wcup, \wcap)$ lattice. We can imagine it as a network that contains exactly two (unrelated) copies of each contract (two half-contracts), \ch{one for the buyer and one for the seller of the contract.}

Now the $C^f$ choice functions of the firms are defined over  disjoint set of contracts, so for every $Y \subseteq (X,X)$ we can define $C(Y)= \bigcup_{f\in F} C^f(Y_f) $. Similarly  $R(Y)= \bigcup_{f\in F} R^f(Y_f) $. 
\ch{On this whole network,  
we call a set function $R: 2^{(X,X)} \to 2^{(X,X)}$  a \emph{w-contraction} if for every firm $f$ the corresponding $R_f$ was a $w$-contraction. }

%$(Y', Z') \sqsubseteq (Y,Z)$ pair,  $w(R(Y,Z) \wsm R(Y',Z')) \leq %w((Y,Z) \wsm (Y',Z'))$

Let us denote the set of the starting half-contracts (seller's side) with $X^S_F= \bigcup_{f \in F} X^S_f$, and the set of ending half-contracts (buyer's side) with $X^B_F= \bigcup_{f \in F} X^B_f$. Now %%$(X,X)= X^S_F \cup X^B_F$ and 
 $|X^S_F|=|X^B_F|=|X|$.

% The other operation is the following.
Let $Y \subseteq X^B_F$ and  $Z \subseteq X^S_F$.
 The \emph{dual} of $(Y,Z)$  is what we get by switching the two parts.  We denote it with $(Y,Z)^*=(Z,Y)$. 
%Formally, for any contract $a=uv$, let $a_v \in Y \Leftrightarrow a_u \in Y^*$.

%We will work on the $(2^{(X,X)} , \wcup, \wcap)$ lattice. 
In this model let all the contracts in $X^S_F$ have weight $w=-1$ and all contracts in $X^B_F$ have weight $w=1$.

\begin{lemma}\label{sublat}
If $F: 2^{(X,X)} \to 2^{(X,X)}$ is $\sqsubseteq$-isotone and a $w$-contraction then fixed points of $F$ form a nonempty sublattice of $(2^{(X,X)}, \widetilde \cup, \widetilde \cap)$.
\end{lemma}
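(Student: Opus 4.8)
The plan is to get nonemptiness immediately from Tarski's theorem and then to squeeze the sublattice property out of the $w$-contraction hypothesis by a weight-counting argument. Since $(2^{(X,X)},\wcup,\wcap)$ is a complete lattice and the map $F$ is $\sqsubseteq$-isotone, Theorem \ref{Knaster-Tarski} already guarantees that the set $\mathrm{Fix}(F)$ of fixed points of $F$ is a nonempty complete lattice in its own induced order; the substance of the claim is that $\mathrm{Fix}(F)$ is in fact closed under the \emph{ambient} operations $\wcup$ and $\wcap$.

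To prove closure I would fix fixed points $p,q\in\mathrm{Fix}(F)$ and set $a:=p\wcap q$ and $b:=p\wcup q$, so that $a\sqsubseteq p\sqsubseteq b$ and $a\sqsubseteq q\sqsubseteq b$. The usual Tarski-style monotonicity argument supplies one inclusion on each side: from $a\sqsubseteq p$ and isotonicity, $F(a)\sqsubseteq F(p)=p$, and likewise $F(a)\sqsubseteq q$, so $F(a)\sqsubseteq p\wcap q=a$; dually $p=F(p)\sqsubseteq F(b)$ and $q\sqsubseteq F(b)$ give $b=p\wcup q\sqsubseteq F(b)$. It remains to rule out strictness, and this is where the weight is used.

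The central device is the ``weight defect'' $g(U):=w(F(U))-w(U)$. First I would observe that $w$-contraction is exactly the statement that $g$ is order-reversing: applying Lemma \ref{wsm}(1) to both sides of the per-firm inequality $w(F(U)\wsm F(U'))\le w(U\wsm U')$ for $U'\sqsubseteq U$ --- legitimate since isotonicity gives $F(U')\sqsubseteq F(U)$ --- the constant $|X^S_f|$ appears on both sides and cancels, and summing over firms yields $g(U)\le g(U')$ whenever $U'\sqsubseteq U$. Next, $w$ is itself $\sqsubseteq$-isotone, since it equals the number of buyer-copies minus the number of seller-copies and moving up in $\sqsubseteq$ enlarges the buyer part and shrinks the seller part. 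Hence $F(a)\sqsubseteq a$ gives $g(a)\le 0$, $b\sqsubseteq F(b)$ gives $g(b)\ge 0$, and $g$ vanishes on fixed points; since $a\sqsubseteq b$, antitonicity forces $g(b)\le g(a)$, so $0\le g(b)\le g(a)\le 0$ and therefore $g(a)=g(b)=0$.

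The final step upgrades these weight equalities to set equalities. From $F(a)\sqsubseteq a$ with $w(F(a))=w(a)$, the buyer-part of $F(a)$ is contained in that of $a$ while its seller-part contains that of $a$, yet the two contributions to $w$ carry opposite signs and together change by zero, so each change is zero and $F(a)=a$ --- this is precisely the strictness statement in Lemma \ref{wsm}(3). The dual argument on $b\sqsubseteq F(b)$ with $w(F(b))=w(b)$ gives $F(b)=b$. Thus $p\wcap q$ and $p\wcup q$ are fixed points and $\mathrm{Fix}(F)$ is a sublattice. The step I expect to require the most care is the middle one: the bookkeeping of the $|X^S_f|$ terms needed to see that $w$-contraction really is the antitonicity of $g$, and then the observation that $g=0$ on a $\sqsubseteq$-comparable pair is strong enough --- because $w$ weighs the buyer and seller coordinates with opposite signs --- to force the underlying sets to coincide rather than merely to have equal total weight.
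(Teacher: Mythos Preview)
Your argument is correct and follows essentially the same route as the paper: Tarski for nonemptiness, isotonicity for the one-sided inclusions $F(a)\sqsubseteq a$ and $b\sqsubseteq F(b)$, the $w$-contraction for a squeeze, and strictness of $w$ under $\sqsubseteq$ to upgrade weight equality to set equality. The only cosmetic difference is packaging: the paper works directly with $w(\cdot\wsm\cdot)$ and appeals to Lemma~\ref{wsm}(2),(3) to sandwich $w(U\wsm(U\wcap V))$, whereas you first use Lemma~\ref{wsm}(1) to rewrite $w$-contraction as antitonicity of the defect $g(U)=w(F(U))-w(U)$ and then squeeze $g(a),g(b)$ between $0$ and $0$; unpacking the paper's chain via Lemma~\ref{wsm}(1) gives exactly your inequalities. (A tiny caveat: your final step is not literally Lemma~\ref{wsm}(3) but the simpler fact that $w$ is strictly $\sqsubseteq$-isotone, which you spell out correctly anyway.)
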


\begin{proof}[Proof of Lemma \ref{sublat}]
By Theorem \ref{Knaster-Tarski}, the set of fixed points is nonempty. Now let $U \subseteq (X,X)$ and $V\subseteq (X,X) $.  Assume that $F(U)=U$ and $F(V)=V$. By monotonicity, $U\wcap V=F(U) \wcap F(V)  \sqsupseteq F(U\wcap V)$ and $U\wcup V=F(U) \wcup F(V) \sqsubseteq F(U\wcap V)$. 
From the $w$-contraction property and Lemma \ref{wsm}, we have that
\[w(U\wsm (U\wcap V)) \ge w(F(U)\wsm F(U\wcap V)) \ge w(U\wsm (U\wcap V)),\] 
\[w((U\wcup V)\wsm U) \ge w(F(U \wcup V)\wsm F(U)) \ge w((U\wcup V) \wsm U),\]
hence an equality must hold throughout. %In particular $w(U\wsm (U\wcap V)) = w(U\wsm F(U\wcap V))$
Using the third part of Lemma \ref{wsm} we can see that  $(U\wcap V) =  F(U\wcap V)$ and  $(U\wcup V) =  F(U\wcup V)$ so they are also fixed points of $F$. 
\end{proof}

\begin{obs}\label{star}
Consider two sets of contracts $(Y,Z)$ and  $(Y',Z')$ , where $Y, Y' \subseteq X^B_F$ and  $Z, Z' \subseteq X^S_F$ and 
$(X,X)\setminus (Y,Z) =(X\setminus Y, X \setminus Z)$.
If $(Y', Z') \sqsubseteq (Y,Z)$, then  
%from definition $((Y,Z) \wsm (Y',Z')) = (Y \setminus Y', \overline {Z' \setminus Z)}$. 
%On the other hand,
$((X\setminus Y, X \setminus Z)^* \wsm (X\setminus Y', X \setminus Z')^*)=((X\setminus Z)\setminus (X\setminus Z'), \overline {(X\setminus Y')\setminus (X\setminus Y)})=((Z' \setminus Z), \overline {(Y \setminus Y')})=((X,X)\setminus ((Y,Z) \wsm (Y',Z'))^*$.
\end{obs}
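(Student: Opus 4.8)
The plan is to verify the displayed chain of equalities by a direct unwinding of the three operations that appear in it — the dual $(\cdot)^{*}$, the truncated difference $\wsm$, and the complement $(X,X)\setminus(\cdot)$ — followed by simplification using the only content of the hypothesis $(Y',Z')\sqsubseteq(Y,Z)$, namely $Y'\subseteq Y$ and $Z\subseteq Z'$. Throughout I write $\overline W$ for the complement of $W$ taken in the relevant copy of $X$ (all copies have size $|X|$), so that $(X,X)\setminus(Y,Z)=(\overline Y,\overline Z)$ and $(A,B)^{*}=(B,A)$, bearing in mind that $(\cdot)^{*}$ interchanges the buyer-side and seller-side copies and hence interchanges the universes in which the two coordinate complements are taken.

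First I would handle the left-hand side. Applying $(\cdot)^{*}$ to each factor turns $(\overline Y,\overline Z)$ into $(\overline Z,\overline Y)$ and $(\overline{Y'},\overline{Z'})$ into $(\overline{Z'},\overline{Y'})$. Before invoking $\wsm$ I check that the second argument is $\sqsubseteq$ the first: in the first coordinate this reads $\overline{Z'}\subseteq\overline Z$, i.e. $Z\subseteq Z'$; in the second, $\overline{Y'}\supseteq\overline Y$, i.e. $Y'\subseteq Y$ — both provided by the hypothesis. Then by definition $(\overline Z,\overline Y)\wsm(\overline{Z'},\overline{Y'})=\bigl(\overline Z\setminus\overline{Z'},\ \overline{\overline{Y'}\setminus\overline Y}\bigr)$, and the elementary identities $\overline Z\setminus\overline{Z'}=\overline Z\cap Z'=Z'\setminus Z$ and $\overline{Y'}\setminus\overline Y=\overline{Y'}\cap Y=Y\setminus Y'$ collapse this to $\bigl(Z'\setminus Z,\ \overline{Y\setminus Y'}\bigr)$, which is exactly the middle expression in the statement.

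For the right-hand side I would compute inward-out: since $(Y',Z')\sqsubseteq(Y,Z)$ the operation $(Y,Z)\wsm(Y',Z')=\bigl(Y\setminus Y',\ \overline{Z'\setminus Z}\bigr)$ is defined; dualising gives $\bigl((Y,Z)\wsm(Y',Z')\bigr)^{*}=\bigl(\overline{Z'\setminus Z},\ Y\setminus Y'\bigr)$; and complementing each coordinate yields $(X,X)\setminus\bigl((Y,Z)\wsm(Y',Z')\bigr)^{*}=\bigl(\overline{\overline{Z'\setminus Z}},\ \overline{Y\setminus Y'}\bigr)=\bigl(Z'\setminus Z,\ \overline{Y\setminus Y'}\bigr)$, using $\overline{\overline W}=W$. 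This agrees with the value obtained for the left-hand side, completing the chain.

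I do not expect a genuine obstacle: the observation is a pure identity, so the only thing that can go wrong is bookkeeping. The point to be careful about is that $(\cdot)^{*}$ swaps the two copies of $X$, so one must keep straight in which universe each set difference and each bar is taken, and — slightly pedantically — one should confirm the two $\sqsubseteq$-inclusions before each application of $\wsm$ so that the operation is used in the regime where it is meaningful; once that is done, the cancellations $\overline{\overline W}=W$, $\overline Z\cap Z'=Z'\setminus Z$ and $\overline{Y'}\cap Y=Y\setminus Y'$ finish the job.
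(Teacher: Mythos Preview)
Your proposal is correct and follows exactly the paper's approach: the paper states the observation as a self-evident chain of equalities obtained by directly unwinding the definitions of $(\cdot)^{*}$, $\wsm$, and coordinate-wise complement, and you simply spell those steps out in full (including the $\sqsubseteq$-checks and the elementary identities $\overline Z\setminus\overline{Z'}=Z'\setminus Z$, $\overline{Y'}\setminus\overline Y=Y\setminus Y'$) that the paper leaves implicit.
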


\begin{theorem}[Sublattice Theorem]\label{thm:sublattice}
Suppose that choice functions satisfy full substitutability and LAD/LAS. Then the fixed points of 
$\Phi(Y,Z) = ( X\setminus R_{S}(Z|Y), X\setminus R_{B}(Y|Z))$ form a nonempty, complete sublattice of $(2^X \times 2^X, \widetilde \cup, \widetilde \cap)$.
\end{theorem}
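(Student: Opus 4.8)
The plan is to assemble the conclusion from the machinery built up in the preceding lemmas, treating $\Phi$ as a map on the doubled lattice $(2^{(X,X)},\wcup,\wcap)$ rather than on $2^X\times 2^X$ directly, so that Lemma~\ref{sublat} applies. First I would make explicit the order-reversing correspondence between the native order $\sqsubseteq$ on $2^X\times 2^X$ (in which $(Y^B,Z^S)\sqsubseteq(Y'^B,Z'^S)$ iff $Y^B\subseteq Y'^B$ and $Z^S\supseteq Z'^S$) and the ordinary inclusion order on the doubled ground set $(X,X)$: a pair $(Y,Z)$ with $Y\subseteq X^B_F$, $Z\subseteq X^S_F$ corresponds to the set $Y\wcup(X^S_F\setminus Z)^{\,\ast\text{-flip}}$, or more simply one records $Y$ on the buyer copy and $\overline Z$ on the seller copy so that $\sqsubseteq$ becomes $\subseteq$. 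Under this identification $\Phi$ becomes a self-map $\tilde\Phi$ of $(2^{(X,X)},\subseteq)$.

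Next I would verify the two hypotheses of Lemma~\ref{sublat} for $\tilde\Phi$. Isotonicity: $\Phi$ is built from the rejection maps $R_S(\cdot\mid\cdot)$ and $R_B(\cdot\mid\cdot)$, which are $\sqsubseteq$-isotone by full substitutability (this is the standard observation of \citet{Flei:03,Ostr:08,HaKo:12} already quoted), and complementation $X\setminus(\cdot)$ is order-reversing, so the composition is isotone in the doubled order — this is essentially Lemma~\ref{contraction}'s first clause applied componentwise. The $w$-contraction property: weighting every seller-copy contract $-1$ and every buyer-copy contract $+1$, Lemma~\ref{contraction} says each $R^f$ is a $w$-contraction, hence so is $R=\bigcup_f R^f$ by the definition of $w$-contraction on the whole network; the translation to $\Phi=\big(X\setminus R_S,\,X\setminus R_B\big)$ uses Observation~\ref{star}, which records exactly how the $\wsm$-increment of $\Phi(Y,Z)$ relates (via complement and dual) to the $\wsm$-increment of the rejection map on the input. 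I would spell out this last bookkeeping step: $w\big(\tilde\Phi(U)\wsm\tilde\Phi(U')\big)=\,$(some fixed constant)$\,-\,w\big(R(\cdot)\wsm R(\cdot)\big)\le(\text{same constant})-w(U\wsm U')$... and check signs so that the inequality lands the right way around, i.e.\ $w\big(\tilde\Phi(U)\wsm\tilde\Phi(U')\big)\le w(U\wsm U')$.

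With both hypotheses in hand, Lemma~\ref{sublat} gives that the fixed points of $\tilde\Phi$ — equivalently of $\Phi$ — form a nonempty sublattice of the doubled lattice; transporting back along the identification, they form a nonempty sublattice of $(2^X\times 2^X,\wcup,\wcap)$. Completeness follows since the fixed-point set is a complete lattice by the Knaster–Tarski Theorem~\ref{Knaster-Tarski} (it is the fixed-point set of an isotone map on a complete lattice), and a sublattice that is itself a complete lattice under the induced order is what is asserted; alternatively one notes the sublattice is closed under arbitrary $\wcup$ and $\wcap$ by iterating the pairwise argument and using continuity of $\Phi$, but invoking Theorem~\ref{Knaster-Tarski} is cleanest.

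The main obstacle I anticipate is purely notational rather than conceptual: getting the direction of every order-reversal right through the chain ``$\sqsubseteq$ on $2^X\times2^X$'' $\leftrightarrow$ ``$\subseteq$ on $(X,X)$'' $\leftrightarrow$ ``action of complementation'' $\leftrightarrow$ ``dual operation $(\cdot)^\ast$'', so that Observation~\ref{star} can be applied verbatim and the $w$-contraction inequality comes out with the correct sign. Once that dictionary is fixed and checked on a single increment, the proof is a two-line citation of Lemma~\ref{sublat} and Theorem~\ref{Knaster-Tarski}.
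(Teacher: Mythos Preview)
Your proposal is correct and follows essentially the same route as the paper: rewrite $\Phi$ as $((X,X)\setminus R(\cdot,\cdot))^*$, invoke Lemma~\ref{contraction} to get that $R$ is a $w$-contraction, use Observation~\ref{star} for the complement-and-dual bookkeeping to transfer the $w$-contraction inequality to $\Phi$, and then apply Lemma~\ref{sublat}. The paper does the $w$-contraction computation in one line, obtaining $w(\Phi(Y,Z)\wsm\Phi(Y',Z'))=w(R(Y,Z)\wsm R(Y',Z'))$ directly (no additive constant appears, since complement-then-dual is $w$-preserving), so your anticipated sign-chasing is slightly simpler than you expect; your extra order-identification scaffolding is harmless but unnecessary, as the paper works with $\sqsubseteq$ throughout.
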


\begin{proof}[Proof of Theorem \ref{thm:sublattice}]

The $\Phi(Y,Z) = ( X\setminus R_S(Z|Y), X\setminus R_B(Y|Z))$ function can be also written as  $\Phi (Y)=((X,X) \setminus R(Y,Z))^*$. Since $R$ is $\sqsubseteq$-isotone,    $\Phi$ is also $\sqsubseteq$-isotone.  We need to show that $\Phi$ is a $w$-contraction. Suppose that $(Y',Z')  \sqsubseteq (Y,Z)$. Using Observation \ref{star}, $w(\Phi (Y,Z) \wsm \Phi (Y',Z'))=w(((X,X) \setminus R(Y,Z))^* \wsm ((X,X) \setminus R(Y',Z'))^*) = w(((X,X)\setminus (R (Y,Z) \wsm R (Y',Z')))^*)= w(R (Y,Z) \wsm R (Y',Z')) \le w( (Y,Z) \wsm  (Y',Z'))$ because in  Lemma \ref{contraction} we showed that $R$ is a $w$-contraction.

Since $\Phi$ is  $\sqsubseteq$-isotone and a $w$-contraction,  Lemma \ref{sublat} gives that the fixed points of $\Phi$ form a sublattice of $(2^{(X,X)}, \widetilde \cup, \widetilde \cap)$. 
\end{proof}

\subsubsection{Lattice for the terminal agents}
The following path independence condition was introduced by \citet{aizerman1981general}. It has been deeply explored in many-to-one matching markets by \citet{echenique2015control} and in many-to-many matching markets by \citet{Flei:03} and \citet{chambers2017choice}.
\begin{lemma}(Path Independence)\label{PI}
If choice function $C^f:2^X\to 2^X$
is same-side substitutable and satisfies IRC then  $C^f(Y\cup Z)=C^f(Y\cup C^f(Z))$ holds for $Y,Z\subseteq X$.
\end{lemma}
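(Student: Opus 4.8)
The plan is to establish the identity by a two-step ``sandwich'' argument: first show that $C(Y\cup Z)$ already lies inside the intermediate set $Y\cup C(Z)$, and then invoke IRC to slide $C$ from the larger set $Y\cup Z$ down to $Y\cup C(Z)$.

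First I would record the form of same-side substitutability that is actually needed. Writing $R(W)=W\setminus C(W)$, substitutability says $R(W')\subseteq R(W)$ whenever $W'\subseteq W$; this is equivalent to the statement that $C(W)\cap W'\subseteq C(W')$ for every $W'\subseteq W$ (a contract chosen from a larger offer set and still present in a smaller one is chosen from the smaller one as well). Applying this with $W=Y\cup Z$ and $W'=Z$ gives $C(Y\cup Z)\cap Z\subseteq C(Z)$. Since $C(Y\cup Z)\subseteq Y\cup Z$, I would split $C(Y\cup Z)$ into the part contained in $Y$ and the part contained in $Z$; the first part is trivially inside $Y$ and the second part is inside $C(Z)$ by the previous containment, so $C(Y\cup Z)\subseteq Y\cup C(Z)$.

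Next I would apply IRC. Because $C(Z)\subseteq Z$ we have the chain $C(Y\cup Z)\subseteq Y\cup C(Z)\subseteq Y\cup Z$, so IRC — used with large set $Y\cup Z$ and intermediate set $Y\cup C(Z)$ — yields $C(Y\cup C(Z))=C(Y\cup Z)$, which is exactly the claim.

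I do not expect a genuine obstacle here: the argument is short. The only points requiring care are invoking same-side substitutability in the correct direction (rejected contracts stay rejected as the offer set expands, hence chosen contracts survive restriction of the offer set) and verifying that the two nested sets produced in the first step indeed satisfy the hypothesis $C(W)\subseteq Z'\subseteq W$ of IRC.
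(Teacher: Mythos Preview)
Your proof is correct and follows exactly the same approach as the paper: first use same-side substitutability to obtain $C(Y\cup Z)\subseteq Y\cup C(Z)$, then sandwich this between $C(Y\cup Z)$ and $Y\cup Z$ and apply IRC. The paper's proof is more terse (it asserts the containment from SSS without the explicit splitting argument you give), but the logical structure is identical.
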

\begin{proof}[Proof of Lemma \ref{PI}]
Since $C^f$ is same-side substitutable, $C^f(Y\cup Z) \subseteq (Y\cup C^f(Z))$. Using IRC we have that
$C^f(Y\cup Z) \subseteq (Y\cup C^f(Z)) \subseteq (Y \cup Z)$ implies that $C^f(Y\cup Z)=C^f(Y\cup C^f(Z))$.
\end{proof}

\begin{lemma}\label{partialorder}
Suppose that choice functions satisfy full substitutability and IRC. Then terminal superiority is a partial order on \ch{terminal-trail-stable} outcomes.
\end{lemma}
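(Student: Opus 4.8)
The plan is to verify the three defining properties of a partial order for the relation $\succeq^S$ on terminal-fully-trail-stable outcomes; since $W\succeq^S A$ iff $A\succeq^B W$, the relation $\succeq^B$ is the reverse order, so it suffices to treat $\succeq^S$. Throughout I will use that a terminal-fully-trail-stable set $Y$ has the form $A_\mathcal{T}$ for an acceptable $A$, so $Y_f=A_f$ and hence $C^f(Y_f)=A_f=Y_f$ for every terminal agent $f$, and that $Y$ is recovered from its terminal restrictions via $Y=\bigcup\{Y_f\mid f\in\mathcal{T}\}$.

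\emph{Reflexivity} is immediate: for terminal-fully-trail-stable $Y$ and any terminal agent $f$ we have $C^f(Y_f\cup Y_f)=C^f(Y_f)=Y_f$, so $Y\succeq^S Y$. \emph{Antisymmetry} is purely definitional: if $Y\succeq^S Y'$ and $Y'\succeq^S Y$, then for each terminal seller $f$ both $C^f(Y_f\cup Y'_f)=Y_f$ and $C^f(Y_f\cup Y'_f)=Y'_f$, so $Y_f=Y'_f$; the same argument on terminal buyers gives $Y_g=Y'_g$; since $Y$ and $Y'$ are the unions of their terminal restrictions, $Y=Y'$.

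The substantive step is \emph{transitivity}: assuming $Y\succeq^S W$ and $W\succeq^S V$, I want $Y\succeq^S V$. The key point is that for a terminal seller $f$ all of $Y_f,W_f,V_f$ lie in $X^S_f$, and the restriction of $C^f$ to $2^{X^S_f}$ is same-side substitutable (this part of full substitutability survives since $f$ has no upstream contracts) and satisfies IRC, so Lemma~\ref{PI} applies to it. Taking ground pieces $Y_f$ and $W_f\cup V_f$,
\[
C^f(Y_f\cup W_f\cup V_f)=C^f\big(Y_f\cup C^f(W_f\cup V_f)\big)=C^f(Y_f\cup W_f)=Y_f,
\]
the last two equalities using $W\succeq^S V$ and $Y\succeq^S W$ respectively. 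Since $Y_f=C^f(Y_f\cup W_f\cup V_f)\subseteq Y_f\cup V_f\subseteq Y_f\cup W_f\cup V_f$, IRC gives $C^f(Y_f\cup V_f)=Y_f$. The dual computation for a terminal buyer $g$, starting from $C^g(Y_g\cup W_g\cup V_g)=C^g\big(V_g\cup C^g(Y_g\cup W_g)\big)=C^g(V_g\cup W_g)=V_g$ and then applying IRC, yields $C^g(Y_g\cup V_g)=V_g$. Together these are precisely $Y\succeq^S V$.

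I expect transitivity to be the only real obstacle, and within it the one point needing care is the reduction from full substitutability of $C^f$ to ordinary substitutability of its restriction to the single side relevant for a terminal agent, which is what licenses the use of Lemma~\ref{PI}; once that is noted, the three-bundle manipulation above is routine, and reflexivity and antisymmetry need nothing beyond individual rationality and the definitions.
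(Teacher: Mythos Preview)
Your proof is correct and follows essentially the same approach as the paper: both verify reflexivity and antisymmetry directly from acceptability and the definition, and both derive transitivity via Lemma~\ref{PI} applied to the three-set union $Y_f\cup W_f\cup V_f$. The only cosmetic difference is that the paper chains two applications of Lemma~\ref{PI} to compute $C^f(A_f\cup A''_f)$ directly, whereas you apply Lemma~\ref{PI} once to reach $C^f(Y_f\cup W_f\cup V_f)=Y_f$ and then invoke IRC to drop $W_f$; you are also a bit more explicit than the paper in separating the terminal-seller and terminal-buyer cases and in justifying why Lemma~\ref{PI} applies to the one-sided restriction of $C^f$ for a terminal agent.
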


\begin{proof}[Proof of Lemma \ref{partialorder}]
We need to prove that $\preceq^S$ is reflexive, antisymmetric and
transitive. Assume that $A,A'$ and $A''$ are acceptable outcomes. As
$C^f(A_f\cup A_f)=C^f(A_f)=A_f$ holds for each agent (and hence for each terminal seller)
$f$, relation $\preceq^S$ is reflexive. If $A\preceq^SA'\preceq^SA$ then we
have $A_f=C^f(A_f\cup A'_f)=A'_f$ holds for any terminal agent $f$, hence
$A=A'$ and $\preceq^S$ is antisymmetric. For transitivity, assume that
$A\succeq^SA'\succeq^SA''$. Using this and Lemma \ref{PI}, we get for any
terminal agent $f$ that
\[
C^f(A_f\cup A''_f)=C^f(C^f(A_f\cup A'_f)\cup A''_f)=C^f(A_f\cup A'_f\cup
A''_f)=C^f( A_f\cup C^f(A'_f\cup A''_f))=C^f( A_f\cup A'_f)=A_f\ ,
\]
hence $A\succeq^S A''$ indeed holds. This completes the proof.
\end{proof}

\begin{theorem}[Terminal Sublattice Theorem]\label{terminallattice}
If $L$ is a nonempty complete sublattice of  $(2^X \times 2^X, \widetilde \cup, \widetilde \cap)$ then 
$L_\mathcal{T}=\{(Y_\mathcal{T}, Z_\mathcal{T}): (Y, Z) \in L\}$ is a sublattice of  $(2^\mathcal{T} \times 2^\mathcal{T}, \widetilde \cup, \widetilde \cap)$. 
\end{theorem}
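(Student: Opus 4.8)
The plan is to realize the passage to terminal contracts as a lattice homomorphism, and then to use the elementary fact that the image of a sublattice under a lattice homomorphism is again a sublattice. Write $X_\mathcal{T}:=\bigcup\{X_f : f\in\mathcal{T}\}$ for the set of all terminal contracts. Directly from the definition $A_\mathcal{T}=\bigcup\{A_f : f\in\mathcal{T}\}$ one has $A_\mathcal{T}=A\cap X_\mathcal{T}$ for every $A\subseteq X$; in particular $2^\mathcal{T}$ in the statement is to be read as $2^{X_\mathcal{T}}$, the power set of the terminal-contract set. Define $\pi:2^X\times 2^X\to 2^{X_\mathcal{T}}\times 2^{X_\mathcal{T}}$ by $\pi(Y,Z)=(Y_\mathcal{T},Z_\mathcal{T})=(Y\cap X_\mathcal{T},\,Z\cap X_\mathcal{T})$, so that $L_\mathcal{T}=\pi(L)$.

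First I would spell out the lattice operations. Under the order $(Y,Z)\sqsubseteq(Y',Z')\iff Y\subseteq Y'$ and $Z\supseteq Z'$ used in the proof of Lemma~\ref{main}, the join is $(Y,Z)\wcup(Y',Z')=(Y\cup Y',\,Z\cap Z')$ and the meet is $(Y,Z)\wcap(Y',Z')=(Y\cap Y',\,Z\cup Z')$, and the operations on $2^{X_\mathcal{T}}\times 2^{X_\mathcal{T}}$ are given by the same formulas. I would then check that $\pi$ is a lattice homomorphism: since cutting down by the fixed set $X_\mathcal{T}$ commutes with unions and with intersections, $(Y\cup Y')\cap X_\mathcal{T}=(Y\cap X_\mathcal{T})\cup(Y'\cap X_\mathcal{T})$ and $(Z\cap Z')\cap X_\mathcal{T}=(Z\cap X_\mathcal{T})\cap(Z'\cap X_\mathcal{T})$, whence $\pi\big((Y,Z)\wcup(Y',Z')\big)=\pi(Y,Z)\wcup\pi(Y',Z')$; the identity for $\wcap$ is obtained by swapping the roles of the two coordinates. (These identities in fact hold for arbitrary unions and intersections, so $\pi$ even preserves arbitrary joins and meets.)

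The conclusion is then immediate. If $a,b\in L_\mathcal{T}$, choose $u,v\in L$ with $\pi(u)=a$ and $\pi(v)=b$; since $L$ is a sublattice, $u\wcup v$ and $u\wcap v$ lie in $L$, and by the homomorphism property $a\wcup b=\pi(u\wcup v)\in L_\mathcal{T}$ and $a\wcap b=\pi(u\wcap v)\in L_\mathcal{T}$, so $L_\mathcal{T}$ is closed under $\wcup$ and $\wcap$; nonemptiness is inherited from $L$. Because $\pi$ preserves arbitrary meets and joins, one in fact obtains that $L_\mathcal{T}$ is a \emph{complete} sublattice, which is the form needed when this is combined with Lemma~\ref{partialorder} to deduce Lemma~\ref{fully-stable-lattice}. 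There is no real obstacle here: the only genuine content is the observation $A_\mathcal{T}=A\cap X_\mathcal{T}$, after which the whole statement reduces to the fact that intersection with a fixed set is a lattice homomorphism, applied coordinatewise (with the customary sign flip on the second coordinate).
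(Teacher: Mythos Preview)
Your argument is correct and is more direct than the paper's. The paper does not observe that $A_\mathcal{T}=A\cap X_\mathcal{T}$ and that the projection $\pi$ is therefore a lattice homomorphism; instead, for each pair $(A_\mathcal{T},B_\mathcal{T})\in L_\mathcal{T}$ it selects a \emph{canonical} preimage $(A^*,B^*)\in L$ (the $\sqsubseteq$-largest one, using completeness of $L$), then shows that the projection of $(A^*,B^*)\wcap(C^*,D^*)$ is the greatest lower bound of $(A_\mathcal{T},B_\mathcal{T})$ and $(C_\mathcal{T},D_\mathcal{T})$ inside $L_\mathcal{T}$, and similarly for joins. Your homomorphism observation renders the canonical-preimage machinery unnecessary: any preimages work, the ambient $\wcap,\wcup$ are preserved on the nose, and closure of $L_\mathcal{T}$ under them is immediate. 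As a bonus, your argument shows at once that the lattice operations on $L_\mathcal{T}$ agree with the ambient ones on $2^{X_\mathcal{T}}\times 2^{X_\mathcal{T}}$ (i.e.\ sublattice in the strict sense) and extends to arbitrary joins and meets, giving completeness; the paper's argument reaches the same conclusion but via a longer route.
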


\begin{proof}[Proof of Theorem \ref{terminallattice}]
For a given $(A_\mathcal{T}, B_\mathcal{T})$ there can be more than one inverse image in the original lattice. Let $(A^*, B^*)= \widetilde  \bigcup \{(Y, Z) \in L: (Y_\mathcal{T}, Z_\mathcal{T}) \sqsubseteq (A_\mathcal{T}, B_\mathcal{T}) \}  $. Since $L$ is a complete lattice with lattice operations $ \widetilde \cup$ and  $\widetilde \cap$, this means \ch{ $(A^*, B^*) \in L$ } and $(A^*_\mathcal{T}, B^*_\mathcal{T})=  (A_\mathcal{T}, B_\mathcal{T}) $. We call it the \textit{canonical inverse image} of  $(A_\mathcal{T}, B_\mathcal{T})$, since this is the $\sqsubseteq$-greatest among all inverse images.

If $(A_\mathcal{T}, B_\mathcal{T})$ and $(C_\mathcal{T}, D_\mathcal{T}) \in L_\mathcal{T}$, let us consider $(Y,Z)= (A^*, B^*) \wcap (C^*, D^*)$. 
Since  $(Y,Z)\sqsubseteq  (A^*, B^*) $ this implies $(Y_\mathcal{T},Z_\mathcal{T})\sqsubseteq (A^*_\mathcal{T}, B^*_\mathcal{T})=(A_\mathcal{T}, B_\mathcal{T})$. Similarly $(Y_\mathcal{T},Z_\mathcal{T})\sqsubseteq (C_\mathcal{T}, D_\mathcal{T})$. We want to show that $(Y_\mathcal{T},Z_\mathcal{T})$ is the greatest lower bound of  $(A_\mathcal{T}, B_\mathcal{T})$ and $(C_\mathcal{T}, D_\mathcal{T}) $ in $ L_\mathcal{T}$.
We can see that $(Y^*, Z^* ) \sqsubseteq  (A^*, B^*) $ and $(Y^*, Z^* ) \sqsubseteq  (C^*, D^*) $ because $(A^*, B^*)$ is defined by the union of a greater set. Therefore $(Y^*, Z^* )=(Y, Z )$.

Suppose there exists a $(E_\mathcal{T},F_\mathcal{T}) \in  L_\mathcal{T}$ such that $(E_\mathcal{T},F_\mathcal{T}) \sqsubseteq (A_\mathcal{T},B_\mathcal{T})$ and  $(E_\mathcal{T},F_\mathcal{T}) \sqsubseteq (C_\mathcal{T},D_\mathcal{T})$ but $(E_\mathcal{T},F_\mathcal{T})$
\ch{$ \not \sqsubseteq$}
%is not $\sqsubseteq$ than
 $ (Y_\mathcal{T},Z_\mathcal{T})$. 
Then in the original lattice  $(E^*,F^*) \sqsubseteq (A^*,B^*)$ and  $(E^*,F^*) \sqsubseteq (C^*,D^* )$ but $(E^*,F^*) $
\ch{$\not \sqsubseteq$}
%is not $\not \sqsubseteq$ than
  $(Y^*,Z^*)$. But this is impossible since  $(Y,Z)= (A^*, B^*) \wcap (C^*, D^*)$.
So we have found a unique greatest common lower bound of  $(A_\mathcal{T}, B_\mathcal{T})$ and $(C_\mathcal{T}, D_\mathcal{T})$.

\ch{ 
Similar argument can be made in order to find the lowest common upper bound of $(A_\mathcal{T}, B_\mathcal{T})$ and $(C_\mathcal{T}, D_\mathcal{T})$.
Let $(Y,Z)= (A^*, B^*) \wcup (C^*, D^*)$. 
Since  $(Y,Z)\sqsupseteq  (A^*, B^*) $ this implies $(Y_\mathcal{T},Z_\mathcal{T})\sqsupseteq (A^*_\mathcal{T}, B^*_\mathcal{T})=(A_\mathcal{T}, B_\mathcal{T})$. Similarly $(Y_\mathcal{T},Z_\mathcal{T})\sqsupseteq (C_\mathcal{T}, D_\mathcal{T})$.

Suppose there exists a $(E_\mathcal{T},F_\mathcal{T}) \in  L_\mathcal{T}$ such that $(E_\mathcal{T},F_\mathcal{T}) \sqsupseteq (A_\mathcal{T},B_\mathcal{T})$ and  $(E_\mathcal{T},F_\mathcal{T}) \sqsupseteq (C_\mathcal{T},D_\mathcal{T})$ but $(E_\mathcal{T},F_\mathcal{T})$
\ch{$ \not \sqsupseteq$}
%is not $\sqsupseteq$ than
 $ (Y_\mathcal{T},Z_\mathcal{T})$. 
Then in the original lattice  $(E^*,F^*) \sqsupseteq (A^*,B^*)$ and  $(E^*,F^*) \sqsupseteq (C^*,D^* )$ therefore $(E^*,F^*) \sqsupseteq (Y,Z)$, so  $(E^*_\mathcal{T},F^*_\mathcal{T})=(E_\mathcal{T},F_\mathcal{T}) \sqsupseteq (Y_\mathcal{T},Z_\mathcal{T})$, which is a contradiction. 

So we have found a unique lowest common upper bound  of  $(A_\mathcal{T}, B_\mathcal{T})$ and $(C_\mathcal{T}, D_\mathcal{T})$, so  $(L_\mathcal{T}, \widetilde \cup, \widetilde \cap)$ is indeed a lattice. }\end{proof}

Now we consider only the contracts sold by the terminal sellers.  
For any \ch{ $Y \subseteq X$, let  $Y_\mathcal{S}=\{x \in Y|s(x)\in \mathcal{T}$$\}$. }
% and $X^S_V=X^S \cap X_V$.

Given two trail-stable outcomes $A$ and $A'$, let us denote the canonical trail-stable pair \ch{(defined as at the end of Proof of Theorem \ref{main}) for $A$ with $\dot X^B$ and $\dot X^S$, and the canonical trail-stable pair for $A'$ with $\dot X'^B$ and $\dot X'^S$.}
\ch{
\begin{lemma}[Terminal Superiority Lemma]\label{terminalorder} Given two trail-stable outcomes $A$ and $A'$, $C^f(A_f\cup A'_f)=A_f$ 
for each teminal seller if and only if $\dot X^S_\mathcal{S} \supseteq \dot X'^S_\mathcal{S}$ and
$\dot X^B_\mathcal{S} \subseteq \dot X'^B_\mathcal{S}$ holds. A similar statement holds for terminal
buyers.

\end{lemma}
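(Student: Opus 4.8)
The plan is to reduce the statement to a claim about one terminal seller at a time, and then to settle that claim by a hands-on analysis of the canonical pair that uses only IRC and same-side substitutability. First I would observe that if $s(x)\in\mathcal T$ then $s(x)$ is a terminal \emph{seller}, since $x$ is a downstream contract of $s(x)$ and a terminal buyer has no downstream contracts. Hence $\dot X^S_\mathcal S=\bigcup\{\dot X^S_f: f\text{ a terminal seller}\}$ is a \emph{disjoint} union, and the same holds for $\dot X^B_\mathcal S$, $\dot X'^S_\mathcal S$ and $\dot X'^B_\mathcal S$. Therefore the two displayed inclusions are equivalent to ``$\dot X^S_f\supseteq\dot X'^S_f$ and $\dot X^B_f\subseteq\dot X'^B_f$ for every terminal seller $f$'', and ``$C^f(A_f\cup A'_f)=A_f$ for every terminal seller'' is exactly the conjunction, over terminal sellers $f$, of these identities. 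So it suffices to fix one terminal seller $f$ and prove
\[
C^f(A_f\cup A'_f)=A_f\ \Longleftrightarrow\ \dot X^S_f\supseteq\dot X'^S_f\ \text{and}\ \dot X^B_f\subseteq\dot X'^B_f .
\]

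The key step is that on $f$'s contracts the canonical pair is completely explicit. Because $f$ is a terminal seller, $X_f=X^S_f$ and $f$ is never a buyer, so in the construction at the end of the proof of Lemma~\ref{main} a trail that ends at a contract $x$ with $s(x)=f$ can only be the singleton $\{x\}$ --- its penultimate contract would need buyer $s(x)=f$. Consequently, for $x\in X_f\setminus A$ we have $x\in X_0^B$ iff $x$ is $(A,f)$-rational, and hence
\[
\dot X^B_f=\{x\in X_f: x\ \text{is}\ (A,f)\text{-rational}\},\qquad \dot X^S_f=A_f\cup\{x\in X_f\setminus A_f: x\ \text{is not}\ (A,f)\text{-rational}\},
\]
so in particular $\dot X^B_f=X_f\setminus(\dot X^S_f\setminus A_f)$; the same descriptions hold with $A'$ in place of $A$. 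Since all of $f$'s contracts lie on one side, SSS makes $f$'s rejection function monotone on $2^{X_f}$; applying this to the (individually rejected) contracts of $\dot X^S_f\setminus A_f$ and then IRC yields $C^f(\dot X^S_f)=C^f(A_f)=A_f$, and likewise $C^f(\dot X'^S_f)=A'_f$.

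For ``$\Leftarrow$'' I would use $\dot X^S_f\supseteq\dot X'^S_f\supseteq A'_f$ to get $A_f\subseteq A_f\cup A'_f\subseteq\dot X^S_f$, and then IRC applied to $C^f(\dot X^S_f)=A_f$ gives $C^f(A_f\cup A'_f)=A_f$. For ``$\Rightarrow$'' assume $C^f(A_f\cup A'_f)=A_f$. If $x\in A'_f\setminus A_f$, then $A_f\subseteq A_f\cup\{x\}\subseteq A_f\cup A'_f$, so IRC gives $C^f(A_f\cup\{x\})=A_f$, i.e.\ $x$ is not $(A,f)$-rational, so $x\in\dot X^S_f$; hence $A'_f\subseteq\dot X^S_f$. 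If instead $x\in\dot X'^S_f\setminus A'_f$, so $x\notin C^f(A'_f\cup\{x\})$, then monotonicity of $f$'s rejection function gives $x\notin C^f(A_f\cup A'_f\cup\{x\})$, whence $C^f(A_f\cup A'_f\cup\{x\})\subseteq A_f\cup A'_f$; two applications of IRC then give $C^f(A_f\cup A'_f\cup\{x\})=A_f$ and $C^f(A_f\cup\{x\})=A_f$, so $x\notin A_f$ and $x$ is not $(A,f)$-rational, i.e.\ $x\in\dot X^S_f\setminus A_f$. Thus $\dot X'^S_f\subseteq\dot X^S_f$ and $\dot X'^S_f\setminus A'_f\subseteq\dot X^S_f\setminus A_f$, and combining the latter with $\dot X^B_f=X_f\setminus(\dot X^S_f\setminus A_f)$ (and its primed analogue) gives $\dot X^B_f\subseteq\dot X'^B_f$.

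The assertion for terminal buyers follows by the mirror-image argument, applied to the dual canonical pair --- the one obtained by exchanging the roles of buyers and sellers and correspondingly reversing the orientation of the trails in the construction. I expect the ``$\Rightarrow$'' direction to be the delicate part: although it is nothing but repeated use of IRC and of the monotonicity of $f$'s rejection map, each IRC step requires verifying the inclusions $C^f(Y)\subseteq Z\subseteq Y$, and one has to keep careful track of the $A_f$- versus $A'_f$-offsets in order to recover the second inclusion $\dot X^B_f\subseteq\dot X'^B_f$ from the containment of the ``non-rational'' parts. The genuinely conceptual point, however, is the explicit description of the canonical pair on a terminal seller's contracts, and that rests entirely on the trivial graph-theoretic remark that such trails are singletons.
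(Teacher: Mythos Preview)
Your proof is correct and follows essentially the same approach as the paper: both directions rest on the explicit description of the canonical pair at a terminal seller (where the defining trails collapse to singletons), and the $\Leftarrow$ direction is identical. The only difference is in $\Rightarrow$: the paper invokes the transitivity of $\succeq^S$ from Lemma~\ref{partialorder} (which in turn rests on Lemma~\ref{PI}) to pass from $x\notin C^f(A'_f\cup\{x\})$ to $x\notin C^f(A_f\cup\{x\})$, whereas you unroll this step by hand using monotonicity of rejection plus two IRC applications, and you also spell out the case $x\in A'_f\setminus A_f$ separately --- this is a slightly more elementary and self-contained presentation of the same argument, not a different route.
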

\begin{proof}[Proof of Lemma \ref{terminalorder}]
If $f$ is a terminal seller, $C^f(\dot X^S)=A_f$ and
$C^f(\dot X'^S)=A'_f$. Suppose that $\dot X^S_{\mathcal S} \supseteq \dot X'^S_{\mathcal S}$.  By IRC, $A_f \subseteq A_f\cup A'_f \subseteq \dot X^S_f$ implies that
$C^f(A_f\cup A'_f)=A_f$.

For the opposite direction, take a contract $x \in  X_f$ such that $x \notin
C^f(A'_f \cup x)$.  We use Lemma \ref{partialorder}, $A \succeq^S A'
\succeq^S x$, therefore $A \succeq^S x$, so $x \notin C^f(A_f \cup
 \{x\})$. When we define the stable pairs for $A$ and $A'$, if $x \in C^f(A'_f
\cup \{x\})$ then $x\in \dot X^B$, if $x \notin C^f(A'_f \cup \{x\})$ then $x\in
\dot X^S$. From the previous observation we can see that $\dot X^S_{\mathcal S} \supseteq
\dot X'^S_{\mathcal S}$ and $\dot X^B_{\mathcal S} \subseteq \dot X'^B_{\mathcal S}$.
The proof for terminal buyers is analogous.
\end{proof}
}

\begin{proof}[Proof of Proposition \ref{weaker-lattice}]
In the proof of Theorem \ref{main} we have seen that any fixed point
$(\dot X^B,\dot X^S)$ of isotone mapping $\Phi$ on lattice $L$ determines a trail-stable
outcome $A^X$. Moreover, each trail-stable outcome $A$ corresponds to at least
one fixed point $(\dot X^B,\dot X^S)$ of $\Phi$. From Theorem \ref{Knaster-Tarski},
it follows that fixed points of $\Phi$ form a lattice, hence there is a
$\sqsubseteq$-minimal fixed point $(\dot Y^B,\dot Y^S)$ and a $\sqsubseteq$-maximal one
$(\dot Z^B,\dot Z^S)$. We show that trail-stable outcome $A^Y$ is seller-optimal and $A^Z$
is buyer-optimal.
So assume that $A=A^X$ is a  trail-stable outcome. As $(\dot Y^B,\dot Y^S)\sqsubseteq
(\dot X^B,\dot X^S)\sqsubseteq (\dot Z^B,\dot Z^S)$, we have $\dot Y^B\subseteq \dot X^B\subseteq \dot Z^B$ and
$\dot Y^S\supseteq \dot X^S\supseteq \dot Z^S$. Lemma \ref{terminalorder} implies that
$C^f(A_f\cup A_f^Y)=A_f^Y$ and $C^f(A_f\cup A_f^Z)=A_f$ for any terminal
seller $f$ and $C_g(A_g\cup A_g^Y)=A_g$ and $C_g(A_g\cup A_g^Z)=A_g^Z$ for
any terminal buyer $g$. So, by definition $A$ is seller-superior to $A^Y$
and $A^Z$ is seller-superior to $A$.
\end{proof}

\begin{proof}[Proof of Proposition \ref{fully-stable-lattice}]
In the proof of Theorem \ref{main} we have seen that $A$ is trail-stable if and only if there is canonical trail-stable pair $(\dot X^B,\dot X^S)$ such that $(\dot X^B,\dot X^S)$ is a fixed point of
isotone mapping $\Phi$ and $A=\dot X^B\cap \dot X^S$. Moreover, \ch{if the choice functions satisfy LAD/LAS, then fixed points of
$\Phi$ form a sublattice $L$ of $(2^X \times 2^X, \widetilde \cup, \widetilde \cap)$ by Theorem \ref{thm:sublattice}. 
From Theorem \ref{terminallattice}, the projection of the above lattice to the terminals, $L_\mathcal{T}$ is also a lattice under $\sqsubseteq$
 and from Lemma \ref{terminalorder} this partial order coincides with $\preceq^S $.
Therefore, the trail-stable outcomes form a lattice under terminal-superiority. }
\end{proof}

\subsection{Proofs of Propositions \ref{rural} and \ref{Strategyproof}}
\begin{proof}[Proof of Proposition \ref{rural}]
Follow the proof of Theorem 8 in \citet{HaKo:12} word-for-word, only replacing ``stable'' with ``trail-stable''.
\end{proof}
\begin{proof}[Proof of Proposition \ref{Strategyproof}]
Follow the proof of Theorem 1 in \citet{HaKo:09} (which was pointed out by \citet{HaKo:12} for stable outcomes in supply chains).
\end{proof}

\subsection{Proofs of Propositions \ref{ostrovsky} and \ref{vacancy}}

Our proof follows \citet{Ostr:08}. \ch{First we investigate the restabilized outcome from $A$, which we play part in the proofs of both  Propositions \ref{ostrovsky} and \ref{vacancy}.
Let $A$ be an arbitrary trail-stable outcome in the original network, with a corresponding canonical trail-stable pair  $(\dot X^B,\dot  X^S)$.}
%
%In the original network, the $\sqsubseteq$-maximal fixed point of $\Phi$  is $(\dot X^B,\dot X^S)$ and  $\dot X^B \cap \dot X^S= A_{max}$.
After the new terminal seller $f'$ arrives, let $X$ be the set of all contracts in the new network, and let us define $(X^{*B},X^{*S})=(\dot X^B,\dot X^S\cup X_{f'})$. %We consider it as a contract-set-pair in the new network.
 In the following, we will use $\Phi $
%%=$\Phi(Y,Z) = ( X\setminus R_{S}(Z|Y), X\setminus R_{B}(Y|Z))$$ 
according to the choice fuctions on the new network, \ch{so $(\dot X^B,\dot X^S)$   does not need to be a fixed point of $\Phi$ anymore.}

%Let $(Z^{+B},Z^{+S})=(Z^B \cup W ,Z^S\cup X_{f'})$, where $W=\{x=f'f \in X_{f'}: x \in C^f(x \cup A_{max})\}$ i.e. the set of contracts where $f'$ is the seller, and the buyer of $x$ would like to accept $x$ alongside $A_{max}$, with possibly dropping some of its old contracts. 

 Since $X_{f'}\cap X^{*B} =\emptyset$,  
for every firm $f\neq f'$, 
$R_S^{f}(X^{*S}|X^{*B})=R_S^{f}(\dot X^{S}|\dot X^{B}) $ and 
$R_B^{f}(X^{*B}|X^{*S})=R_B^{f}(\dot X^{B}|\dot X^{S})$. 
For example, if $f$ has a conctracts with $f'$, contract $x=f'f$ was not offered for firm $f$ in $X^{*B}$ so it does not get rejected. 

For firm $f'$, $R_S^{f'}(X^{*S}|X^{*B})=X_{f'} \setminus C_S^{f'}(X_{f'} )$ and 
$R_B^{f'}(X^{*B}|X^{*S})=\emptyset$. 

Therefore $\Phi(X^{*B},X^{*S})=(\dot X^B \cup C^{f'}(X_{f'}) ,\dot X^S\cup X_{f'})$. 

So $(X^{*B},X^{*S})\sqsubseteq \Phi(X^{*B},X^{*S})$, and $\Phi$ is $\sqsubseteq$-isotone, so $ \Phi(X^{*B},X^{*S}) \sqsubseteq \Phi( \Phi(X^{*B},X^{*S}))$ and so on. The lattice of all possible subset-pairs is finite, so there is a $k$ such that $\Phi^k(X^{*B},X^{*S})=(\hat X^B, \hat X^S)$ is a fixed point. 
So $(X^{*B},X^{*S})\sqsubseteq \Phi(X^{*B},X^{*S}) \sqsubseteq\Phi^k(X^{*B},X^{*S})=(\hat X^B, \hat X^S)$. %%\sqsubseteq (X'^{B},X'^{S})$. 
 Outcome $\hat A=\hat X^{B}\cap \hat X^{S}$ is trail-stable, and this is what we call the \textit{restabilized outcome} from $A$.

\begin{proof}[Proof of Proposition \ref{ostrovsky}]
If $f'$ is a terminal seller, and we start from outcome $A_{max}$ and  the $\sqsubseteq$-maximal pair  $(\dot Z^B, \dot  Z^S)$. 
%Let  $(Z^{*B},Z^{*S})=(Z^B,Z^S\cup X_{f'})$. 
Using the previous method, 
outcome $\hat A=\hat Z^{B}\cap \hat Z^{S}$ is the restabilized outcome from $A$. 
In the new network there exists a  $\sqsubseteq$-maximal fixed point of $\Phi$, namely $(Z'^{B},Z'^{S})$, therefore 
%$(Z^{*B},Z^{*S})\sqsubseteq \Phi(Z^{*B},Z^{*S}) \sqsubseteq
%\Phi^k(Z^{*B},Z^{*S}) \sqsubseteq (Z'^{B},Z'^{S})$. 
$(\dot Z^B,\dot Z^S\cup X_{f'})=(Z^{*B},Z^{*S})\sqsubseteq (\hat Z^B, \hat Z^S)\sqsubseteq (Z'^{B},Z'^{S})$. 
The trail-stable outcome corresponding to the maximal fixed point is $A'_{max}=Z'^{B}\cap Z'^{S}$. 
We have to show that $ A'_{max} $ is better for terminal buyers and worse for terminal sellers than the original $ A_{max}$.
If $f$ is a terminal buyer, since $(Z'^{B},Z'^{S})$ is fixed point of $\Phi$ and  $(\dot Z^{B},\dot Z^{S})$ was  fixed before the new agent arrived,  $C^f(Z'^{B})=A'_{f,max}$ and $C^f(Z^{*B})=A_{f,max}$ and  $Z^{*B}\subseteq Z'^{B}$ so from $C^f(Z'^{B}) \subseteq (A_{f,max} \cup A'_{f,max}) \subseteq Z'^{B}$ by IRC we obtain $C^f(A_{f,max} \cup A'_{f,max})=A'_{f,max}$ so $ A'_{f,max}$ is preferred by terminal buyers. 

Similarly, if $f$ is a terminal seller outside $f'$, %since $(Z'^{B},Z'^{S})$ is fixed point and  $(Z'^{B},Z'^{S})$ was  fixed before,
  $C^f(Z'^{S})=A'_{f,max}$ and $C^f(Z^{*S})=A_{f,max}$ and  $Z'^{S}\subseteq Z^{*S}$ so from $C^f(Z^{*S}) \subseteq (A_{f,max} \cup A'_{f,max}) \subseteq Z^{*S}$ by IRC we obtain $C^f(A_{f,max} \cup A'_{f,max})=A_{f,max}$ so $ A_{f,max}$ is preferred by terminal buyers. If $f' $ is a terminal buyer then we can use the same proof with reversing the roles of buyers and sellers.  \end{proof}
\begin{proof}[Proof of Proposition \ref{vacancy}]
If $f'$ is a terminal seller, and 
$A$ is any trail-stable outcome in the original network, with canonical trail-stable pair  $(\dot X^B,\dot  X^S)$, then 
$(X^{*B},X^{*S})=(\dot X^B,\dot X^S\cup X_{f'}) \sqsubseteq (\hat X^B, \hat X^S)$. The restabilized outcome is $\hat A=\hat X^{B}\cap \hat X^{S}$, and similarly to the proof of  Proposition \ref{ostrovsky} one can show that initial producers
weakly prefer $A$ to $ \hat A$ and all end consumers  (other than $f'$) prefer $\hat A$ to $A$. If $f'$ is a terminal buyer,  preferences are the opposite. \end{proof}

\subsection{Proofs of Propositions \ref{prop:stable-is-trail-stable}, \ref{path-path}, \ref{fully-vs-trail}, and \ref{group-vs-trail}}

The following lemma shows that if a locally blocking trail intersects an agent several times, but he doesn't want to pick every contract in the locally blocking trail, then the agent will still select any of his upstream (downstream) contracts alongside some another one of his downstream (upstream) contracts in the locally blocking trail.

\begin{lemma}\label{shortcut}
Suppose that choice functions satisfy full substitutability and IRC.  Moreover, consider a set of contracts $Y \subset X$ and a set of contracts $\{x_1,x_2,\ldots,x_k,z_1,z_2,\ldots,z_k\}$ for agent $f$.
\begin{enumerate}
\item Assume that $Y$ is acceptable and
$x_1,x_2,\ldots,x_k\in X^B_f$ and $z_1,z_2,\ldots,z_k\in X^S_f$ such that
% none of these contracts are $(Y,f)$-acceptable. Assume further that
$\{x_i,z_i\}$ is a $(Y,f)$-essential pair for any $1\le i\le k$, but
$\{x_1,x_2,\ldots,x_k,z_1,z_2,\ldots,z_k\}$ is not $(Y,f)$-acceptable. Then
$\{x_i,z_j\}$ is a $(Y,f)$-essential pair for some $i\ne j$.
\item Now let's remove contract $x_1$. Assume that $Y$ is acceptable and
$x_2,\ldots,x_k\in X^B_f$ and $z_1,z_2,\ldots,z_k\in X^S_f$ such that $z_1$ is $(Y,f)$-acceptable, 
% none of these contracts are $(Y,f)$-acceptable. Assume further that
$\{x_i,z_i\}$ is a $(Y,f)$-essential pair for any $2\le i\le k$, but
$\{x_2,\ldots,x_k,z_1,z_2,\ldots,z_k\}$ is not $(Y,f)$-acceptable. Then
$\{x_i,z_j\}$ is a $(Y,f)$-essential pair for some $i\ne j$.
\item Now let's remove contracts $x_1$ and $z_k$. Assume that $Y$ is acceptable and
$x_2,\ldots,x_k\in X^B_f$ and $z_1,z_2,\ldots,z_{k-1}\in X^S_f$ such that $z_1$ and $x_k$ are $(Y,f)$-acceptable, 
% none of these contracts are $(Y,f)$-acceptable. Assume further that
$\{x_i,z_i\}$ is a $(Y,f)$-essential pair for any $2\le i\le k-1$, but
$\{x_2,\ldots,x_k,z_1,z_2,\ldots,z_{k-1}\}$ is not $(Y,f)$-acceptable. Then
$\{x_i,z_j\}\neq\{x_k,z_1\}$ is a $(Y,f)$-essential pair for some $i\ne j$.
\end{enumerate}
%The above statement remains true if one or both of the contracts $x_1$ and $z_k$ are
%\textit{void}. \ch{When we say that $x_1$ is void, we mean that:
%\begin{itemize}
%\item $x_1$ is empty, so
%\item the trail starts with $z_1$, and
%\item instead of $(Y,f)$-acceptability of pair $\{x_1,z_1\}$ we need $(Y,f)$-acceptability of $z_1$ and $i\ne 1$ in the
%conclusion. }
%When we say that both $x_1$ and $z_k$ are void, we mean that there is $(Y,f)$-essential pair $\{x_i,z_j\}$,  $i\ne j$ such that $\{x_i,z_j\}  \neq \{x_k,z_1\}$.
%\end{itemize}
\end{lemma}

\begin{proof}[Proof of Lemma \ref{shortcut}]
We proof each statement in turn.
\begin{enumerate}
\item Suppose, for example, that $z_j \notin C^f(Y\cup \{x_1,x_2,\ldots,x_k,z_1,z_2,\ldots,z_k\})$  for some $j$.
Then from CSC, $z_j \notin C^f(Y\cup \{x_j,z_1,z_2,\ldots,z_k\})$.
But $x_j \in C^f( Y \cup \{x_j,z_j\})$ because $\{x_j,z_j\}$ is $(Y,f)$-essential by assumption so from CSC we must have that $x_j \in C^f(Y\cup \{x_j,z_1,z_2,\ldots,z_k\})$.
 Since $\{x_j\}$ is not $(Y,f)$-acceptable, there must be a $z_l \in C^f(Y\cup \{x_j,z_1,z_2,\ldots,z_k\})$ so that $\{x_j, z_l\}$ is $(Y,f)$-acceptable for some $l \neq j$.

\item In the case that $x_1$ has been removed, suppose that $z_1 \notin C^f(Y\cup \{x_2,\ldots,x_k,z_1,z_2,\ldots,z_k\})$. Then, by CSC, we must have that $z_1 \notin C^f(Y\cup \{z_1,z_2,\ldots,z_k\})$ but this cannot hold if $\{z_1\}$ is $(Y,f)$-acceptable but none of the other $\{z_j\}$ contracts are $(Y,f)$-acceptable by IRC and SSS.
%Suppose that  $C^f(Y\cup \{z_1,z_2,\ldots,z_k\})= Y$,  that would imply that $C^f(Y\cup \{z_1\})= Y$. But  $z_1$ is  $(Y,f)$-acceptable. 
%Therefore there exists a $z_j$, $j\neq 1$ such that $z_j$ is $(Y,f)$-acceptable. Contradiction. 
Therefore, it must be that either $x_j$ or $z_j$ (for $2\leq j \leq k$) is not chosen by $C^f(Y\cup \{x_2,\ldots,x_k,z_1,z_2,\ldots,z_k\})$. Following the argument in Part 1 above, there must be an $(Y,f)$-essential pair $\{x_i,z_j\}$.
\item Repeat the argument in Part 2.\qedhere
\end{enumerate} \end{proof}
Before we prove Proposition \ref{prop:stable-is-trail-stable}, we will introduce a useful concept.
\begin{definition}
A non-empty set of contracts $Q$ is a
\textit{circuit} if its elements can be arranged in some order $(x_{1},\ldots,x_{M})$ such that
$b(x_{m})=s(x_{m+1})$ holds for all $m\in \{1,\ldots,M-1\}$ and $b(x_M)=s(x_1)$ where $M=|Q|$.
\end{definition}
\begin{proof}[Proof of Proposition \ref{prop:stable-is-trail-stable}] To show that every stable outcome is trail-stable
consider, towards a contradiction, a stable outcome $A$ which is not trail-stable. Pick a locally blocking trail $T$. For every firm involved in $T$, if $T_f \nsubseteq C^f(A\cup T_f)$, then using  Lemma \ref{shortcut} there is a \ch{upstream-downstream pairs of contracts} $x_j\in T$ and $z_l\in T$ such that $j \neq l$ and $\{x_j, z_l\}$ is  $(A,f)$-acceptable. Select only these essential pairs and remove the other contracts. Note that the contracts that remain still form a locally blocking trail. Continue removing contracts by applying Lemma \ref{shortcut} to the locally blocking trail until the remaining locally blocking trail forms circuit $Z$ such that $Z_f \subseteq C^f(A\cup Z_f)$ for every firm $f$. This circuit is therefore a blocking set. Since $T\cap A= \emptyset$ and $Z\subseteq T$, $Z\cap A= \emptyset$. Therefore $A$ is not stable, a contradiction. \end{proof}

\begin{lemma}\label{ratpair} Suppose that choice functions satisfy full substitutability and IRC. If $Y$ and $Z$ are disjoint
sets of contracts and $f$ is an agent such that $Z_f$ is
$(Y,f)$-acceptable
then for any contract $z$ of $Z^B_f$ one of the
following options hold: 
\begin{enumerate}
\item $\{z\}$ is $(Y,f)$-acceptable, or
\item there exists
some $z'\in Z^S_f$ such that $\{z,z'\}$ is a $(Y,f)$-acceptable
pair, or
\item there are $z_1,z_2,\ldots,z_k\in Z_f^S$ such that both
   $\{z,z_1,z_2,\ldots ,z_k\}$ and $\{z_i\}$ (for $1\le i\le k$) are
  $(Y,f)$-acceptable. 
  \end{enumerate}

  For $z\in Z_f^S$ an analogous statement holds.
\end{lemma}

\begin{proof}[Proof of Lemma \ref{ratpair}]

We can suppose without loss of generality that $z\in X^B_f$.

\begin{enumerate}
\item From SSS, it follows that $z \in C^f(Y_f \cup  Z_f^S\cup \{z\})$. \ch{Assume that $C^f(Y_f  \cup  Z_f^S \cup \{z\}) \cap Z^S_f  = \emptyset$. Therefore, we have
 $C^f(Y_f  \cup  Z_f^S\cup \{z\}) \subseteq (Y_f  \cup \{z\})
\subseteq (Y_f   \cup  Z_f^S\cup \{z\})$, so from IRC
 $z \in C^f(Y_f \cup \{z\})$, so $\{z\}$ is $(Y,f)$-acceptable.}
\item If $\{z\}$ is not $(Y,f)$-acceptable then there are some  contracts $\{z_1, z_2 \dots z_k\}= C^f(Y_f  \cup  Z_f^S \cup \{z\}) \cap  Z_f^S$.  If there exists an $z_i$ such that is $\{z_i\}$ is not $(Y,f)$-acceptable, then using SSS  again, we have $z_i \in C^f(Y_f \cup  \{z, z_i\}) $.
Suppose $z \notin C^f(Y_f \cup   \{z, z_i\})$, then $C^f(Y_f \cup  \{z, z_i\}) \subseteq (Y_f \cup \{z_i\}  )$, and from IRC we have $ C^f(Y_f \cup \{z,z_i\})= C^f(Y_f \cup \{z_i\} ) $. But since $\{z_i\}$ is not $(Y,f)$-acceptable this is impossible,  therefore  $\{z, z_i\} \subseteq  C^f(Y_f \cup   \{z, z_i\})$, we achieved a $(Y,f)$-essential pair. 

\item If all of $\{z_1, z_2 \dots z_k\} $ are  $(Y,f)$-acceptable.\qedhere
\end{enumerate}
\end{proof}

A consequence of Lemma
\ref{ratpair} is that trail stability is stronger than
weak trail stability under full substitutability.

\begin{proof}[Proof of Proposition \ref{path-path}]
\ch{
Consider a trail-stable outcome $A$.
Suppose that $A$ is not weakly trail-stable, i.e. there exists a sequentially blocking trail $T=\{x_1, x_2 \dots x_M\}$ for it. Without loss of generality, we may assume that (b)ii holds in Definition \ref{trailstable}. The other case when (b)i holds can be proved analogously.

We are going to find indices $1 \le i_1 < i_2 < i_3 \dots i_{l} \le k$ such that
\begin{itemize}
\item    $\{x_{i_1}\}$ is $(A,s(i_1))$-acceptable, and
\item $b(x_{i_{m-1}})=s(x_{i_{m}})=f_m$ and $\{x_{i_{m-1}}, x_{i_{m}}\}$ is a $(A,f_m)$-essential pair   for all $1<m\le l$, and
\item  $\{x_{i_{l}}\}$ is  $(A,b(i_l))$-acceptable. 
\end{itemize}
So this subset of the trail forms a locally blocking trail $T'$.

In the sequentially blocking trail $T$, choose the last contract $x_i  \in T$ such that $\{x_i\}$ is $(A,s(x_i))$-acceptable. There is at least one contract like this, since $\{x_1\}$ is  $(A,s(x_1))$-acceptable by definition. Let $i_1=i$.

Suppose we have already found $i_1 \dots i_{m}$  that satisfies our requirements. 
If $\{x_{i_m}\}$ is  $(A,b(x_{i_m}))$-acceptable, we end the trail there, and let $l=m$.
Otherwise,  from the definition of sequentially blocking trails, for $f_{m+1}=b(x_{i_m})$, the ending subsequence
 $T_{f}^{\ge m}=\{x_{m},...,x_{M}\}\cap T_{f}$ is $(A,f_{m+1})$-acceptable. 
Using Lemma \ref{ratpair}, there is a contract $x_{i_{m+1}} \in T_{f}^{\ge m} \cap X_f^S$ such that $i_{m+1}> i_m$ and  $\{x_{i_{m-1}}, x_{i_{m}}\}$ is a $(A,f_m)$-essential pair.  

This way, we constructed a locally blocking trail, therefore $A$ is not trail-stable.
 \end{proof}}

%\begin{proof}[Proof of Theorem \ref{nongroup-locallyblocking}] 
%\end{proof}

\newpage
\section{When do solution concepts coincide?}\label{sec:coincide}
As we have seen, stable, trail-stable, and weakly trail-stable outcomes typically do not coincide in general trading networks. In this section, we introduce two sufficient conditions on agents' preferences that ensure that these solution concepts coincide.

\subsection{Trail stability and weak trail stability}
Our first restriction generalizes ``flow-based'' choice functions \citep{Flei:14,FlJaScTe:17}.

\begin{definition} 
Choice functions of $f\in F$ are \textit{separable} if for any  $A,W\subseteq X$ and $y\in X^B_f \setminus A$ and $z\in X^S_f \setminus A$, whenever $A$ is $(W,f)$-acceptable, and $\{y,z\}$ is a $(W,f)$-essential pair, then $A\cup\{y,z\}$ is $(W,f)$-acceptable.
 
\end{definition}

Separable choice functions impose a kind of independence on choices of pairs of upstream and downstream contracts. It says that whenever the firm chooses $A$ alongside some set $W$ and $\{y,z\}$ alongside $W$ (but $y$ and $z$ would not be chosen separately alongside $W$ since  $\{y,z\}$ is a $(W,f)$-essential pair), then it would choose $A\cup\{y,z\}$ alongside $W$. Suppose signing $A$ and $\{y,z\}$ are decisions made by separate units of the firm. Separable choice functions say that it can delegate the joint input-output decisions to the units because its overall choices do not require any coordination between the units. One natural example of separable choice functions is the following: suppose each firms totally orders individual upstream contracts and individual downstream contracts. Whenever a firm is offered $k$ downstream and $l$ upstream contracts, it chooses the $z$ best upstream and the $z$ best downstream contracts where $z=\min(k,l)$.

We now pin down the role of separability for weakly trail-stable and trail-stable outcomes.
\begin{proposition}\label{fully-vs-trail} 
Suppose that choice functions satisfy full substitutability, separability, and IRC. Then an outcome is trail-stable if and only if it is weakly trail-stable.
\end{proposition}
\begin{proof}[Proof of Proposition \ref{fully-vs-trail}]
Proposition \ref{path-path} implies that if outcome $A$ is trail-stable
then $A$ is also weakly trail-stable. So assume that outcome $A$ is
weakly trail-stable. If $A$ is not trail-stable then there is a locally
blocking trail $T$ to $A$. The separability property of the choice functions imply that $T$ is a sequentially blocking trail, contradicting the weak trail stability of $A$. So $A$ is trail-stable.
\end{proof}
Separability is crucial for the correspondence between trail-stable and weakly trail-stable outcomes. Separability ensures that all sequentially blocking trails are locally blocking trails. Under separability all properties of trail-stable outcomes apply to weakly trail-stable outcomes.\footnote{
We also conjecture that in any trading network $X$ if choice functions of $F$ satisfy full substitutability and \textit{only} LAD/LAS then the terminal-\textit{weakly}-trail-stable contract sets form a lattice under terminal superiority, but leave this for future work.} Note that in Example \ref{ex:not-fully-weakly trail-stable}, $j$'s preferences were not separable and weak trail-stable outcome did not coincide with trail-stable outcomes.

\subsection{Trail stability and stability}

Separability is not enough to ensure that trail stable outcomes coincide with stable outcomes, since stable outcomes might not exist under full substitutability and separability (see Example \ref{trail-not-group-stable-example}). We turn to another preference restriction.

\begin{definition}
Choice functions of $f\in F$ are \textit{simple} if there exists an ``intensity'' mapping $w:X_f\to {\mathbb R}$ such that whenever $A$ is a $(W,f)$-acceptable set for some acceptable set $W$ of contracts, then for every $y \in X^B_f \cap A$ there exists  $z \in X^S_f \cap A$ such that $w(y)>w(z)$ holds.
\end{definition}

One example of choice functions which are simple are the following: if the agent is offered a set of contracts, he picks the upstream contract $y$ with the highest intensity and a downstream contract $z$ with the lowest intensity (as long as the intensity of the $y$ is greater than of $z$, otherwise he picks nothing). For example,  if the intensity mapping $w$ represents the per-unit price of the contract, then the condition says that the firm only signs a pair of contracts if the price in the downstream contract is greater than the price in the upstream contract, while picking the highest-price downstream contract and the lowest-price upstream contract.

\begin{proposition}\label{group-vs-trail}
Suppose that choice functions satisfy full substitutability, simplicity, and IRC. Then an outcome is stable if and only if it is weakly trail-stable.
\end{proposition}
Under simplicity all properties of trail-stable outcomes (including existence!) apply to stable outcomes.
\begin{proof}[Proof of Proposition \ref{group-vs-trail}]
Proposition \ref{path-path} implies that if outcome $A$ is stable then $A$ is also trail-stable. 
Assume that outcome $A$ is trail-stable, but not stable, so it has a blocking set $Z$.

Case 1: Suppose that for any $z \in Z$ contract $\{z\}$ is neither $(A, s(z))$-acceptable nor $(A, b(z))$-acceptable. Then using Lemma \ref{ratpair} we can find a circuit $Q=\{z_1, z_2, \dots z_k\} \subseteq Z$ such that
\iffalse
 $b(z_i)=s(z_{i+1})$ for every $i$ and $b(z_i)=s(z_{i+1})$ 
\fi
  $\{z_i, z_{i+1}\}$ is an $(A,b(z_i))$-essential pair for every $1 \le i \le k$ and $\{z_k, z_1\}$ is  an $(A, b(z_k))$-essential pair. 
Since every $\{z_i, z_{i+1}\}$  an $(A,b(z_i))$-acceptable set by itself, as choice functions are simple, intensity function $w$ must strictly decrease along circuit $Q$, which is impossible.\\

Case 2: Suppose that for every $z \in Z$, $\{z\}\subseteq Z$ is $A$-acceptable. Suppose that $\{z_1\}$ is $(A, s(z_1))$-acceptable. From Lemma \ref{ratpair} we can find  a trail  $\{z_2, z_3 \dots z_k\} \subseteq Z$ such that for every $z_i$, either $\{z_i, z_{i+1}\}$ is a $(A,b(z_i))$-essential pair, (therefore $w(z_i) > w(z_{i+1})$) or there are some $y_1 \dots y_l $ such that $b(y_j)=s(z_i)$ for all $1\le j\le l$ and $\{z_i, y_1 \dots y_l\}$ is $(A,b(z_i))$-acceptable. From the simplicity property there is a $y_j$ such that $w(z_i) > w(y_j)$, this $y_j$ contract will be $z_{i+1}$.
The trail terminates at the first occasion when $\{z_i\}$ is \ch{$(A,b(z_i))$-acceptable}. 

 %vitality observing property, we can find a trail  $\{z_2, z_3 \dots z_k\} \subseteq Z$ such that $w(z_i) >w(z_{i+1})$.
 Since the intensity strictly decreases, we cannot get back to a contract used earlier in the trail, so the trail must terminate. 
 \ch{Let us pick a contract $\{z_i\}$ in the trail such that it is the last one which is $(A,s(z_i))$-acceptable}, and then choose the smallest $j$ such that $j\ge i$ and $\{z_j\}$ is $(A,b(z_j))$-acceptable. 
From Lemma \ref{ratpair}, the trail from $z_i$ to $z_j$ is locally blocking, so outcome $A$ is not trail-stable. 
\end{proof}

\newpage
\section{Path stability}\label{app:path}
We can also weaken trail stability by insisting that trails only include any agent at most once so firms only have one opportunity to recontract during a deviation. A trail $T$ is a \textit{path} if all the agents $F(T)$ involved in the trail are distinct.

\begin{definition} An outcome $A\subseteq X$ is \textit{path-stable} if 
\begin{enumerate}
\item $A$ is acceptable.
\item There is no path $P=\{x_1,x_2,\ldots,x_M\}$, such that $P\cap
  A=\emptyset$  and
\begin{enumerate}
\item$\{x_{1}\}$ is $(A,f_1)$-acceptable for $f_{1}=s(x_{1})$, and
\item$\{x_{m-1}, x_{m}\}$ is $(A,f_m)$-acceptable for
$f_{m}=b(x_{m-1})=s(x_{m})$ whenever $1<m\leq M$ and
\item$\{x_{M}\}$ is $(A,f_{M+1})$-acceptable for $f_{M+1}=b(x_{M})$.
\end{enumerate}
\end{enumerate}

\end{definition}
Path-stable outcomes rule out consecutive pairwise deviations along paths. Since every path is a trail, every trail-stable outcome is path-stable. In acyclic networks every trail is also path, so path-stable, weakly trail-stable and trail-stable outcomes coincide with stable outcomes \citep{HaKo:12}. However, as the example below shows, path stability is weaker than weak trail stability (and hence weaker than trail stability) in general trading networks under full substitutability. This is intuitive because paths allows the firms to appear in the blocking set only once therefore they rule out fewer possible blocks.

\begin{example}[Path-stable outcomes are not necessarily trail-stable]
Consider agents and contracts described in Examples 1 and 2, and Figure \ref{fig:hatkom}. Agents have the following fully substitutable preferences:
\begin{align*}
\succ_m&:\{ w\} \succ_m \emptyset\\
\succ_i&: \{ x\} \succ_i \emptyset\\
\succ_k&: \{ z,y\} \succ_k \emptyset\\
\succ_j&: \{w,x,z,y\} \succ_j \{w,z\} \succ_j \{y,x\} \succ_j \{y,z\} \succ_j \emptyset
\end{align*}
The empty set is preferred to any other set of contracts.\\
Now, for outcome $\emptyset$, the trail $\{w,z,y,x\}$ is locally blocking, but there is no blocking path for $A=\emptyset$. 
Outcome $\{z,y\}$  is, however, blocked by path  $\{w,x\}$.
Therefore the trail-stable outcome is $\{w,z,y,x\}$ and the path-stable outcomes are $\emptyset$ and $\{w,z,y,x\}$.
\end{example}

\newpage
\singlespacing

\bibliographystyle{chicago}
\bibliography{bib}

\begin{thebibliography}{}

\bibitem[\protect\citeauthoryear{Acemoglu, Carvalho, Ozdaglar, and
  Tahbaz-Salehi}{Acemoglu et~al.}{2012}]{AcCaOzTa:12}
Acemoglu, D., V.~M. Carvalho, A.~Ozdaglar, and A.~Tahbaz-Salehi (2012).
\newblock The network origins of aggregate fluctuations.
\newblock {\em Econometrica\/}~{\em 80\/}(5), 1977--2016.

\bibitem[\protect\citeauthoryear{Adachi}{Adachi}{2000}]{Adac:00}
Adachi, H. (2000).
\newblock On a characterization of stable matchings.
\newblock {\em Economics Letters\/}~{\em 68\/}(1), 43--49.

\bibitem[\protect\citeauthoryear{Adachi}{Adachi}{2017}]{adachi2017stable}
Adachi, H. (2017).
\newblock Stable matchings and fixed points in trading networks: A note.
\newblock {\em Economics Letters\/}~{\em 156}, 65--67.

\bibitem[\protect\citeauthoryear{Aizerman and Malishevski}{Aizerman and
  Malishevski}{1981}]{aizerman1981general}
Aizerman, M. and A.~Malishevski (1981).
\newblock General theory of best variants choice: Some aspects.
\newblock {\em IEEE Transactions on Automatic Control\/}~{\em 26\/}(5),
  1030--1040.

\bibitem[\protect\citeauthoryear{Alkan}{Alkan}{2002}]{Alka:02}
Alkan, A. (2002).
\newblock A class of multipartner matching markets with a strong lattice
  struture.
\newblock {\em Economic Theory\/}~{\em 19\/}(4), 737--746.

\bibitem[\protect\citeauthoryear{Alkan and Gale}{Alkan and
  Gale}{2003}]{alkan2003stable}
Alkan, A. and D.~Gale (2003).
\newblock Stable schedule matching under revealed preference.
\newblock {\em Journal of Economic Theory\/}~{\em 112\/}(2), 289--306.

\bibitem[\protect\citeauthoryear{Alva}{Alva}{2015}]{Alva:15}
Alva, S. (2015).
\newblock Pairwise stability and complementarity in matching with contracts.
\newblock Technical report, University of Texas - San Antonio.

\bibitem[\protect\citeauthoryear{Alva}{Alva}{2018}]{Alva:15a}
Alva, S. (2018).
\newblock {WARP} and combinatorial choice.
\newblock {\em {Journal of Economic Theory}\/}~{\em 173}, 320--333.

\bibitem[\protect\citeauthoryear{Alva and Teytelboym}{Alva and
  Teytelboym}{2015}]{AlTe:15}
Alva, S. and A.~Teytelboym (2015).
\newblock Matching in supply chains with complementary inputs.
\newblock Technical report, In preparation.

\bibitem[\protect\citeauthoryear{Antr{\`a}s and Chor}{Antr{\`a}s and
  Chor}{2013}]{AnCh:13}
Antr{\`a}s, P. and D.~Chor (2013).
\newblock Organizing the global value chain.
\newblock {\em Econometrica\/}~{\em 81\/}(6), 2127--2204.

\bibitem[\protect\citeauthoryear{Ayg\"{u}n and S{\"o}nmez}{Ayg\"{u}n and
  S{\"o}nmez}{2013}]{AySo:12}
Ayg\"{u}n, O. and T.~S{\"o}nmez (2013).
\newblock Matching with contracts: Comment.
\newblock {\em American Economic Review\/}~{\em 103\/}(5), 2050--2051.

\bibitem[\protect\citeauthoryear{Baldwin and Klemperer}{Baldwin and
  Klemperer}{2015}]{BaKl:13}
Baldwin, E. and P.~Klemperer (2015).
\newblock {Understanding Preferences: ``Demand Types'', and the Existence of
  Equilibrium with Indivisibilities}.
\newblock Technical report, University of Oxford, Department of Economics.

\bibitem[\protect\citeauthoryear{Bando}{Bando}{2012}]{Band:12}
Bando, K. (2012).
\newblock Many-to-one matching markets with externalities among firms.
\newblock {\em Journal of Mathematical Economics\/}~{\em 48\/}(1), 14 -- 20.

\bibitem[\protect\citeauthoryear{Blair}{Blair}{1988}]{Blai:88}
Blair, C. (1988).
\newblock The lattice structure of the set of stable matchings with multiple
  partners.
\newblock {\em Mathematics of Operations Research\/}~{\em 13\/}(4), 619--628.

\bibitem[\protect\citeauthoryear{Blum, Roth, and Rothblum}{Blum
  et~al.}{1997}]{BlRoRo:97}
Blum, Y., A.~E. Roth, and U.~G. Rothblum (1997).
\newblock Vacancy chains and equilibration in senior-level labor markets.
\newblock {\em Journal of Economic Theory\/}~{\em 76\/}(2), 362--411.

\bibitem[\protect\citeauthoryear{Chakraborty, Citanna, and
  Ostrovsky}{Chakraborty et~al.}{2010}]{chakraborty2010two}
Chakraborty, A., A.~Citanna, and M.~Ostrovsky (2010).
\newblock Two-sided matching with interdependent values.
\newblock {\em Journal of Economic Theory\/}~{\em 145\/}(1), 85--105.

\bibitem[\protect\citeauthoryear{Chambers and Yenmez}{Chambers and
  Yenmez}{2017}]{chambers2017choice}
Chambers, C. and M.~B. Yenmez (2017).
\newblock Choice and matching.
\newblock {\em American Economic Journal: Microeconomics\/}~{\em 9\/}(3),
  126--47.

\bibitem[\protect\citeauthoryear{Crawford}{Crawford}{1991}]{Craw:91}
Crawford, V.~P. (1991).
\newblock Comparative statics in matching markets.
\newblock {\em Journal of Economic Theory\/}~{\em 54\/}(2), 389--400.

\bibitem[\protect\citeauthoryear{Crawford and Knoer}{Crawford and
  Knoer}{1981}]{CrKn:81}
Crawford, V.~P. and E.~M. Knoer (1981).
\newblock Job matching with heterogeneous firms and workers.
\newblock {\em Econometrica\/}~{\em 49\/}(2), 437--450.

\bibitem[\protect\citeauthoryear{Drexl}{Drexl}{2013}]{Drex:13}
Drexl, M. (2013).
\newblock Substitutes and complements in trading networks.
\newblock Technical report, Mimeo.

\bibitem[\protect\citeauthoryear{Echenique}{Echenique}{2007}]{Eche:07}
Echenique, F. (2007).
\newblock Counting combinatorial choice rules.
\newblock {\em Games and Economic Behavior\/}~{\em 58\/}(2), 231--245.

\bibitem[\protect\citeauthoryear{Echenique and Oviedo}{Echenique and
  Oviedo}{2006}]{EcOv:06}
Echenique, F. and J.~Oviedo (2006).
\newblock A theory of stability in many-to-many matching markets.
\newblock {\em Theoretical Economics\/}~{\em 1\/}(1), 233--273.

\bibitem[\protect\citeauthoryear{Echenique and Yenmez}{Echenique and
  Yenmez}{2015}]{echenique2015control}
Echenique, F. and M.~B. Yenmez (2015).
\newblock How to control controlled school choice.
\newblock {\em American Economic Review\/}~{\em 105\/}(8), 2679--2694.

\bibitem[\protect\citeauthoryear{Ehlers and Mass{\'o}}{Ehlers and
  Mass{\'o}}{2007}]{ehlers2007incomplete}
Ehlers, L. and J.~Mass{\'o} (2007).
\newblock Incomplete information and singleton cores in matching markets.
\newblock {\em Journal of Economic Theory\/}~{\em 136\/}(1), 587--600.

\bibitem[\protect\citeauthoryear{Fleiner}{Fleiner}{2003}]{Flei:03}
Fleiner, T. (2003).
\newblock A fixed-point approach to stable matchings and some applications.
\newblock {\em Mathematics of Operations Research\/}~{\em 28\/}(1), 103--126.

\bibitem[\protect\citeauthoryear{Fleiner}{Fleiner}{2009}]{Flei:09}
Fleiner, T. (2009).
\newblock On stable matchings and flows.
\newblock Technical Report TR-2009-11, Egerv{\'a}ry Research Group, Budapest.
\newblock {\tt www.cs.elte.hu/egres}.

\bibitem[\protect\citeauthoryear{Fleiner}{Fleiner}{2014}]{Flei:14}
Fleiner, T. (2014).
\newblock On stable matchings and flows.
\newblock {\em Algorithms\/}~{\em 7}, 1--14.

\bibitem[\protect\citeauthoryear{Fleiner, Jagadeesan, Jank\'{o}, and
  Teytelboym}{Fleiner et~al.}{2018}]{FlJaJaTe:17}
Fleiner, T., R.~Jagadeesan, Z.~Jank\'{o}, and A.~Teytelboym (2018).
\newblock Trading networks with frictions.
\newblock Working paper, University of Oxford.

\bibitem[\protect\citeauthoryear{Fleiner, Jank\'{o}, Schlotter, and
  Teytelboym}{Fleiner et~al.}{2018}]{FlJaScTe:17}
Fleiner, T., Z.~Jank\'{o}, I.~Schlotter, and A.~Teytelboym (2018).
\newblock Complexity of stability in trading networks.
\newblock Working paper, University of Oxford.

\bibitem[\protect\citeauthoryear{Fox}{Fox}{2010}]{Fox:10}
Fox, J.~T. (2010).
\newblock Identification in matching games.
\newblock {\em Quantitative Economics\/}~{\em 1\/}(2), 203--254.

\bibitem[\protect\citeauthoryear{Fox}{Fox}{2017}]{fox2017specifying}
Fox, J.~T. (2017, May).
\newblock Specifying a structural matching game of trading networks with
  transferable utility.
\newblock {\em American Economic Review\/}~{\em 107\/}(5), 256--60.

\bibitem[\protect\citeauthoryear{Gale and Shapley}{Gale and
  Shapley}{1962}]{GaSh:62}
Gale, D. and L.~S. Shapley (1962).
\newblock College admissions and the stability of marriage.
\newblock {\em American Mathematical Monthly\/}~{\em 69\/}(1), 9--15.

\bibitem[\protect\citeauthoryear{Gale and Sotomayor}{Gale and
  Sotomayor}{1985}]{GaSo:85a}
Gale, D. and M.~Sotomayor (1985).
\newblock Some remarks on the stable matching problem.
\newblock {\em Discrete Applied Mathematics\/}~{\em 11\/}(3), 223--232.

\bibitem[\protect\citeauthoryear{Hatfield and Kojima}{Hatfield and
  Kojima}{2009}]{HaKo:09}
Hatfield, J.~W. and F.~Kojima (2009).
\newblock Group incentive compatibility for matching with contracts.
\newblock {\em Games and Economic Behavior\/}~{\em 67}, 745--749.

\bibitem[\protect\citeauthoryear{Hatfield and Kominers}{Hatfield and
  Kominers}{2012}]{HaKo:12}
Hatfield, J.~W. and S.~D. Kominers (2012).
\newblock Matching in networks with bilateral contracts.
\newblock {\em American Economic Journal: Microeconomics\/}~{\em 4\/}(1),
  176--208.

\bibitem[\protect\citeauthoryear{Hatfield and Kominers}{Hatfield and
  Kominers}{2013}]{HaKo:13}
Hatfield, J.~W. and S.~D. Kominers (2013).
\newblock Vacancies in supply chain networks.
\newblock {\em Economics Letters\/}~{\em 119\/}(3), 354--357.

\bibitem[\protect\citeauthoryear{Hatfield and Kominers}{Hatfield and
  Kominers}{2015}]{HaKo:15}
Hatfield, J.~W. and S.~D. Kominers (2015).
\newblock Multilateral matching.
\newblock {\em Journal of Economic Theory\/}~{\em 156}, 175--206.

\bibitem[\protect\citeauthoryear{Hatfield and Kominers}{Hatfield and
  Kominers}{2017}]{HaKo:11}
Hatfield, J.~W. and S.~D. Kominers (2017).
\newblock Contract design and stability in many-to-many matching.
\newblock {\em Games and Economic Behavior\/}~{\em 101\/}(C), 78--97.

\bibitem[\protect\citeauthoryear{Hatfield, Kominers, Nichifor, Ostrovsky, and
  Westkamp}{Hatfield et~al.}{2013}]{HaKoNiOsWe:11}
Hatfield, J.~W., S.~D. Kominers, A.~Nichifor, M.~Ostrovsky, and A.~Westkamp
  (2013).
\newblock Stability and competitive equilibrium in trading networks.
\newblock {\em Journal of Political Economy\/}~{\em 121\/}(5), 966--1005.

\bibitem[\protect\citeauthoryear{Hatfield, Kominers, Nichifor, Ostrovsky, and
  Westkamp}{Hatfield et~al.}{2015}]{HaKoNiOsWe:14}
Hatfield, J.~W., S.~D. Kominers, A.~Nichifor, M.~Ostrovsky, and A.~Westkamp
  (2015, April).
\newblock Chain stability in trading networks.
\newblock Working paper, Mimeo.

\bibitem[\protect\citeauthoryear{Hatfield and Milgrom}{Hatfield and
  Milgrom}{2005}]{HaMi:05}
Hatfield, J.~W. and P.~Milgrom (2005).
\newblock Matching with contracts.
\newblock {\em American Economic Review\/}~{\em 95\/}(4), 913--935.

\bibitem[\protect\citeauthoryear{Jagadeesan}{Jagadeesan}{2017}]{jagadeesan2017complementary}
Jagadeesan, R. (2017).
\newblock Complementary inputs and the existence of stable outcomes in large
  trading networks.
\newblock In {\em Proceedings of the 2017 ACM Conference on Economics and
  Computation}, pp.\  265--265. ACM.

\bibitem[\protect\citeauthoryear{Kelso and Crawford}{Kelso and
  Crawford}{1982}]{KeCr:82}
Kelso, A.~S. and V.~P. Crawford (1982).
\newblock Job matching, coalition formation, and gross substitutes.
\newblock {\em Econometrica\/}~{\em 50\/}(6), 1483--1504.

\bibitem[\protect\citeauthoryear{Klaus and Walzl}{Klaus and
  Walzl}{2009}]{KlWa:09}
Klaus, B. and M.~Walzl (2009).
\newblock Stable many-to-many matchings with contracts.
\newblock {\em Journal of Mathematical Economics\/}~{\em 45\/}(7--8), 422--434.

\bibitem[\protect\citeauthoryear{Milgrom and Roberts}{Milgrom and
  Roberts}{1990}]{MiRo:90}
Milgrom, P. and J.~Roberts (1990).
\newblock The economics of modern manufacturing: Technology, strategy, and
  organization.
\newblock {\em American Economic Review\/}~{\em 80\/}(3), 511--528.

\bibitem[\protect\citeauthoryear{Morstyn, Teytelboym, and McCulloch}{Morstyn
  et~al.}{2018}]{morstyn2018bilateral}
Morstyn, T., A.~Teytelboym, and M.~D. McCulloch (2018, forthcoming).
\newblock Bilateral contract networks for peer-to-peer energy trading.
\newblock {\em IEEE Transactions on Smart Grid\/}.

\bibitem[\protect\citeauthoryear{Ostrovsky}{Ostrovsky}{2008}]{Ostr:08}
Ostrovsky, M. (2008).
\newblock Stability in supply chain networks.
\newblock {\em American Economic Review\/}~{\em 98\/}(3), 897--923.

\bibitem[\protect\citeauthoryear{Pycia}{Pycia}{2012}]{Pyci:12}
Pycia, M. (2012).
\newblock Stability and preference alignment in matching and coalition
  formation.
\newblock {\em Econometrica\/}~{\em 80\/}(1), 323--362.

\bibitem[\protect\citeauthoryear{Pycia and Yenmez}{Pycia and
  Yenmez}{2015}]{YePy:15}
Pycia, M. and M.~B. Yenmez (2015, January).
\newblock Matching with externalities.
\newblock Technical report, UCLA.

\bibitem[\protect\citeauthoryear{Rostek and Yoder}{Rostek and
  Yoder}{2017}]{rostek2017matching}
Rostek, M.~J. and N.~Yoder (2017).
\newblock Matching with multilateral contracts.

\bibitem[\protect\citeauthoryear{Roth}{Roth}{1984}]{Roth:84}
Roth, A.~E. (1984).
\newblock Stability and polarization of interests in job matching.
\newblock {\em Econometrica\/}~{\em 52\/}(1), 47--58.

\bibitem[\protect\citeauthoryear{Roth}{Roth}{1985}]{Roth:85}
Roth, A.~E. (1985).
\newblock Conflict and coincidence of interest in job matching: some new
  results and open questions.
\newblock {\em Mathematics of Operations Research\/}~{\em 10\/}(3), 379--389.

\bibitem[\protect\citeauthoryear{Roth}{Roth}{1986}]{Roth:86}
Roth, A.~E. (1986).
\newblock On the allocation of residents to rural hospitals: A general property
  of two-sided matching markets.
\newblock {\em Econometrica\/}~{\em 54}, 425--427.

\bibitem[\protect\citeauthoryear{Roth}{Roth}{1989}]{roth1989two}
Roth, A.~E. (1989).
\newblock Two-sided matching with incomplete information about others'
  preferences.
\newblock {\em Games and Economic Behavior\/}~{\em 1\/}(2), 191--209.

\bibitem[\protect\citeauthoryear{Roth}{Roth}{1991}]{Roth:91}
Roth, A.~E. (1991).
\newblock A natural experiment in the organization of entry-level labor
  markets: Regional markets for new physicians and surgeons in the {United
  Kingdom}.
\newblock {\em American Economic Review\/}~{\em 81\/}(3), 415--440.

\bibitem[\protect\citeauthoryear{Roth and Sotomayor}{Roth and
  Sotomayor}{1990}]{RoSo:90}
Roth, A.~E. and M.~Sotomayor (1990).
\newblock {\em Two-sided Matching: A Study in Game-Theoretic Modeling and
  Analysis}.
\newblock Cambridge University Press.

\bibitem[\protect\citeauthoryear{Sasaki and Toda}{Sasaki and
  Toda}{1996}]{SaTo:96}
Sasaki, H. and M.~Toda (1996).
\newblock Two-sided matching problems with externalities.
\newblock {\em Journal of Economic Theory\/}~{\em 70\/}(1), 93--108.

\bibitem[\protect\citeauthoryear{Sotomayor}{Sotomayor}{1999}]{Soto:99}
Sotomayor, M. (1999).
\newblock Three remarks on the many-to-many stable matching problem.
\newblock {\em Mathematical Social Sciences\/}~{\em 38}, 55--70.

\bibitem[\protect\citeauthoryear{Sotomayor}{Sotomayor}{2004}]{Soto:04}
Sotomayor, M. (2004).
\newblock Implementation in the many-to-many matching market.
\newblock {\em Games and Economic Behavior\/}~{\em 46}, 199--212.

\bibitem[\protect\citeauthoryear{Tarski}{Tarski}{1955}]{Tars:55}
Tarski, A. (1955).
\newblock A lattice-theoretical fixpoint theorem and its applications.
\newblock {\em Pacific Journal of Mathematics\/}~{\em 5\/}(2), 285--309.

\bibitem[\protect\citeauthoryear{Teytelboym}{Teytelboym}{2014}]{Teyt:14}
Teytelboym, A. (2014).
\newblock Gross substitutes and complements: a simple generalization.
\newblock {\em Economics Letters\/}~{\em 123\/}(2), 135--138.

\bibitem[\protect\citeauthoryear{Westkamp}{Westkamp}{2010}]{West:10}
Westkamp, A. (2010).
\newblock Market structure and matching with contracts.
\newblock {\em Journal of Economic Theory\/}~{\em 145\/}(5), 1724--1738.

\end{thebibliography}

\end{document}